\documentclass[sigconf,nonacm]{acmart}

\usepackage{amsmath}
\usepackage{amsthm}
\usepackage{mathtools}
\usepackage{enumitem}
\usepackage{algorithm}
\usepackage{algpseudocode}
\usepackage{thmtools}
\usepackage{thm-restate}

\usepackage{xcolor}
\usepackage[capitalise, noabbrev]{cleveref}
\usepackage{color-edits}
\usepackage{graphicx,tikz,booktabs}
\usetikzlibrary{arrows.meta}
\usepackage{thmtools}

\declaretheorem[style=plain, name=Theorem]{theorem}
\declaretheorem[style=plain, name=Lemma, numberwithin=section]{lemma}
\declaretheorem[style=plain, name=Proposition, numberwithin=section]{proposition}
\declaretheorem[style=plain, name=Corollary, numberwithin=section]{corollary}
\declaretheorem[style=definition, name=Definition, numberwithin=section]{definition}
\declaretheorem[style=definition, name=Example, numberwithin=section]{example}
\declaretheorem[style=remark, name=Remark, numberwithin=section]{remark}

\newcommand{\game}{\mathcal{G}}                 %
\newcommand{\players}{\Sigma}                   %
\newcommand{\team}{\Pi}                         %
\newcommand{\opp}{\mathsf{O}}%
\newcommand{\block}{\mathsf{scan}} 
\newcommand{\states}{S}                         %
\newcommand{\target}{T}                         %
\newcommand{\sinit}{s_0}                        %
\newcommand{\obj}{\mathcal{W}}                  %

\newcommand{\st}{\operatorname{st}}             %
\newcommand{\Dist}{\Delta}                      %
\newcommand{\Supp}{\mathsf{Supp}}               %
\newcommand{\val}{\mathrm{val}}                 %
\DeclarePairedDelimiter{\abs}{\lvert}{\rvert}   %

\newcommand{\stratp}[1][p]{\sigma_{#1}}                 %
\newcommand{\stratteam}{\sigma_{\team}}                 %
\newcommand{\stratrest}{\tau_{\opp}}  %
\newcommand{\stratopp}{\tau}                            %
\newcommand{\Strats}[1]{\Lambda_{#1}}                   %

\newcommand{\Prb}[2]{\mathbb{P}_{#1}^{#2}}      %
\newcommand{\Reach}[1]{\Diamond #1}             %

\newcommand{\tuple}[1]{\langle #1 \rangle}

\newcommand{\imdp}{\mathcal{M}}                 %
\newcommand{\mergedmdp}{\widehat{\imdp}}        %
\newcommand{\Trap}{\mathsf{Trap}}               %
\newcommand{\rk}{\rho}                         %
\newcommand{\tsink}{t^{*}}                      %
\newcommand{\fsink}{f^{*}}                      %
\newcommand{\Forests}{\mathcal{F}}              %
\newcommand{\wt}{w}                             %

\usepackage{complexity}
\renewcommand{\K}{\mathbb{K}}                     %
\usepackage{tikz}
\usetikzlibrary{arrows.meta}
\newcommand{\rankbox}[1]{{\setlength{\fboxsep}{1.1pt}\fcolorbox{gray!60}{gray!12}{\scriptsize$#1$}}} %

\newcommand{\sgoal}{s_{\mathsf{goal}}}
\newcommand{\sfail}{s_{\mathsf{fail}}}
\usepackage[framemethod=TikZ]{mdframed}

\title{Limit-sure Winning in Team Games without Shared Randomness}

\author{L\'eonard Brice}
\affiliation{
  \institution{Institute of Science and Technology Austria}
  \city{Klosterneuburg}
  \country{Austria}}
\email{leonard.brice@ista.ac.at}

\author{Thomas A. Henzinger}
\affiliation{
  \institution{Institute of Science and Technology Austria}
  \city{Klosterneuburg}
  \country{Austria}}
\email{tah@ista.ac.at}

\author{Alipasha Montaseri}
\affiliation{
  \institution{\'Ecole Polytechnique F\'ed\'erale de Lausanne}
  \city{Lausanne}
  \country{Switzerland}}
\email{alipasha.montaseri@epfl.ch}

\author{Ali Shafiee}
\affiliation{
  \institution{Institute of Science and Technology Austria}
  \city{Klosterneuburg}
  \country{Austria}}
\email{ali.shafiee@ista.ac.at}

\author{K. S. Thejaswini}
\affiliation{
  \institution{Universit\'e libre de Bruxelles}
  \city{Brussels}
  \country{Belgium}}
\email{thejaswini.raghavan@ulb.be}

\begin{abstract}
We study concurrent stochastic games in which a team of agents plays against an adversarial environment. If the agents could access a shared source of randomness, then these games reduce to two-player games.
Despite being common in the literature on coalition games, this assumption does not accurately model real-world systems.
In this paper, we consider the setting where the agents only have access to individual sources of randomness, and must satisfy a given reachability or safety objective.
The limit-sure problem asks whether the team can ensure that the objective is satisfied with probability at least $1-\lambda$ for every $\lambda>0$.
For reachability objectives, we show that this problem is $\NP$-complete.
For safety objectives, we show that limit-sure winning coincides with almost-sure winning, which is known to lie in $\P$.
As a corollary of these results, we conclude that model checking of the qualitative fragment of the logic Individually Randomised ATL---introduced in the recent work of Brice et al.---is $\P^\NP$-complete.
\end{abstract}

\keywords{Multi-player games, Concurrent games, Shared randomness}

\begin{document}

\maketitle

\section{Introduction}\label{sec:intro}
Planning for a group of agents in an uncertain and reactive world is a fundamental problem in multi-agent systems. A standard model is the
\emph{concurrent stochastic game} in which, at every round, each agent simultaneously picks an
action, and so does the environment. The next state is selected following a distribution that depends on the state and on the chosen profile. Agents that pursue a common goal against such an environment form a \emph{team}, and the environment is modelled as a single adversarial opponent.

Randomisation is often necessary to achieve the optimal value in concurrent games. If the team players chose their action deterministically, then an adversarial environment would be able to thwart the attempt of this team.
This can already be observed in two-player games, like rock-paper-scissors.

The usual semantics that describe the winning of a coalition in such graph games include 
alternating-time temporal logic
(ATL)~\cite{AHK02}, and its probabilistic variants: Randomised ATL (RATL)~\cite{dAHK98} and Probabilistic ATL (PATL)~\cite{ChenLu07}. Such logics however model coalitions as a single player, making the implicit assumption that it can resort to a common source of randomness. This fails to accurately model several real-world agents, which may not be able to coordinate their random choices. 
For example, different sensors cannot coordinate their random choices after a network partition, and different components of a system that were programmed together and then
shipped apart do not share a common source of randomness. The team is then restricted to using only \emph{product} distributions.
Recently, the work of Brice et al.~\cite{BHMST26} initiated a study of team games with individual randomisation, building on previous work of de Alfaro and Faella~\cite{dAF03}.
We work with the same semantics as in their work and in the work of de Alfaro and Faella. 

This setting formulates questions that are relevant to both planning and verification. 
The standard architecture in multi-agent planning is centralised planning followed by decentralised execution. A planner that knows the model designs one controller per agent offline, and at run time each agent executes its own controller with its own source of randomness. For planning scenarios, a planner is given a reliability requirement in addition to a goal, say ``reach the target with probability at least $1-\lambda$''.
There are two ways that these questions  can be formulated. 
Can the requirement be met for one given $\lambda$, and can it be
met for \emph{every} $\lambda$, thus ensuring that the plan succeeds with probability arbitrarily close to~1?
We focus on the second question, known as \emph{limit-sure} winning.
A positive instance ensures that one can construct a plan with as much accuracy as needed.

\begin{figure}[t]
  \centering
  \newcommand{\ja}[2]{#1\,{\color{black!45}\textbar}\,\textit{#2}}%
    \begin{tikzpicture}[x=1cm,y=1cm,font=\footnotesize,
      >={Stealth[length=1.7mm,width=1.5mm]}]
    \tikzset{
      gstate/.style={circle,draw,fill=white,minimum size=7mm,inner sep=0.5pt},
      gabs/.style={gstate,double,double distance=0.6pt},
      elab/.style={font=\scriptsize,align=center,inner sep=1.5pt}
    }
    \node[gabs] (fail) at (-2.5,0) {$\sfail$};
    \node[gstate] (s0) at (0,0) {$\sinit$};
    \node[gabs] (goal) at (2.5,0) {$\sgoal$};
    \draw[->] (s0) -- node[elab,above] {$cc\mid \block_1,\block_2$} (fail);
    \draw[->] (s0) -- node[elab,above] {$ch\mid \block_2$\\$hc\mid \block_1$} (goal);
    \draw[->] (s0) to[out=240,in=300,looseness=6,min distance=8mm]
      node[elab,below] {$hh\mid \block_1,\block_2$\\$ch\mid \block_1$, $hc\mid \block_2$} (s0);
  \end{tikzpicture}
  \caption{The team game described in \cref{example:introexample}.  }

  \Description{A game graph with three states: a failure state on the left, the initial state in the middle, and a goal state on the right. The failure and goal states are absorbing and drawn with double circles. From the initial state, the play moves to the failure state if both agents cross and the adversary scans. It moves to the goal state if both agents cross and the adversary waits, or if both agents hide and the adversary scans. It stays in the initial state if both agents hide and the adversary waits, or if the agents disagree, whatever the adversary does.}
  \label{fig:snowballexample}
\end{figure}

\begin{example}
\label{example:introexample}
Two robots must cross a guarded door.
At each round, each agent chooses to cross ($c$) or hide ($h$), while the opponent chooses to scan either the first robot ($\block_1$) or the second ($\block_2$).
If both robots choose to cross, since the door is narrow, the robots break and reach $\sfail$ irrespective of the opponent's action. 
If robot~$i$ crosses while the opponent plays $\block_i$, robot~$i$ is caught, but the team can try again.  
If exactly one robot, say, robot~$i$ attempts to cross, it can reach $\sgoal$ only if the opponent does not choose $\block_i$.  If both hide, then nothing changes, and the game is tried again. 
The transitions are shown in~\Cref{fig:snowballexample}.

If the team had a shared source of randomness, then they could choose between $(c,h)$ and $(h,c)$ uniformly at random, and the team crosses almost surely irrespective of the opponent strategy, since they end up either in $s_0$ or in $\sgoal$ with positive probability at each attempt. 
However, if the team had no shared source of randomness, then by choosing between $(c,h)$ and $(h,c)$, they also choose $(c,c)$ and reach the failure state with positive probability.

However, even without a shared source of randomness, the team can win limit surely.
Let both robots cross with probability $\epsilon\in (0,1)$ at every round. Therefore, they can reach the state $\sgoal$ with probability $\epsilon(1-\epsilon)$ for each opponent action.
The probability that they both play cross is $\epsilon^2$. Hence, they end up in the same state $s_0$ with probability $1-\epsilon$.
This therefore gives us 
\[
 \Pr(\text{reach }\sgoal)
 =\frac{\epsilon(1-\epsilon)}
        {\epsilon(1-\epsilon)+\epsilon^2}
 =1-\epsilon.
\]
For every $1>\lambda>0$, choosing
$0<\epsilon<\lambda$ ensures that the team can succeed with probability greater
than $1-\lambda$.
\end{example}

\paragraph{Our contributions.} 
Our main contribution is in showing the computational complexity of limit-sure reachability. We also show the computational complexity of limit-sure safety games.
\begin{enumerate}[itemsep=2pt,topsep=3pt]
  \item We show that deciding whether a team without shared randomness can satisfy reachability limit surely
    is $\mathsf{NP}$-complete (\cref{thm:np-upper-bound}).  The lower bound follows from the hardness reduction of almost-sure reachability, since both limit-sure and almost-sure winning coincide on those instances~\cite{BHMST26}.
    We provide an $\NP$ upper bound by giving a polynomial certificate that represents the family of strategies, parametrised by $\lambda$, that can be verified in polynomial time.
    
    \item For safety objectives, we show that limit-sure winning and almost-sure winning
    coincide (\cref{thm:safety-limit-almost}). Thus, deciding whether a team can limit surely win a safety game is in~$\P$. 
\end{enumerate}

As a corollary, we show in \cref{thm:iratl-model-checking} that model checking the qualitative fragment of the logic IRATL, introduced by Brice et al.~\cite{BHMST26}, is in $\P^\NP$.
We also show that it is complete for this class.

To prove the $\NP$ upper bound for reachability, we represent a family of team strategies by one team strategy
whose action probabilities are algebraic Puiseux series in~$\epsilon$
(\Cref{lem:puiseux}).
This follows the approach of Asadi et al.~\cite{AsadiEtAl2025},
who study POMDPs using
Puiseux methods originating with Bewley and Kohlberg~\cite{BewleyKohlberg76}. Only the 
leading exponent of each action probability matters for our case, defining a notion of \emph{rank} of an action. 
Thanks to a standard result linking absorption probabilities to spanning forests~\cite{FreidlinWentzell12}, we reduce our problem to guessing the ranks
and then comparing minimum-rank spanning forests
(\cref{lem:forest}). We also show that these ranks only require polynomially many bits using Cramer's rule and Hadamard's
inequality (\cref{lem:small-ranks}).

Guessing the team ranks alone gives an $\NP^{\NP}$ upper bound by using an $\NP$ oracle that detects a violating opponent response using
Edmonds' algorithm~\cite{Edmonds67}.
To improve this and obtain an $\NP$ upper bound, we construct an encoded MDP whose rank encoding breaks ties.
We guess a response and certify its optimality through local
inequalities%
~\cite{Frank81,Schrijver03}.
This reduces verification to polynomial time.
We give the details in~\Cref{sec:np-upper-bound}.

\subsection*{Related work} 
\paragraph{Concurrent games and qualitative objectives.}
Concurrent stochastic games are the standard model of a controller acting against a reactive environment~\cite{dAHK98}. For two players and reachability, the sure, almost-sure and limit-sure winning sets are computable in polynomial time~\cite{dAHK07}, and the quantitative threshold decision problem is in the complexity class $\exists\mathbb{R}$ (the complexity class associated with the existential theory of the reals~\cite{ETRComplexityclass}) with $\mathsf{SQRTSUM}$-hardness as the best known lower bound~\cite{CdAH12,EY15}. %
The difference between almost-sure and limit-sure reachability appears already in recursive games~\cite{Everett57}.
The asymptotic analysis of families of strategies ensuring limit-sure winning, by algebraic means, goes back to Bewley and Kohlberg and is an active line of research. 
For example, Solan used them to analyse perturbed Markov chains, communicating classes, and exit
distributions~\cite{Solan2003}.
A related use of leading transition exponents appears in Young's minimum-resistance-tree characterisation of stochastic stability~\cite{Young93}.

\paragraph{Teams, coalitions and sharing randomness.}
If the team instead had a shared source of randomness, it could be interpreted as a single player, and then the sure, almost-sure and limit-sure reachability sets are all computable in polynomial time~\cite{dAHK07}. This setting reduces to two-player concurrent games, and there are several works that study such games~\cite{dAHK98,CdAH12,HKM09}. 
Brice et al.~\cite{BHMST26}, following the work of de Alfaro and Faella~\cite{dAF03}, developed team games where players need not have access to shared randomness. The work of de Alfaro and Faella already established that even for  almost-sure reachability (asking whether there is a strategy to reach a goal with probability~1), the problem is $\NP$-complete.
In recent work, Brice et al.~\cite{BHMST26} extended the problem to its quantitative variant and considered the threshold problem, which asks if the target can be reached with a probability strictly greater than a given threshold.
They show that memoryless team strategies are sufficient for the threshold problem, which ensures that the threshold problem is in $\exists\mathbb{R}$. 
The complexity of the limit-sure problem for teams was left open (\cite[Problem~3]{BHMST26}), together with the limit-sure safety problem
(\cite[Problem~4]{BHMST26}). We settle both.

\paragraph{Equilibria vs team games.} The gap between shared and individual sources of randomness is the gap that the extensive-form literature calls \emph{ex ante} versus \emph{no
coordination}. Team-maxmin equilibria were introduced by von Stengel and Koller~\cite{vonStengelK97}. 
Celli and Gatti~\cite{CG18} consider three different communication models, namely correlation both before and during
play, correlation only before play, and none at all, and show that the
loss from restricting communication can be unbounded in the size of the
game tree. Our setting is similar to their third setting, but on infinite plays on a graph. 
Follow-up works of the above give algorithms for ex ante coordinated team strategies~\cite{ FarinaCGS18,ZhangAC21}. 
Basilico et al.~\cite{BCDG17} study team max-min equilibria for one-shot games in normal form and give efficiency bounds and algorithms for teams that cannot correlate their strategy. Kalogiannis et al.~\cite{KAPVC23} consider stationary approximate Nash equilibria in discounted adversarial team Markov games with independent team policies. 
The literature above works with one-shot games or finite trees, or equilibria, whereas we work on 
qualitative, limit-sure winning on graphs using different techniques.

\paragraph{POMDPs and decentralised planning.} Centralised planning with decentralised execution is the standard setting for Dec-POMDPs, where the general
problem with finite horizon is $\mathsf{NEXP}$-complete~\cite{BGIZ02,OliehoekAmato16}.  Such Dec-POMDPs are useful in cooperative multi-agent reinforcement
learning~\cite{ZhangYB21}. Our model is orthogonal to partial observation, which it replaces with an active adversary. 
Under partial observation, with infinite horizon, limit-sure reachability is undecidable in general, already for blind POMDPs~\cite{GO10}; but Asadi et al. show that it is $\mathsf{NP}$-complete for policies with a fixed memory bound~\cite{AsadiEtAl2025}. 
Their proof uses Puiseux strategies to obtain polynomial-size
rank witnesses, and our proof builds on this approach.
However, fixing a team strategy leaves an adversarial MDP.
Our novel ingredient is a polynomial-time verification procedure
that certifies the condition against all opponent responses
by checking only one response.

\paragraph{Logics for multi-agent coalition.}
Alternating-time Temporal Logic (ATL) \cite{AHK02} is used to model multi-agent systems that navigate a graph. 
Some extensions of ATL, like PATL~\cite{ChenLu07} or RATL~\cite{dAHK98}, also consider randomised strategies. However, their semantics treat a coalition as a meta-player,  implicitly assuming that they share randomness.
Model checkers for stochastic multi-agent systems~\cite{PRISMgames20} also assume shared randomisation.
Individually Randomised ATL (IRATL)~\cite{BHMST26} was proposed to consider settings where teams do not communicate or share randomness during play. Our results settle the complexity of limit-sure reachability and safety in this logic, 
which leads to a tight complexity bound for model checking the
qualitative fragment of IRATL. Probabilistic Strategy Logic~\cite{AKMMR19} is expressive enough to model individual randomisation, but is undecidable in general.

\section{Preliminaries}\label{sec:prelims}
\paragraph{Notation} We write $\mathbb{N}$ for the set of nonnegative integers. For a finite set $X$, $\Dist(X)$ is the set of probability distributions over $X$, and $\Supp(d) = \{x \in X \mid d(x) > 0\}$ is the \emph{support} of $d \in \Dist(X)$. For sets $X$ and $Y$, $Y^{X}$ is the set of functions from $X$ to $Y$. $X^{+}$ and $X^{\omega}$ are the sets of nonempty finite sequences and of infinite sequences over $X$.

\begin{definition}[Game]
\label{def:game}
We define a \emph{game} $\game$ as a tuple
$\game=\tuple{\players,\states,(A_p)_{p\in\players},
\delta,\team,\sinit,\obj}$.
The sets $\players$ and $\states$ are finite sets of players and states.
Each player $p$ has a finite nonempty action set $A_p$.
The transition function is
$\delta\colon\states\times\prod_{p\in\players}A_p\to\Dist(\states)$.
The team is $\team\subseteq\players$, and the initial state is
$\sinit\in\states$.
The objective $\obj\subseteq\states^\omega$ is a set of infinite
state sequences.
\end{definition}

We call the players in $\players\setminus\team$ the \emph{opponents}.
For $X\subseteq\players$, the set
$A_X=\prod_{p\in X}A_p$ contains its \emph{action profiles}.
We write a complete profile as
$(\bar a,\bar b)\in A_{\team}\times A_{\players\setminus\team}$.
Here, $\bar a=(a_p)_{p\in\team}$ is the team profile,
and $\bar b$ is the opponents' profile.

The game starts in $\sinit$. At each round, every player picks an action simultaneously and independently. If the current state is $s$ and the players pick the profile $(\bar{a}, \bar{b})$, the next state is determined according to $\delta(s, (\bar{a}, \bar{b}))$. We call the resulting infinite sequence of states a \emph{play}. 
The game is zero-sum, meaning that the team wins the plays in $\obj$, and the opponents try to keep the play out of $\obj$.

\begin{definition}[Strategies, team strategies]
  \label{def:strategies}
  A \emph{history} is a non-empty finite sequence of states, an element of $\states^{+}$. A \emph{strategy} for player $p \in \players$ is a mapping $\stratp \colon \states^{+} \to \Dist(A_p)$. It is \emph{deterministic} if $\stratp(h)$ assigns probability~$1$ to one action at every history $h$. It is \emph{memoryless} if $\stratp(h)$ depends only on the last state of $h$, and we write $\stratp(s)$ for its value at the histories ending in $s$. A \emph{team strategy} is a tuple $\stratteam = (\stratp)_{p \in \team}$ of strategies, one per team player. At a history $h$, it induces the product distribution $\stratteam(h)$ over $A_{\team}$, defined for every team profile $(a_p)_{p \in \team}$ by
  \begin{equation}
    \label{eq:product}
    \stratteam(h)\bigl((a_p)_{p \in \team}\bigr) \;=\; \prod_{p \in \team} \stratp(h)(a_p).
  \end{equation}
  Generally, for a set $\players'$ of players we write
  $
    \Strats{\players'}^{\mathrm h} \;=\; \prod_{p \in \players'} \bigl(\Dist(A_p)\bigr)^{\states^{+}}
  $
  for the set of tuples $(\stratp)_{p \in \players'}$ of strategies, and $\Strats{\players'} \subseteq \Strats{\players'}^{\mathrm h}$ for those that are memoryless. A team strategy is thus an element of $\Strats{\team}^{\mathrm h}$.
\end{definition}

\begin{remark}
  \label{rem:no-shared-randomness}
  The product form of \cref{eq:product} is the defining feature of our setting: each team player randomises privately, so the team can only play product distributions over $A_{\team}$. By contrast, a team with shared randomness can play any joint distribution in $\Dist(A_{\team})$. %
  The game then reduces to a two-player zero-sum concurrent %
  game.
\end{remark}

\paragraph{Objectives}
\label{sec:objectives}

The objective of a game specifies which plays the team wins. We consider reachability and safety objectives. Reachability expresses that the play eventually enters a target state, and safety expresses that the play never leaves the set of safe states.

\begin{definition}[Reachability objective]
  \label{def:reachability-objective}
  For a set $\target \subseteq \states$ of \emph{target states}, the \emph{reachability objective} requires to visit $\target$ at least once:
  \[
    \Reach{\target}
    = \{s^0 s^1 \cdots \in \states^{\omega} \mid \exists k \ge 0,\ s^k \in \target\}.
  \]
\end{definition}

\begin{definition}[Safety objective]
  \label{def:safety-objective}
  For a set $C \subseteq \states$ of \emph{safe states}, the \emph{safety objective} requires that a play remain in $C$ forever:
  \[
    \Box C
    = \{s^0 s^1 \cdots \in \states^{\omega} \mid \forall k \ge 0,\ s^k \in C\}.
  \]
\end{definition}

\begin{definition}[Markov decision process]
  \label{def:mdp}
  A \emph{Markov decision process} (MDP) is a tuple $(V,A,P)$,
  where $V$ is a finite nonempty set of states, $A$ is a finite
  nonempty set of actions, and
  $P\colon V\times A\to\Dist(V)$ is a transition function.
  Its \emph{support} is the set of triples $(u,a,v)$ with
  $P(u,a)(v)>0$.
  Fixing a memoryless team strategy
  $\stratteam\in\Strats{\team}$ induces the MDP
  $(\states,A_{\opp},P_{\stratteam})$, where

\[
  P_{\stratteam}(s,b)(t)
  =\sum_{\bar a\in A_{\team}}
    \stratteam(s)(\bar a)\cdot\delta(s,(\bar a,b))(t)
  \qquad
  \forall s,t\in\states,\ b\in A_{\opp}.
\]

\end{definition}

\begin{definition}[Markov chain]
  \label{def:markov-chain}
  A \emph{Markov chain} (MC) is a pair $(V,P)$,
  where $V$ is a finite nonempty set of states and
  $P\colon V\times V\to[0,1]$ is a \emph{transition matrix}.
  A state $u\in V$ is \emph{absorbing} if $P(u,u)=1$.
  For $u\in V$ and $R\subseteq V$, we write
  $\Prb{u}{P}(\Reach{R})$ for the probability that the chain
  started in $u$ eventually visits $R$. 
  Fixing memoryless strategies
  $\stratteam\in\Strats{\team}$ and
  $\stratopp\in\Strats{\opp}$ induces the Markov chain
  $(\states,P_{\stratteam,\stratopp})$, where

\[
  P_{\stratteam,\stratopp}(s,t)
  =\sum_{\substack{\bar a\in A_{\team}, b\in A_{\opp}}}
    \stratteam(s)(\bar a)\cdot\stratopp(s)(b)
    \cdot\delta(s,(\bar a,b))(t)
  \quad
  \forall s,t\in\states.
\]

\end{definition}

By a classical result~\cite{Norris98}, the probabilities
$\Prb{v}{P}(\Reach{R})$ for $v\in V$ form the coordinatewise
least solution $q\in\mathbb{R}_{\geq0}^V$ of the system
\begin{equation}
  \label{eq:reach-system}
  q_v=1\ \text{ for }v\in R,
  \qquad
  q_v=\sum_{w\in V}P(v,w)\,q_w\
  \text{ for }v\in V\setminus R.
\end{equation}

\paragraph{Winning conditions}
\label{sec:winning-conditions}

Every team strategy $\stratteam$ and tuple $\stratrest$ of strategies for the opponents induces a probability measure on plays starting in $\sinit$, which we denote by $\Prb{\sinit}{\stratteam,\stratrest}$. In particular, we write $\Prb{\sinit}{\stratteam,\stratrest}(\obj)$ for the probability of satisfying the objective. The team chooses its strategy first and seeks to maximise this probability against every response by the opponents.

\begin{definition}[Almost-sure winning]
  \label{def:almost-sure}
  The team \emph{wins almost surely} if some team strategy ensures the objective with probability $1$ against every tuple of strategies for the opponents:
  \[
    \exists \stratteam \in \Strats{\team}^{\mathrm h}\; : \;
    \forall \stratrest \in \Strats{\players \setminus \team}^{\mathrm h},
    \qquad \Prb{\sinit}{\stratteam,\stratrest}(\obj)=1.
  \]
\end{definition}

\begin{definition}[Limit-sure winning]
  \label{def:limit-sure}
  The \emph{value} of a game $\game$ is
  \[
    \val(\game) \;=\; \sup_{\stratteam \in \Strats{\team}^{\mathrm h}} \, \inf_{\stratrest \in \Strats{\players \setminus \team}^{\mathrm h}} \, \Prb{\sinit}{\stratteam, \stratrest}(\obj).
  \]
  The team \emph{wins limit surely} if $\val(\game)=1$, or equivalently, if
  \[
    \forall \lambda \in (0,1)\; : \;
    \exists \stratteam \in \Strats{\team}^{\mathrm h}\; : \;
    \forall \stratrest \in \Strats{\players \setminus \team}^{\mathrm h},
    \qquad \Prb{\sinit}{\stratteam,\stratrest}(\obj)>1-\lambda.
  \]
\end{definition}

\begin{remark}
  \label{rem:single-opponent}
  We may replace the opponents by a single opponent $\opp$ with
  action set $A_{\opp}=A_{\players\setminus\team}$.
  For each fixed team strategy, this does not change the infimum of the
  winning probability. See the full version. Thus, we assume that there is only one opponent $\opp$.
\end{remark}

We write $\Strats{\opp}^{\mathrm h}$ for the strategies of the
single opponent and $\Strats{\opp}$ for its memoryless strategies.
Let $D$ be the finite set of deterministic
memoryless strategies, which we identify with maps
$\stratopp\colon\states\to A_{\opp}$. Since deterministic strategies are sufficient, throughout the paper we assume an opponent response is an element of $D$.

We study the complexity of deciding whether the team wins a given game limit surely. We assume that games are given explicitly. Transition probabilities are rationals encoded in binary, $\delta$ is represented by a table indexed by states and complete action profiles, and the objective is specified by its set of target or safe states. The \emph{size} of the game is the total length of this encoding.

\section{Reachability}\label{sec:reach}
We now focus on the reachability objective $\obj=\Reach{\target}$.
We assume without loss of generality that each target state is \emph{absorbing}, that is,
$\delta(t,(\bar a,\bar b))(t)=1$ for every $t\in\target$ and action
profile $(\bar a,\bar b)\in A_{\players}$.

\begin{definition}[Limit-sure reachability problem]
\label{def:problem}
Given a reachability game, does the team win limit surely?
\end{definition}

Brice et al.~\cite{BHMST26} as well as De Alfaro and Faella~\cite{dAF03} show that almost-sure reachability is $\mathsf{NP}$-complete.
The reduction of Brice et al. from the $k$-clique problem, in particular, produces a game where
the team wins almost surely or has value at most
$3/4$~\cite[Theorem~4]{BHMST26}, which also proves $\mathsf{NP}$-hardness for limit-sure reachability.

We show in four steps that the problem belongs to $\mathsf{NP}$.
First, we show that limit-sure winning can be achieved with team
action probabilities given by Puiseux series in $\epsilon$.
Second, we show that winning depends only on their leading exponents,
called \emph{ranks}, which can be represented with polynomially many bits.
Third, for fixed team and opponent strategies, we check winning by
comparing minimum-rank spanning forests.
Finally, we give a certificate that uses the team ranks and one
opponent response.
We check this response through local inequalities in an encoded MDP.
These inequalities certify optimality without checking all opponent strategies.
We give the full NP proof in the full version.

\begin{figure}[htb]
\centering
\begin{tikzpicture}[
    x=.34\linewidth,y=1cm,
    font=\fontsize{8}{9}\selectfont,
    >=stealth,
    stage/.style={
      draw=black!65,
      rounded corners=2pt,
      text width=.27\linewidth,
      minimum height=.75cm,
      inner xsep=2pt,
      inner ysep=2pt,
      align=center
    }
  ]
  \node[stage] (limit) at (0,0)
    {Limit-sure\\winning};
  \node[stage] (puiseux) at (1,0)
    {Puiseux\\strategy};
  \node[stage] (ranks) at (2,0)
    {Polynomial-size\\action ranks};

  \node[stage] (forests) at (2,-1.05)
    {Traps and the\\forest criterion};
  \node[stage] (response) at (1,-1.05)
    {One stable\\encoded response};
  \node[stage] (matroid) at (0,-1.05)
    {Weighted matroid\\intersection};

  \draw[->] (limit.east) -- (puiseux.west);
  \draw[->] (puiseux.east) -- (ranks.west);
  \draw[->] (ranks.south) -- (forests.north);
  \draw[->] (forests.west) -- (response.east);
  \draw[->] (response.west) -- (matroid.east);
\end{tikzpicture}
\caption{Proof outline for the $\NP$ upper bound.}
\Description{Six boxes form a sequence across two rows:
limit-sure winning, a Puiseux strategy, polynomial-size action ranks,
traps and the forest criterion, one stable encoded response,
and weighted matroid intersection.}
\label{fig:reachability-pipeline}
\end{figure}

\begin{proposition}[{\cite[Theorem~1]{BHMST26}}]
\label{prop:memoryless}
Let $t\in[0,1]$.
If some team strategy ensures reachability with probability greater
than $t$ against every opponent response, then some memoryless team
strategy does.
\end{proposition}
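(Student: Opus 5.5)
We would prove the statement by a value-iteration argument followed by a local selection of product distributions. Fix a team strategy $\stratteam$ that ensures reachability of $\target$ with probability greater than $t$ against every opponent response, and set $t' = \inf_{\stratrest}\Prb{\sinit}{\stratteam,\stratrest}(\Reach{\target})$; we first aim to show $t'>t$. By \cref{rem:single-opponent} there is a single opponent. Its strategy space is compact in the product topology. The probability of reaching $\target$ within $N$ steps is continuous in the opponent strategy and increases pointwise to the reachability probability, so Dini's theorem makes this convergence uniform. Hence there is a horizon $N$ with $\Prb{\sinit}{\stratteam,\stratrest}(\text{reach }\target\text{ within }N)>t''$ for all $\stratrest$, for some fixed $t<t''<t'$. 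In particular the infimum $t'$ is bounded below by this margin, so the statement is only about the finite-horizon game.

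Next we pass to finite-horizon upper values. For each state $s$ and $k\le N$, let $v_k(s)$ be the supremum over team strategies started in $s$ of the minimum over opponents of the probability of reaching $\target$ within $k$ steps. The key observation is that after one round the continuation of a history-dependent team strategy from each successor $s'$ is itself a team strategy from $s'$. Moreover, the product structure of \cref{eq:product} is preserved, because each player continues privately. This yields the Bellman-type recursion
\[
v_k(s)=\sup_{x\in\prod_{p\in\team}\Dist(A_p)}\ \min_{b\in A_{\opp}}\ \sum_{s'}\sum_{\bar a}x(\bar a)\,\delta(s,(\bar a,b))(s')\,v_{k-1}(s'),
\]
with $v_k=1$ on $\target$. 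The inequality ``$\le$'' follows from the continuation argument. The inequality ``$\ge$'' follows by gluing near-optimal continuations, which is again legitimate with individual randomness because the gluing depends only on the visited state. The supremum is attained because the domain is a compact product of simplices and the objective is continuous. So there are local product distributions $x^k_s$ realising each $v_k(s)$, and $v_N(\sinit)>t$.

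The memoryless strategy is then built by choosing, at each state $s$, one of the distributions $x^k_s$. For this we rank states by the first index $k(s)$ at which $v_{k}(s)$ exceeds a suitable state-dependent threshold, and play $x^{k(s)}_s$ at $s$. Fixing this memoryless team strategy and an arbitrary deterministic memoryless opponent response in $D$ gives a Markov chain. The function $w(s)=v_{k(s)}(s)$ (minus slack) satisfies $w\le Pw$ outside $\target$, so it is a sub-solution of \cref{eq:reach-system}.

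The main obstacle is turning this sub-solution into a lower bound on the least solution of \cref{eq:reach-system}. A sub-solution bounds the reachability probability from below only if no closed class outside $\target$ carries positive $w$. We therefore need a progress argument: the rank $k(\cdot)$ must strictly decrease in expectation along transitions that preserve $w$, so that the play cannot linger forever in a positive-$w$ region. We would first try to enforce this by lexicographic tie-breaking, preferring at each state the distribution with the smallest horizon index among those achieving value within the slack. We would then argue by induction on $k$ that the set of states of rank at most $k$ is reached with the required probability, in the style of an attractor decomposition. If that fails, we would perturb the selected distributions slightly toward full support to force progress, at the cost of an arbitrarily small loss that the strict margin $t''>t$ absorbs.
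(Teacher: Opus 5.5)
The paper gives no proof here. It imports the statement from \cite[Theorem~1]{BHMST26}, so your argument has to stand on its own. Your first two steps are essentially sound, with one slip. The reachability probability is only lower semicontinuous in $\stratopp$, being a supremum of the continuous probabilities $f_N(\stratopp)$ of reaching $\target$ within $N$ steps. So Dini's theorem does not apply, and the convergence need not be uniform. What you need still follows from compactness: the closed sets $\{\stratopp : f_N(\stratopp)\le t\}$ decrease in $N$ and have empty intersection by hypothesis, so one of them is empty, which gives $v_N(\sinit)>t$.

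The genuine gap is the third step, which carries the whole difficulty of the proposition. Since $x^{k(s)}_s$ is optimal only against $v_{k(s)-1}$, the inequality $w\le Pw$ requires $v_{k(s')}(s')\ge v_{k(s)-1}(s')$ at every successor $s'$. Because $v_k$ is nondecreasing in $k$, this can fail whenever a successor has smaller rank, and nothing in your construction prevents that. With a threshold such as $v_N(s)-\delta$ you only get $w\le Pw+\delta$, and these per-round errors add up over an unbounded number of rounds. With one common horizon $k$, $v_k\le Pv_k$ does hold, but then trapping is real. Suppose at $s$ some team profile loops at $s$ and another moves to $\target$. Then $v_k(s)=1$ for all $k\ge1$, the looping profile is a one-shot maximiser for every $k\ge2$, and the resulting memoryless strategy never reaches $\target$.

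Neither remedy you sketch is carried out, and the perturbation one fails as stated. Moving towards full support costs a little \emph{per round}, and in exactly the lingering scenario you want to exclude the number of rounds is unbounded. So the total loss need not be small: in \cref{example:introexample}, what matters is the per-round ratio of loss $\epsilon^2$ to progress $\epsilon(1-\epsilon)$, not the size of the loss.

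One way to close the gap, reusing your first two steps, is discounting. For $\lambda\in(0,1)$, let $w^*$ be the fixed point of the $\lambda$-contraction that equals $1$ on $\target$ and elsewhere maps $w$ to $\max_x\min_b\lambda\sum_{s'}P(s'\mid s,x,b)\,w(s')$. Here $x$ ranges over $\prod_{p\in\team}\Dist(A_p)$ and $P(\cdot\mid s,x,b)$ is the induced successor distribution. Let the memoryless team strategy play a maximiser $x^*_s$ at each $s$; its marginals are the players' private distributions. Fix any opponent strategy, and let $X_n$ be the state at time $n$ and $\theta$ the hitting time of $\target$. Then $\lambda^{\min(n,\theta)}w^*(X_{\min(n,\theta)})$ is a submartingale, so $w^*(\sinit)\le\mathbb E[\lambda^\theta]+\lambda^n$ for every $n$. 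The vanishing factor $\lambda^n$ replaces your progress argument, and $\mathbb E[\lambda^\theta]$ is at most the reachability probability. Finally, iterate the contraction from the indicator of $\target$ and compare with your recursion. By induction this gives $w^*\ge\lambda^N v_N$, so $w^*(\sinit)>t$ once $\lambda$ is close enough to $1$.
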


By \Cref{prop:memoryless}, restricting the team to memoryless
strategies preserves the value.
Fixing such a team strategy gives an MDP with an optimal deterministic
memoryless opponent strategy.
Thus, we restrict attention to memoryless team strategies and
deterministic memoryless opponent strategies.

\subsection{Puiseux Strategies}
\label{sec:puiseux}

Limit-sure winning allows the strategy to depend on the error $\lambda$.
In \Cref{example:introexample}, each agent crosses with probability
$\epsilon$ and hides with probability $1-\epsilon$.
The winning probability approaches $1$ as $\epsilon$ goes to zero.
Puiseux series describe such a family by a symbolic strategy.

\begin{definition}[Puiseux series]
\label{def:puiseux-series}
A \emph{Puiseux series} has the form
$x=\sum_{i\geq i_0}c_i\epsilon^{i/m}$, where
$i_0\in\mathbb Z$, $m\in\mathbb N_{>0}$, and $c_i\in\mathbb R$.
\end{definition}

We write $\K$ for the field of algebraic Puiseux series.
The sign of a nonzero series is the sign of its first nonzero
coefficient.
Thus, $\epsilon$ is a positive infinitesimal, with
$0<\epsilon<t$ for every real $t>0$.
For bounded $x\in\K$, its \emph{standard part} $\st(x)$ is the
coefficient of $\epsilon^0$, or zero if absent.
We use the following standard properties~\cite{BPR06}.
Here, first-order formulas use polynomial equalities and
inequalities, logical connectives, and quantifiers.

\begin{lemma}[Puiseux series properties~\cite{BPR06}]
\label{lem:puiseux-calculus}
The field $\K$ is a real closed ordered extension of $\mathbb R$.
\begin{enumerate}[label=(\roman*),itemsep=1pt,topsep=2pt]
  \item \emph{Transfer.}
  Every first-order sentence with real parameters holds over
  $\mathbb R$ if and only if it holds over $\K$.

  \item \emph{Evaluation.}
  Let $\bar x=(x_1,\ldots,x_n)\in\K^n$.
  Suppose a first-order formula $\Phi$ with real parameters
  satisfies $\Phi(\bar x)$ over $\K$.
  There is a real $r_0>0$ such that, for every real $0<r<r_0$,
  all series $x_i$ converge at $\epsilon=r$ and
  $\Phi(x_1(r),\ldots,x_n(r))$ holds over $\mathbb R$.
\end{enumerate}
\end{lemma}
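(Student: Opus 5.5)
The plan is to derive all three claims from two classical facts: the Newton--Puiseux theorem and Tarski--Seidenberg quantifier elimination for real closed fields.

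\emph{Real closedness.} I would first show that the field of all Puiseux series with real coefficients is real closed, using the Newton--Puiseux theorem. Its complexification $\mathbb C\{\{\epsilon\}\}$ is algebraically closed. It is obtained from the real Puiseux series by adjoining $i$, and $i^2=-1$. A field $F$ with $-1$ not a sum of squares in $F$ and $F(i)$ algebraically closed is real closed. Here $-1$ is not a sum of squares because the order described above, where the sign is the sign of the leading coefficient, is a field order.

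The field $\K$ is the relative algebraic closure of $\mathbb R(\epsilon)$ inside this real closed field. A relative algebraic closure inside a real closed field is again real closed: every positive element has its square root, which is algebraic, and odd-degree polynomials have roots, which are algebraic. This gives the main statement and that $\K$ is an ordered extension of $\mathbb R$.

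\emph{Transfer.} The theory of real closed fields admits quantifier elimination, so it is complete and model complete. Since $\mathbb R\subseteq\K$ are both real closed, $\mathbb R$ is an elementary substructure of $\K$. Hence a sentence with real parameters holds in $\mathbb R$ iff it holds in $\K$, which is (i).

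\emph{Evaluation.} By quantifier elimination over $\mathbb R$, I would replace $\Phi$ by an equivalent quantifier-free formula with real parameters. This is a Boolean combination of sign conditions $Q(\bar x)\mathrel{\square}0$, with $Q$ a real polynomial and $\square\in\{=,>,<\}$. By transfer, the equivalence holds in both $\K$ and $\mathbb R$. The argument then has three steps.
\begin{enumerate}
\item Each $x_i$ is algebraic over $\mathbb R(\epsilon)$. The analytic part of Newton--Puiseux then shows that $x_i$ is a convergent series in $\epsilon^{1/m}$ on some punctured interval $(0,r_i)$. Its value there is a root of the defining polynomial, so evaluation at $\epsilon=r$ is a ring homomorphism on the subring generated by $\mathbb R$ and the $x_i$ (for $r$ small).
\item For each atom, $Q(\bar x)$ is itself a convergent algebraic Puiseux series $y$. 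If $y=0$ in $\K$, then $y(r)=0$ for all small $r$. If $y\neq 0$, write $y=c\,\epsilon^{q}(1+o(1))$; then $y(r)$ has the sign of $c$ for all sufficiently small $r>0$.
\item Taking $r_0$ as the minimum of the finitely many thresholds from the $x_i$ and the atoms, every atom has the same truth value in $\K$ at $\bar x$ and in $\mathbb R$ at $\bar x(r)$. Hence the quantifier-free formula, and so $\Phi$, holds at $\bar x(r)$ for $0<r<r_0$.
\end{enumerate}

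\emph{Main obstacle.} The delicate step is the convergence claim in step 1. A formal Puiseux root of a polynomial with coefficients in $\mathbb R[\epsilon]$ must be shown to be an actual convergent expansion of a root-branch. I would obtain this from the analytic (implicit function / Weierstrass preparation) form of the Newton--Puiseux theorem. The root branches near $\epsilon=0$ are given by convergent series in $\epsilon^{1/m}$, and the formal roots coincide with these expansions, since both sets have the same cardinality counted with multiplicity. Once convergence is in hand, the rest is bookkeeping of signs of leading terms.
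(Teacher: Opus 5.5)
Your argument is correct. The paper gives no proof of this lemma; it defers entirely to Basu--Pollack--Roy. What you wrote is the standard derivation: Newton--Puiseux plus Artin--Schreier for real closedness (then passing to the relative algebraic closure of $\mathbb R(\epsilon)$), Tarski--Seidenberg for transfer, and quantifier elimination plus leading-term signs for evaluation.

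The main point of comparison with the cited treatment is the evaluation property. That book works over an arbitrary real closed field, where analytic convergence is unavailable. It obtains evaluation by identifying algebraic Puiseux series with germs at $0^+$ of continuous semialgebraic functions, so that ``holds for all small $r$'' becomes a statement about one-dimensional semialgebraic sets. Over $\mathbb R$, your route through convergent root branches is equivalent and more direct, at the price of invoking the analytic form of Newton--Puiseux.

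One small slip is in your step~1. That evaluation at $\epsilon=r$ is a ring homomorphism does not follow from $x_i(r)$ being a root of its defining polynomial. It follows from absolute convergence: write all $x_i$ as Laurent series in $\epsilon^{1/m}$ for a common $m$, converging on a common punctured interval, so that sums and Cauchy products evaluate termwise. The same fact is what shows $x_i(r)$ is a root in the first place.

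Your cardinality argument for ``formal roots are the convergent ones'' is cleanest if you take the minimal, hence squarefree, polynomial of $x_i$ over $\mathbb R(\epsilon)$. Then unique factorisation over the formal field identifies its roots with the convergent branches.
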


A \emph{Puiseux team strategy} is a memoryless strategy with
probabilities in $\K$.
A Puiseux team strategy $\stratteam$ is \emph{Puiseux-winning} if $\st\bigl(\Prb{\sinit}{\stratteam,\stratopp}
       (\Reach{\target})\bigr)=1$ for every opponent response $\stratopp \in D$.
\begin{lemma}
\label{lem:puiseux}
The team wins limit surely if and only if some Puiseux team strategy
is Puiseux-winning.
\end{lemma}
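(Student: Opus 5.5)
The plan is to prove both directions through the transfer and evaluation properties of \Cref{lem:puiseux-calculus}. The central observation is that, for a fixed memoryless team strategy and a fixed deterministic response $\stratopp\in D$, the reachability probability is expressible in first-order logic. It is the least nonnegative solution of \eqref{eq:reach-system} for the chain $P_{\stratteam,\stratopp}$. Equivalently, after removing the states that cannot reach $\target$ in the support graph, it is the unique solution of a nonsingular linear system. Because $D$ is finite, the statement ``the memoryless team strategy given by the variables $\bar x$ wins with probability greater than $1-\lambda$ against every $\stratopp\in D$'' can be written as a first-order formula $\Phi(\bar x,\lambda)$. Its parameters are the rational entries of $\delta$. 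By \Cref{prop:memoryless} and the fact that fixing a memoryless team strategy leaves an MDP where deterministic memoryless responses are optimal, limit-sure winning is equivalent to the first-order sentence $\forall\lambda\in(0,1)\,\exists\bar x\,\Phi(\bar x,\lambda)$.

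\emph{($\Leftarrow$)} Suppose $\stratteam$ is a Puiseux-winning Puiseux team strategy, and fix a real $\lambda\in(0,1)$. For each $\stratopp\in D$, the probability $\Prb{\sinit}{\stratteam,\stratopp}(\Reach{\target})\in\K$ has standard part $1$. It is therefore greater than $1-\lambda$ in $\K$. Hence $\Phi(\bar x,\lambda)$ holds over $\K$, where $\bar x$ are the action probabilities of $\stratteam$ (these also satisfy the formulas stating that they are distributions). The evaluation property then gives a real $r>0$ such that $\Phi(\bar x(r),\lambda)$ holds over $\mathbb R$. This yields a real memoryless team strategy that beats every deterministic response with probability greater than $1-\lambda$, and so every history-dependent response as well.

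\emph{($\Rightarrow$)} If the team wins limit surely, then $\forall\lambda\in(0,1)\,\exists\bar x\,\Phi(\bar x,\lambda)$ holds over $\mathbb R$. By transfer it holds over $\K$. Instantiating $\lambda:=\epsilon$, which lies in $(0,1)$ in $\K$, gives $\bar x\in\K^n$ forming a Puiseux team strategy with winning probability greater than $1-\epsilon$ against every $\stratopp\in D$. The interpretation over $\K$ of the formula defining reachability probability should coincide with the actual reachability probability of the Puiseux Markov chain. Each such probability is then bounded, at most $1$, and at least $1-\epsilon$, so its standard part is $1$.

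The main obstacle is making $\Phi$ genuinely first-order and making its interpretation over $\K$ faithful. ``Least solution'' and ``reachability probability'' are not polynomial conditions a priori. I would encode them by quantifying over the finitely many supports: a state $s$ satisfies $q_s=0$ exactly when it cannot reach $\target$ in the graph whose edges have positive probability, and this graph is determined by sign conditions on the $\bar x$. The remaining values are then the unique solution of a linear system, which can be expressed by polynomial equalities. Over $\K$ this case analysis yields exactly the absorption probabilities computed by the same linear algebra, and these are the quantities used in the definition of Puiseux-winning.
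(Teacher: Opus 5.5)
Your proposal is correct and follows essentially the paper's argument. For ($\Rightarrow$) it transfers the real sentence and instantiates $\lambda:=\epsilon$; for ($\Leftarrow$) it uses standard parts and then evaluates at a small real $r$, relying on the finiteness of $D$. The only difference is cosmetic: you encode reachability probabilities by a case split on supports plus a nonsingular linear system, whereas the paper uses the coordinatewise least nonnegative solution directly. That characterisation is already first-order (add a universal quantifier over competing nonnegative solutions), so the obstacle you flag is milder than you suggest.
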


\begin{proof}[Proof sketch]
We express strategy validity by polynomial constraints.
We describe reachability values by the least nonnegative solution
of \eqref{eq:reach-system}.
Thus, the limit-sure condition is first-order.
Transfer with error $\epsilon$ gives
$
  1-\epsilon
    <\Prb{\sinit}{\stratteam,\stratopp}(\Reach{\target})
    \leq1
$
for every
$\stratopp\in D$.

Each probability has standard part $1$.
Conversely, fix a positive real error $\lambda$ and a Puiseux-winning
strategy.
Each success probability is greater than $1-\lambda$ in $\K$.
Evaluation gives a real strategy with the same inequalities.
One sufficiently small value works for all responses, since $D$ is
finite.
\end{proof}

We may assume that all strategy exponents are nonnegative integers.
Bounded probabilities have no negative exponents.
We choose a common denominator $d$ and substitute $\epsilon=\eta^d$.
This substitution makes the exponents integers and preserves winning.

\subsubsection{Ranks}
\label{sec:ranks}

The \emph{rank} $\rk(x)$ of a nonzero series is its smallest exponent.
We put $\rk(0)=\infty$ and $\min\emptyset=\infty$.
For nonnegative series, leading coefficients cannot cancel.
Hence,
\begin{align*}
  \rk(xy)&=\rk(x)+\rk(y),\qquad 
  \rk(x+y)=\min\{\rk(x),\rk(y)\},\\
  \rk(x/y)&=\rk(x)-\rk(y)\qquad\text{if }y\neq0.
\end{align*}
We write $r_{p,s,a}=\rk(\stratp(s)(a))$ for an action rank.
Since action probabilities sum to $1$, their minimum rank at each
$(p,s)$ is zero.
Conversely, any such ranks are realised by
\begin{equation}
\label{eq:realisation}
  \stratp(s)(a)=
  \frac{\epsilon^{r_{p,s,a}}}
       {\sum_{a':\,r_{p,s,a'}<\infty}\epsilon^{r_{p,s,a'}}}
  \quad\text{for all }a\in A_p\text{ with }r_{p,s,a}<\infty,
\end{equation}
with probability zero for actions of rank $\infty$.
Independence makes profile ranks the sum of action ranks. 
Against an opponent action $b$, the transition rank is
\begin{equation}
\label{eq:transition-rank}
  \rk(s,b,t)=
  \min_{\bar a:\,\delta(s,(\bar a,b))(t)>0}
       \sum_{p\in\team}r_{p,s,a_p}.
\end{equation}

\begin{example}
\label{ex:ranks-robot}
The diagram below shows the strategy from
\Cref{example:introexample} and its induced MDP.
For either opponent action, reaching $\sgoal$ has rank
$\rk(\epsilon(1-\epsilon))=1$, whereas reaching $\sfail$
has rank $\rk(\epsilon^2)=2$.
Returning to $\sinit$ has rank $\rk(1-\epsilon)=0$.
Thus, success has smaller rank than failure and occurs first
with probability tending to $1$ as $\epsilon$ goes to zero.
Boxes show ranks, and black dots mark opponent actions.
Absorbing states are double-circled.

\begingroup
\centering
\newcommand{\reachrankbox}[1]{%
  \begingroup
  \setlength{\fboxsep}{0.5pt}%
  \setlength{\fboxrule}{0.3pt}%
  \fcolorbox{gray!50}{gray!10}{\ensuremath{#1}}%
  \endgroup}
\resizebox{.96\linewidth}{!}{%
\begin{tikzpicture}[
  x=1cm,y=1cm,font=\fontsize{8}{9}\selectfont,
  >={Stealth[length=1.7mm,width=1.5mm]}
]
  \tikzset{
    gstate/.style={
      circle,draw,fill=white,
      minimum size=6.6mm,inner sep=0.5pt
    },
    gabs/.style={gstate,double,double distance=0.6pt},
    act/.style={
      circle,fill=black,inner sep=0pt,minimum size=1.5mm
    },
    plab/.style={
      font=\fontsize{7}{8}\selectfont,inner sep=1.2pt
    },
    alab/.style={
      font=\fontsize{7}{8}\selectfont\itshape,inner sep=1pt
    }
  }

  \node[anchor=east] at (-0.95,0) {\fontsize{7}{8}\selectfont
    \setlength{\tabcolsep}{2.5pt}%
    \begin{tabular}{@{}lcc@{}}
      & cross & hide \\ \midrule
      agent 1
        & $\epsilon$\,\reachrankbox{1}
        & $1-\epsilon$\,\reachrankbox{0} \\
      agent 2
        & $\epsilon$\,\reachrankbox{1}
        & $1-\epsilon$\,\reachrankbox{0}
    \end{tabular}};

  \node[gstate] (s0)   at (0,0)       {$\sinit$};
  \node[act]    (sc)   at (1.25,0.7)  {};
  \node[act]    (wt)   at (1.25,-0.7) {};
  \node[gabs]   (fail) at (3.6,0.7)   {$\sfail$};
  \node[gabs]   (goal) at (3.6,-0.7)  {$\sgoal$};

  \draw (s0) --
    node[alab,below right,pos=0.55] {$\block_1$} (sc);
  \draw (s0) --
    node[alab,above right,pos=0.55] {$\block_2$} (wt);

  \draw[->] (sc) --
    node[plab,above] {$\epsilon^2$\,\reachrankbox{2}} (fail);
  \draw[->] (sc) --
    node[plab,fill=white,pos=0.72]
      {$\epsilon(1-\epsilon)$\,\reachrankbox{1}} (goal);
  \draw[->] (wt) --
    node[plab,fill=white,pos=0.72]
      {$\epsilon^2$\,\reachrankbox{2}} (fail);
  \draw[->] (wt) --
    node[plab,below]
      {$\epsilon(1-\epsilon)$\,\reachrankbox{1}} (goal);

  \draw[->] (sc) to[out=135,in=90,looseness=1.2]
    node[plab,above,pos=0.45]
      {$1-\epsilon$\,\reachrankbox{0}} (s0);
  \draw[->] (wt) to[out=225,in=270,looseness=1.2]
    node[plab,below,pos=0.45]
      {$1-\epsilon$\,\reachrankbox{0}} (s0);

  \path (fail) edge[loop right,draw=none]
    node[plab,right] {\phantom{$1$\,\reachrankbox{0}}} (fail);
\end{tikzpicture}%
}
\par
\endgroup
\end{example}

We use spanning forests to express absorption probabilities. 
Let $R$ be a set of absorbing states of a finite chain $P$ on $V$.
A \emph{spanning forest rooted at $R$} selects one outgoing
support edge at each vertex outside $R$.
We require that following the selected edges from any vertex
eventually reaches a root in $R$.
If this root is $r$, we say that the vertex \emph{leads to $r$}
in the forest.
Each root leads to itself.
We write $\Forests$ for these forests.
For $R'\subseteq R$, we write $\Forests_{R'}(u)$ for the forests
in which $u$ leads to a root in $R'$.
The weight and rank of a forest are
\begin{align*}
  \wt(F)&=\prod_{(v,w)\in F}P(v,w),
  &\rk(F)&=\sum_{(v,w)\in F}\rk(P(v,w)).
\end{align*}

The following lemma combines the classical forest
formula~\cite[Chapter~6, Lemma~3.3]{FreidlinWentzell12}
with the rank rules.

\begin{lemma}
\label{lem:forest-formula}
Let $P$ be a finite Puiseux Markov chain on $V$ with
absorbing states $R$.
Suppose every state has a support path to $R$.
Then $\sum_{F\in\Forests}\wt(F)>0$, and for all
$u\in V$ and $R'\subseteq R$,
\begin{align}
  \Prb{u}{P}(\Reach{R'})
    &=\frac{\sum_{F\in\Forests_{R'}(u)}\wt(F)}
            {\sum_{F\in\Forests}\wt(F)},
      \label{eq:forest-formula-short}\\
  \rk\bigl(\Prb{u}{P}(\Reach{R'})\bigr)
    &=\min_{F\in\Forests_{R'}(u)}\rk(F)
      -\min_{F\in\Forests}\rk(F).
      \label{eq:rank-criterion-short}
\end{align}
\end{lemma}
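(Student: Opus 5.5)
The plan is to first establish \eqref{eq:forest-formula-short} over the reals and then carry it to $\K$ by transfer (\Cref{lem:puiseux-calculus}). Finally we read off ranks via the rank rules, using that all forest weights are nonnegative so that no cancellation occurs.

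\textbf{Step 1: positivity of the denominator.} Since every state has a support path to $R$, we build one forest explicitly. Take a shortest-path (BFS) tree towards $R$ in the support graph. At each $v\notin R$ we select the first edge of a shortest support path to $R$; this strictly decreases the distance, so following selected edges reaches $R$. Its weight is a product of positive entries of $\K$, hence positive. All other forest weights are nonnegative, so $\sum_{F\in\Forests}\wt(F)>0$.

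\textbf{Step 2: the formula.} For fixed $V$, $R$ and support pattern, the statement ``the vector $q_u=\sum_{F\in\Forests_{R'}(u)}\wt(F)/\sum_F\wt(F)$ is the least nonnegative solution of \eqref{eq:reach-system}'' is a first-order sentence in the entries of $P$. The forests are a finite fixed set, so the sums are polynomials. Hence by transfer it suffices to prove it for real stochastic matrices. Over $\mathbb R$ this is the Markov chain tree theorem for absorption, \cite[Chapter~6, Lemma~3.3]{FreidlinWentzell12}.

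For a self-contained route, I would check directly that $q$ satisfies \eqref{eq:reach-system}. For $v\notin R$, expand $\sum_w P(v,w)\sum_{F\in\Forests_{R'}(w)}\wt(F)$ by removing $v$'s out-edge in $F$ and adding $(v,w)$. This is a bijection argument on forests, with cycle-creating terms cancelling in pairs, or equivalently it is the matrix-tree theorem applied to $I-P$ restricted to $V\setminus R$. Since every state reaches $R$, the transient block $I-Q$ is invertible, so the solution is unique and therefore least. I expect this combinatorial verification to be the main obstacle. I would simply cite Freidlin–Wentzell for the real case and spend effort only on the transfer step. The transfer needs the least-solution characterisation to be expressible, which it is by quantifying over all nonnegative solutions, as in the proof of \Cref{lem:puiseux}.

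\textbf{Step 3: ranks.} Each $\wt(F)$ is a product of nonnegative series, so $\rk(\wt(F))=\sum_{(v,w)\in F}\rk(P(v,w))=\rk(F)$. A sum of nonnegative series has rank equal to the minimum rank of its terms, since leading coefficients are positive and cannot cancel. Thus the numerator has rank $\min_{F\in\Forests_{R'}(u)}\rk(F)$ and the denominator has rank $\min_{F\in\Forests}\rk(F)$. The denominator is finite by Step 1. Applying $\rk(x/y)=\rk(x)-\rk(y)$ gives \eqref{eq:rank-criterion-short}. If $\Forests_{R'}(u)$ is empty, both sides are $\infty$, with the probability equal to $0$.
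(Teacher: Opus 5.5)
Your proposal is correct and follows the paper's proof essentially step by step. You build a shortest-path forest to show the denominator is positive, transfer the Freidlin--Wentzell forest formula to $\K$ via \Cref{lem:puiseux-calculus}, and then apply the product, sum and quotient rank rules, using that nonnegative terms cannot cancel. One small imprecision in your optional self-contained route: \eqref{eq:reach-system} is not uniquely solvable as stated, because absorbing states in $R\setminus R'$ only impose $q_v=q_v$; uniqueness holds on $V\setminus R$ once $q$ is fixed to $0$ on $R\setminus R'$, which is exactly what leastness enforces.
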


\begin{proof}[Proof sketch]
At each nonroot, we choose the first edge of a shortest support
path to $R$.
These edges form a forest of positive weight, so the denominator
is positive.
The classical forest formula~\cite{FreidlinWentzell12}
transfers to $\K$ by~\Cref{lem:puiseux-calculus} and
gives~\eqref{eq:forest-formula-short}.

Each forest weight has rank $\rk(F)$.
Their positive leading coefficients cannot cancel.
Thus, each nonzero sum has the minimum rank of its terms.
Division subtracts ranks and gives~\eqref{eq:rank-criterion-short}.
\end{proof}

\begin{example}
\label{ex:forest-formula}
The diagram below shows a chain with roots $3,4$ and its three
spanning forests.
Bold arrows mark the chosen edges.
The forests have weights $1/8$, $3/8$, and $1/8$.
Only the last two lead from $1$ to $3$.
By~\eqref{eq:forest-formula-short},
$\Prb{1}{P}(\Reach{\{3\}})
=\frac{3/8+1/8}{1/8+3/8+1/8}=\frac45$.

\begingroup
\centering
\resizebox{.96\linewidth}{!}{%
\begin{tikzpicture}[
    x=1cm,y=1cm,font=\fontsize{8}{9}\selectfont,
    >={Stealth[length=1.7mm,width=1.5mm]}]
  \tikzset{
    mcstate/.style={
      circle,draw,fill=white,
      minimum size=5.6mm,inner sep=0.5pt
    },
    mcroot/.style={mcstate,double,double distance=0.6pt},
    tstate/.style={
      circle,draw,fill=white,
      minimum size=4.2mm,inner sep=0pt,
      font=\fontsize{7}{8}\selectfont
    },
    troot/.style={tstate,double,double distance=0.5pt},
    tlab/.style={
      align=center,font=\fontsize{7}{8}\selectfont,inner sep=1pt
    },
    unused/.style={
      ->,gray!45,line width=0.4pt,
      >={Stealth[length=1.2mm,width=1mm]}
    },
    chosen/.style={
      ->,line width=1pt,
      >={Stealth[length=1.9mm,width=1.7mm]}
    },
    loopy/.style={
      ->,>={Stealth[length=1.3mm,width=1.1mm]}
    },
    box/.style={
      draw=gray!50,fill=gray!7,rounded corners=3pt
    }
  }

  \node[mcstate] (s1) at (0,0) {$1$};
  \node[mcstate] (s2) at (1.5,0) {$2$};
  \node[mcroot] (s3) at (0,-1.05) {$3$};
  \node[mcroot] (s4) at (1.5,-1.05) {$4$};

  \draw[->] (s1) to[bend left=25]
    node[above,inner sep=1pt] {$\tfrac12$} (s2);
  \draw[->] (s2) to[bend left=25]
    node[below,inner sep=1pt] {$\tfrac34$} (s1);
  \draw[->] (s1) --
    node[left,inner sep=1pt] {$\tfrac12$} (s3);
  \draw[->] (s2) --
    node[right,inner sep=1pt] {$\tfrac14$} (s4);
  \draw[loopy] (s3)
    to[out=-115,in=-65,min distance=3.5mm]
    node[below,inner sep=1pt] {$1$} (s3);
  \draw[loopy] (s4)
    to[out=-115,in=-65,min distance=3.5mm]
    node[below,inner sep=1pt] {$1$} (s4);

  \foreach \x/\name in {2.7/A,4.33/B,5.96/C} {
    \draw[box] (\x-0.33,0.32) rectangle (\x+1.13,-1.95);
    \node[tstate] (\name1) at (\x,0) {$1$};
    \node[tstate] (\name2) at (\x+0.8,0) {$2$};
    \node[troot] (\name3) at (\x,-0.75) {$3$};
    \node[troot] (\name4) at (\x+0.8,-0.75) {$4$};

    \draw[unused] (\name1) to[bend left=25] (\name2);
    \draw[unused] (\name2) to[bend left=25] (\name1);
    \draw[unused] (\name1) -- (\name3);
    \draw[unused] (\name2) -- (\name4);
    \draw[unused] (\name3)
      to[out=-115,in=-65,min distance=2.5mm] (\name3);
    \draw[unused] (\name4)
      to[out=-115,in=-65,min distance=2.5mm] (\name4);
  }

  \draw[chosen] (A1) to[bend left=25] (A2);
  \draw[chosen] (A2) -- (A4);
  \draw[chosen] (B2) to[bend left=25] (B1);
  \draw[chosen] (B1) -- (B3);
  \draw[chosen] (C1) -- (C3);
  \draw[chosen] (C2) -- (C4);

  \node[tlab] at (3.1,-1.55)
    {$\wt(F)=\tfrac18$\\$1$ leads to $4$};
  \node[tlab] at (4.73,-1.55)
    {$\wt(F)=\tfrac38$\\$1$ leads to $3$};
  \node[tlab] at (6.36,-1.55)
    {$\wt(F)=\tfrac18$\\$1$ leads to $3$};
\end{tikzpicture}%
}
\par
\endgroup
\end{example}

\subsubsection{Small ranks}
\label{sec:small-ranks}

A \emph{strongly connected component} (SCC) is a maximal set of mutually
reachable states.
A \emph{bottom} SCC has no outgoing support edge.
For each opponent response, we make the bottom SCCs of its chain absorbing.
Failure then means reaching a bottom SCC outside $\target$.
By \eqref{eq:rank-criterion-short}, this probability is infinitesimal
exactly when its forest numerator has a larger minimum rank than the
denominator.
Thus, winning depends only on the action ranks.

\begin{lemma}
\label{lem:small-ranks}
Let $N=|\{(p,s,a):p\in\team,\ s\in\states,\ a\in A_p\}|$.
If the team wins limit surely, it has a Puiseux-winning strategy whose
finite action ranks are integers between $0$ and $N^{N/2}$.
\end{lemma}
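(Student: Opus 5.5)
We reduce the statement to a question about a polyhedron defined by linear inequalities with small integer coefficients, and then bound a vertex of that polyhedron. By \Cref{lem:puiseux} and the integer-exponent normalisation, the team has a Puiseux-winning strategy with nonnegative integer action ranks $r=(r_{p,s,a})$. By \eqref{eq:realisation}, any rank vector with the same support and the same winning behaviour is realised by a Puiseux strategy. So it suffices to show that the set of rank vectors that are \emph{winning with a fixed support} contains a point with entries in $\{0,\dots,N^{N/2}\}$, plus $\infty$ on the fixed support complement.

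First we make the winning condition combinatorial and linear. Fix the support (the set of triples with finite rank). Then for every opponent response $\stratopp\in D$ the support graph of the induced chain is fixed. This is because each transition rank \eqref{eq:transition-rank} is finite exactly when some supported profile yields the edge. Hence the bottom SCCs, the set $\Trap$ of failing bottom SCCs, and the forest families $\Forests$ and $\Forests_{R'}(\sinit)$ are also fixed.

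By \Cref{lem:forest-formula}, $r$ is winning iff for every $\stratopp$ the failure probability has positive rank:
\[
\min_{F\in\Forests_{\Trap}(\sinit)}\rk(F) > \min_{F\in\Forests}\rk(F).
\]
Each $\rk(F)$ is a sum over edges of a minimum over supported profiles of a sum of action ranks. So the condition has the form "there exists a forest $F^*$ and profile choices for its edges such that, for all forests $F\in\Forests_{\Trap}(\sinit)$ and all profile choices for $F$, a linear inequality holds". Each such inequality has integer coefficients.

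Next we linearise. We guess the witnesses from the original solution $r$: for each $\stratopp$, a minimising forest $F^*_\stratopp$; for each of its edges, a minimising profile; and for each state--action pair, which action has rank $0$. Then $r$ lies in the polyhedron $Q$ defined by the following constraints:
\begin{itemize}
\item $r_{p,s,a}\ge 0$, with equality for the chosen rank-$0$ action at each $(p,s)$;
\item $\sum(\text{ranks along }F)\ge \sum(\text{ranks along }F^*_\stratopp)+1$ for every $\stratopp$, every failing forest $F$ and every profile choice on $F$ (using integrality of $r$ for the $+1$);
\item the chosen profiles for $F^*_\stratopp$ realise its edge minima and $F^*_\stratopp$ is a minimum forest, again expressed as "$\le$ every alternative sum".
\end{itemize}
Every point of $Q$, integral or not, is winning. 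Indeed, the inequalities force the failure numerator rank to exceed the denominator rank for each response. Since only a minimum is needed, rational points are fine, and we rescale afterwards if necessary. Exponentially many constraints do no harm, because we only need a vertex bound.

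The remaining step, and the main obstacle, is to bound a vertex of $Q$ so that the entries are integers at most $N^{N/2}$. $Q$ lies in the nonnegative orthant and is nonempty, so it has a vertex $v$. By Cramer's rule, $v$ is determined by $N$ tight constraints: $v_i=\det(M_i)/\det(M)$. Here every row of $M$ is a difference of $0/1$-type incidence vectors of profiles, and the right-hand side lies in $\{0,1\}$. Hadamard's inequality bounds these determinants by roughly $N^{N/2}$, provided each row has $\ell_2$-norm at most about $\sqrt N$.

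This is the delicate part. Rows are differences of sums over forest edges of profile indicators. These can have coefficients up to the number of states and cancellations, so their norm need not be $\le\sqrt N$. I would first try to reorganise the constraints with auxiliary edge-rank variables or a potential-style reformulation, so that each row has entries in $\{-1,0,1\}$. Otherwise I would accept a slightly weaker polynomial-bit bound, which suffices downstream.

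Finally, to get integer ranks, we multiply $v$ by the common denominator $\det(M)$. This preserves all the homogeneous strict inequalities, and it scales the $+1$ constraints favourably. The resulting integers are then bounded by $|\det(M_i)|\le N^{N/2}$, as claimed.
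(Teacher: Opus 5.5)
Your plan matches the paper's proof in structure. You fix the support and choose, for each response $\stratopp$, a minimum forest $F^*_\stratopp$ with a minimising profile choice. You require every failing forest, under every profile choice, to cost at least one more. Then you take a vertex and rescale. The gap is the decisive quantitative step, which you leave open, and your reason for thinking it is delicate is mistaken. You say rows can have coefficients up to the number of states, and you offer a reformulation or a weaker polynomial-bit bound as fallbacks. Neither proves the stated bound $N^{N/2}$.

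The missing observation is that the variables $r_{p,s,a}$ are indexed by the state $s$. A spanning forest has exactly one outgoing edge at each non-root state. A profile choice $\alpha(u)$ picks exactly one action per team player at $u$. Hence the cost $\sum_{(u,v)\in F}\sum_{p\in\team} r_{p,u,\alpha(u)_p}$ has a coefficient vector in $\{0,1\}^I$: the coefficient of $r_{p,s,a}$ is $1$ exactly when $s$ is a non-root and $\alpha(s)_p=a$. Distinct edges of a forest have distinct sources, so they touch disjoint variables and no coefficient can accumulate.

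Consequently, every separation row is a difference of two $0/1$ vectors, so it lies in $\{-1,0,1\}^k$ with right side $1$. The nonnegativity and zero-rank constraints are also of this form, with right side $0$. With $k=|I|\le N$ support variables, Cramer and Hadamard give $|\det A_i|\le k^{k/2}\le N^{N/2}$ directly, with no auxiliary variables. When you scale, multiply by $|\det A|$ rather than $\det A$, so that nonnegativity and the direction of the inequalities are preserved.

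Your third family of constraints is also unnecessary, though harmless since those rows are again in $\{-1,0,1\}$ with right side $0$. These are the constraints saying $F^*_\stratopp$ is a minimum forest and its profiles realise the edge minima. For every $r$ one automatically has $\min_{F'\in\Forests}\rk(F')\le\rk(F^*_\stratopp)\le C_\stratopp(F^*_\stratopp,\alpha^*_\stratopp;r)$. Together with the separation constraints, this already yields the rank criterion for every point of the polyhedron.
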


\begin{proof}[Proof sketch]
We prove it by a system of linear inequalities.
We fix the support of a winning strategy.
We choose one team profile for each forest edge.
The resulting rank is a sum of action ranks, each used at most once.
For each opponent response, fix a minimum forest term of the original
winning strategy.
We require every failure term to exceed this term by at least $1$.
These inequalities have coefficients in $\{-1,0,1\}$.
Their right sides are $1$.
We also keep nonnegativity and the zero ranks.
Their right sides are $0$.

We take a vertex of this nonempty system and scale it to an integer vector.
The scaled vector satisfies the same inequalities.
Cramer's rule and Hadamard's inequality bound its entries by
$N^{N/2}$.
Indeed, at a vertex we select an invertible matrix
$A\in\{-1,0,1\}^{k\times k}$ with $k\leq N$ and a right side
$b\in\{0,1\}^k$.
Multiplying the vertex by $|\det A|$ gives integer entries bounded
by $k^{k/2}$.
The multiplier is at least $1$, so the inequalities and zero ranks
are preserved.
Each rank needs $O(N\log(N+1))$ bits.
The realisation in \eqref{eq:realisation} is winning because only ranks matter.
\end{proof}

\subsection{Puiseux Markov Decision Processes}
\label{sec:puiseux-mdp}

Different opponent responses may produce different bottom SCCs.
We define a failure region that contains every bottom SCC
outside the target, under any opponent response.
Making this region and the target absorbing gives the same roots
for all forest comparisons.

\subsubsection{The induced Puiseux MDP}

We fix action ranks and their realisation as in \eqref{eq:realisation}.
Let $\imdp$ be the induced MDP, with transitions $P(s,b)(t)$ and
ranks as in \eqref{eq:transition-rank}.
Each state and action has a successor of rank zero, because its
transition probabilities sum to $1$.

\subsubsection{Traps and the merged MDP}
\label{sec:traps}
\label{sec:forests}

An \emph{end component} is a pair $(X,\Gamma)$, where $X$ is a nonempty set of states and $\Gamma(x)$ is a nonempty
set of opponent actions at each $x\in X$, such that (1) every action in $\Gamma(x)$ has all its successors in $X$, and (2)
these actions make $X$ strongly connected.
We call $(X,\Gamma)$ a \emph{maximal end component} (MEC) if
$X$ and every $\Gamma(x)$ are inclusion-maximal.
MECs can be computed from an MDP in polynomial
time~\cite{CH14}.
Put
\begin{align*}
  \Trap=\bigcup\{X:(X,\Gamma)\text{ is a MEC and }X\cap\target=\emptyset\}.
\end{align*}
The opponent can keep the play in $\Trap$ once it enters.
For example, a non-target state with a deterministic self-loop action
belongs to $\Trap$, even if another action reaches the target.

We make $\Trap$ absorbing and obtain $\imdp^\sharp$.
We write $\Prb{u}{\stratopp}$ for its probabilities under $\stratopp$.
The next lemma gives the winning criterion.

\begin{lemma}
\label{lem:trap}
Every response reaches $\target\cup\Trap$ with probability $1$ from
every state in $\imdp^\sharp$.
The strategy $\stratteam$ is Puiseux-winning if and only if
$
  \st\bigl(\Prb{\sinit}{\stratopp}(\Reach{\Trap})\bigr)=0$
for every $\stratopp\in D$.
\end{lemma}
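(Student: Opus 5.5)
The plan has two parts: show that absorption into $\target\cup\Trap$ is certain, and then show that reaching $\Trap$ in $\imdp^\sharp$ is the same event, up to probability, as failing in the original induced chain.

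For the first claim, I would use the standard end-component theorem for finite MDPs~\cite{CH14}. Under any response, the set of states and actions visited infinitely often forms an end component with probability~$1$; for memoryless deterministic responses this is just a bottom SCC of the induced chain. In $\imdp^\sharp$, an end component containing a state of $\target$ is the absorbing target singleton. An end component disjoint from $\target$ is contained in some MEC avoiding $\target$, so its states lie in $\Trap$, and $\Trap$ states are absorbing. So every play eventually stays in $\target\cup\Trap$, and $\Prb{u}{\stratopp}(\Reach{(\target\cup\Trap)})=1$. By transfer (\Cref{lem:puiseux-calculus}), this holds over $\K$ as well. The same conclusion also follows because the support structure, and hence the end components, depends only on which ranks are finite.

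For the equivalence, fix $\stratopp\in D$ and compare the chain $P_{\stratteam,\stratopp}$ in $\imdp$ with its modification in $\imdp^\sharp$. The two chains agree until $\Trap$ is first entered.

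Suppose $\st(\Prb{\sinit}{\stratopp}(\Reach{\Trap}))=0$ for every $\stratopp$. A play in the original chain that avoids $\Trap$ coincides with the modified play, and by the first part it reaches $\target$ almost surely. Hence the original reachability probability is at least $1-\Prb{\sinit}{\stratopp}(\Reach{\Trap})$, and its standard part is $1$.

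Conversely, suppose the strategy is Puiseux-winning but some $\stratopp$ reaches $\Trap$ with non-infinitesimal probability $c$. This direction is the main obstacle, since a memoryless response might pass through $\Trap$ and later reach $\target$. I would handle it by switching the response. Define $\stratopp'$ to agree with $\stratopp$ outside $\Trap$ and to play, inside each trap MEC $(X,\Gamma)$, an action from $\Gamma(x)$ at each $x\in X$. This is still a deterministic memoryless response, so $\stratopp'\in D$. Under $\stratopp'$, once the play enters $X$ it never leaves, because every action in $\Gamma(x)$ has all its successors in $X$. Since $X\cap\target=\emptyset$, the play then never reaches $\target$. If several MECs overlap this choice is still well defined, because distinct MECs are disjoint.

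The distribution of the first hitting time of $\Trap$ is unchanged by this switch, since the two responses agree outside $\Trap$. Hence the original winning probability under $\stratopp'$ is at most $1-c$, with $\st(c)>0$. This contradicts Puiseux-winning, because that requires standard part $1$ against every response in $D$, including $\stratopp'$.
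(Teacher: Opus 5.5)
Your proof is correct and follows essentially the same route as the paper. The first claim and the inequality $1-\Prb{\sinit}{\stratopp}(\Reach{\target})\leq\Prb{\sinit}{\stratopp}(\Reach{\Trap})$ both come from the fact that every target-free bottom SCC (end component) lies in $\Trap$. The converse uses exactly the paper's switch to a response $\stratopp'$ that agrees with $\stratopp$ outside $\Trap$ and plays actions from $\Gamma(x)$ inside each target-free MEC.

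The one cosmetic difference is that the paper argues with bottom SCCs of the chain in $\imdp$, each of which is an end component of $\imdp$. You argue with end components of $\imdp^\sharp$, which leaves implicit one small step: an end component of $\imdp^\sharp$ avoiding $\target$ is either a trap singleton or an end component of $\imdp$.
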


\begin{proof}[Proof sketch]
Every non-target bottom SCC of a fixed response is contained in a
target-free MEC.
Thus, the play reaches $\target\cup\Trap$ with probability $1$.
Failure in the original MDP implies visiting $\Trap$.
Conversely, any response can be changed inside $\Trap$ to keep the
play there forever.
This change preserves the probability of entering $\Trap$.
Hence, a response with non-infinitesimal trap probability prevents
Puiseux winning.
\end{proof}

If $\sinit\in\target$ or $\Trap=\emptyset$, the strategy wins.
If $\sinit\in\Trap$, or no target is reachable in $\imdp^\sharp$,
it loses.
Otherwise, we merge $\target$ into $\tsink$ and $\Trap$ into $\fsink$.
The resulting MDP $\mergedmdp$ has states
$V=U\cup\{\tsink,\fsink\}$, where
$U=\states\setminus(\target\cup\Trap)$.
Ranks into a merged state are the minimum ranks into its members.
For example,
$ \rk(u,b,\fsink)=\min_{x\in\Trap}\rk(u,b,x)
$.

Its edge set $E$ contains the triples $e=(u,b,v)$ of finite rank.
Thus, the team ranks determine the graph for all responses.

We fix a response $\stratopp:U\to A_{\opp}$.
A root-forest is \emph{consistent} with $\stratopp$ if its edge at
$u$ uses $\stratopp(u)$.
It is \emph{trapped} if $\sinit$ leads to $\fsink$ and
\emph{targeted} if $\sinit$ leads to $\tsink$.
We write $\mathcal F_{\stratopp}$ for the consistent forests.
Their minimum trapped and targeted ranks are
\begin{align*}
  \rho_{\stratopp}^{\fsink}
    &=\min\{\rk(F):F\in\mathcal F_{\stratopp},\ \sinit\text{ leads to }\fsink\},\\
  \rho_{\stratopp}^{\tsink}
    &=\min\{\rk(F):F\in\mathcal F_{\stratopp},\ \sinit\text{ leads to }\tsink\}.
\end{align*}

\begin{lemma}
\label{lem:forest}
The strategy $\stratteam$ is Puiseux-winning if and only if
$\rho_{\stratopp}^{\fsink}>\rho_{\stratopp}^{\tsink}$ for every response $\stratopp$.
For a fixed response, both minima are computable in polynomial time.
\end{lemma}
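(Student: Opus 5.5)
The plan is to combine \Cref{lem:trap} with the rank form of the forest formula, \Cref{lem:forest-formula}, applied to the Markov chain that a fixed response $\stratopp\in D$ induces on $\mergedmdp$. Fix $\stratopp$ and let $P$ be this Puiseux chain on $V$, with absorbing roots $R=\{\tsink,\fsink\}$. Merging $\target$ into $\tsink$ and $\Trap$ into $\fsink$ does not change the probability of entering $\Trap$ from $\sinit$. By \Cref{lem:trap}, every state of $U$ reaches $\target\cup\Trap$ with probability $1$, so every state has a support path to $R$ and the hypothesis of \Cref{lem:forest-formula} holds. The spanning forests of $P$ rooted at $R$ are exactly the root-forests consistent with $\stratopp$: the edge at $u$ is an edge of positive probability under $\stratopp(u)$, and its rank is the merged transition rank. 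Hence $\Forests=\mathcal F_{\stratopp}$, and every forest is either trapped or targeted.

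Next I would translate the winning criterion into ranks. \Cref{lem:forest-formula} gives
\[
\rk\bigl(\Prb{\sinit}{\stratopp}(\Reach{\fsink})\bigr)=\rho^{\fsink}_{\stratopp}-\min(\rho^{\fsink}_{\stratopp},\rho^{\tsink}_{\stratopp}),
\]
using that the global minimum is the smaller of the trapped and targeted minima. A probability in $[0,1]$ has standard part $0$ exactly when its rank is positive, with rank $\infty$ for probability zero. So $\st(\Prb{\sinit}{\stratopp}(\Reach{\Trap}))=0$ holds if and only if $\rho^{\fsink}_{\stratopp}>\min(\rho^{\fsink}_{\stratopp},\rho^{\tsink}_{\stratopp})$, which is equivalent to $\rho^{\fsink}_{\stratopp}>\rho^{\tsink}_{\stratopp}$. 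This also covers the case $\rho^{\fsink}_{\stratopp}=\infty$, since some forest exists and so $\rho^{\tsink}_{\stratopp}<\infty$. Quantifying over all $\stratopp\in D$ and invoking \Cref{lem:trap} yields the equivalence.

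For the complexity claim, I fix $\stratopp$ and consider the directed graph on $V$ whose edges from each $u\in U$ are the edges $(u,\stratopp(u),v)$ of finite rank, weighted by their rank. A consistent forest is a minimum-weight spanning arborescence directed toward the roots. I would add a super-root $r^*$ with zero-weight edges from $\tsink$ and $\fsink$, reverse all edges, and apply Edmonds' algorithm~\cite{Edmonds67} rooted at $r^*$. To impose that $\sinit$ leads to $\fsink$, I would delete the edge from $\tsink$ to $r^*$ and instead make $\tsink$ choose an edge into $\fsink$ of weight $0$. The edges out of $\tsink$ are then ignored, and the path from $\sinit$ to $r^*$ must pass through $\fsink$ directly. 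Symmetrically, I would force $\sinit$ to lead to $\tsink$. Both instances have polynomial size, and the ranks have polynomially many bits, so both minima are computable in polynomial time. An infeasible instance gives the value $\infty$.

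I expect the main obstacle to be the correctness of this gadget. A path from $\sinit$ could pass through $\tsink$ and then continue to $\fsink$, since $\tsink$ now has the out-edge to $\fsink$. This means ``leads to $\fsink$'' has to be redefined as ``the first root on the path is $\fsink$''. To avoid this, I would instead observe that the forest structure splits $U$ into the part attached to $\fsink$ and the part attached to $\tsink$. A cleaner approach is then to merge the two roots into one root while keeping track of which one $\sinit$ reaches. Concretely, I would guess the last edge of $\sinit$'s path into the desired root. I would fix $\sinit$'s entire path by a contraction argument. If that fails, I would simply identify $\tsink$ and $\fsink$ into a single root and rerun Edmonds' algorithm, with the edges into the undesired root removed only for the states on $\sinit$'s path. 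Pinning this step down rigorously is where I would spend the most care.
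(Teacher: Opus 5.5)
Your derivation of the criterion is correct and matches the paper. You use \Cref{lem:trap} to get support paths to the two roots, and \eqref{eq:rank-criterion-short} to get the failure rank $\rho^{\fsink}_{\stratopp}-\min\{\rho^{\fsink}_{\stratopp},\rho^{\tsink}_{\stratopp}\}$. Your handling of the infinite cases is also right.

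The gap is in the polynomial-time claim. Your gadget replaces $\tsink\to r^*$ by a zero-weight edge $\tsink\to\fsink$, and this imposes no constraint at all. Every consistent forest, with $\tsink\to\fsink\to r^*$ appended, is an arborescence of the same weight, and conversely. So Edmonds' algorithm returns $\min\{\rho^{\fsink}_{\stratopp},\rho^{\tsink}_{\stratopp}\}$, not $\rho^{\fsink}_{\stratopp}$.

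Your fallbacks do not repair this. Guessing $\sinit$'s path, or deleting edges into the undesired root ``only along $\sinit$'s path'', presupposes that path, which is itself part of what is being optimised. There are exponentially many candidate paths, and guessing is not a deterministic procedure. Nor could a guessed forest certify the lower bound on $\rho^{\fsink}_{\stratopp}$ that the final comparison needs.

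The missing idea is to make ``$\sinit$ leads to the wrong root'' equivalent to ``the chosen edges contain a cycle''. Delete $\tsink$ and redirect every edge into $\tsink$ to $\sinit$. In your gadget, this amounts to giving $\tsink$ its zero-weight out-edge to $\sinit$ instead of to $\fsink$.

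Choosing one consistent edge at each $u\in U$ then yields a spanning arborescence towards $\fsink$ exactly when the original choice is a forest in which $\sinit$ leads to $\fsink$:
\begin{itemize}
\item If $\sinit$ led to $\tsink$, the redirected edge closes a cycle through $\sinit$.
\item If $\sinit$ leads to $\fsink$, any path that hits a redirected edge continues along $\sinit$'s own path. That path never meets $\tsink$, so no cycle appears and every vertex ends at $\fsink$.
\item Conversely, an arborescence cannot use a redirected edge on $\sinit$'s own path without returning to $\sinit$. Restoring those edges therefore gives a forest in which $\sinit$ leads to $\fsink$.
\end{itemize}
Edge ranks are preserved by the redirection. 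Hence a single minimum-weight arborescence computation gives $\rho^{\fsink}_{\stratopp}$, with $\infty$ if none exists. Exchanging the roles of the two roots gives $\rho^{\tsink}_{\stratopp}$.
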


The criterion follows from \eqref{eq:rank-criterion-short}, since the
failure probability has rank
$\rho_{\stratopp}^{\fsink}-\min\{\rho_{\stratopp}^{\fsink},\rho_{\stratopp}^{\tsink}\}$.
Each minimum reduces to a minimum-weight directed spanning tree
problem~\cite{Edmonds67,ChuLiu65,Tarjan77}.
For the trapped minimum, we delete $\tsink$ and redirect its
incoming edges to $\sinit$.
A consistent forest becomes a tree directed towards $\fsink$
exactly when $\sinit$ led to $\fsink$.
Otherwise, the redirection creates a cycle through $\sinit$.
We exchange the roots to compute the targeted minimum.
Both transformations preserve edge ranks.

\subsection{The \texorpdfstring{$\mathsf{NP}$}{NP} Upper Bound}
\label{sec:np-upper-bound}

There may be exponentially many opponent responses.
For fixed team ranks, an $\NP$ oracle can guess a violating
response and check it in polynomial time.
Guessing the team ranks and asking whether such a response exists gives an $\NP^{\NP}$ upper bound.
To obtain an $\NP$ upper bound, we guess the team ranks together
with one response in an encoded MDP.
We give a criterion that can be checked in polynomial time
and certifies that this response is optimal.

\subsubsection{The encoded MDP}
We index the edges of $\mergedmdp$ as $e_0,\ldots,e_{m-1}$.
For $J\subseteq E$, we put
\begin{align*}
  \widehat{\rk}(e_i)&=2^{m+1}\rk(e_i)+2^i,
  &\widehat{\rk}(J)&=\sum_{e\in J}\widehat{\rk}(e).
\end{align*}
The powers of two distinguish edge sets.
Their sum is less than $2^m$, so the encoding preserves strict
comparisons of original ranks.
All encoded ranks have polynomially many bits.

Let $E(u,b)$ be the edges with source $u$ and label $b$.
We keep the support of $\mergedmdp$ and replace its probabilities by
\begin{align}
  \widetilde P(u,b,v)=
      \epsilon^{\widehat{\rk}(u,b,v)}/Z_{u,b},
      \quad \text{where }
  Z_{u,b}=\sum_{e\in E(u,b)}\epsilon^{\widehat{\rk}(e)}.
      \notag
      \label{eq:generic-transition}
\end{align}
Other probabilities are zero and the roots are absorbing.
The encoded ranks are exponents before normalisation.
For a response $\stratopp$, let $q_v^{\stratopp}$ be its probability
of reaching $\fsink$ in this encoded MDP.
The opponent maximises this probability.

We write $\mathcal F_{\stratopp}(v)$ for the forests leading
from $v$ to $\fsink$.
We put
$
  N_v^{\stratopp}
    =\sum_{F\in\mathcal F_{\stratopp}(v)}
         \epsilon^{\widehat{\rk}(F)},
  \ D^{\stratopp}
    =\sum_{F\in\mathcal F_{\stratopp}}
         \epsilon^{\widehat{\rk}(F)}.
$
Each forest uses one edge per nonroot.
Thus, every forest weight has the same normalising factors,
which cancel in~\eqref{eq:forest-formula-short}.
Since $D^{\stratopp}>0$, we obtain
$q_v^{\stratopp}=N_v^{\stratopp}/D^{\stratopp}$.

\subsubsection{One stable response suffices}

A response $\tau$ is \emph{stable} if
\begin{equation}
\label{eq:stable-policy}
  q_u^{\stratopp}\geq
    \sum_{v\in V}\widetilde P(u,b,v)q_v^{\stratopp}
  \qquad\text{for all }u\in U\text{ and }b\in A_{\opp}.
\end{equation}
The right side is the value of playing $b$ once and then following
$\stratopp$.
Equality holds for $b=\stratopp(u)$.
These local inequalities suffice because every response has a support
path to a root from each state.

\begin{lemma}
\label{lem:np-stable-optimal}
A stable response exists.
Every stable response is optimal from every state.
\end{lemma}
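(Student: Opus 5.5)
Both claims will be proved with the standard MDP arguments (policy iteration and a maximum principle), carried out over the real closed field $\K$. Transfer (\Cref{lem:puiseux-calculus}) makes these arguments valid over $\K$, since they use only finitely many polynomial identities and inequalities. The key structural fact, already used for \Cref{lem:trap}, is that in the encoded MDP every response reaches $\{\tsink,\fsink\}$ with probability $1$ from every state. Indeed, $\Trap$ is collapsed into $\fsink$, and $U$ contains no target-free end component. The encoded MDP has the same support as $\mergedmdp$, so this carries over unchanged.

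\emph{Optimality of stable responses.} Fix a stable $\stratopp$ and an arbitrary $\stratopp'\in D$, and write $\widetilde P_{\stratopp'}$ for the chain induced by $\stratopp'$. Apply \eqref{eq:stable-policy} with $b=\stratopp'(u)$ at every $u\in U$. This gives $q^{\stratopp}\geq \widetilde P_{\stratopp'} q^{\stratopp}$ on $U$, with $q^{\stratopp}_{\fsink}=1$ and $q^{\stratopp}_{\tsink}=0$ at the roots. So $q^{\stratopp}$ is superharmonic for $\widetilde P_{\stratopp'}$. Iterating $k$ times gives $q^{\stratopp}\geq \widetilde P_{\stratopp'}^k q^{\stratopp}$. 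Under $\stratopp'$ the chain is absorbed in the roots with probability $1$. Hence $\widetilde P_{\stratopp'}^k q^{\stratopp}$ converges to the vector of probabilities of absorption in $\fsink$, which is $q^{\stratopp'}$.

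Limits are not first-order, so over $\K$ this step has to be made algebraic. Let $Q$ be the restriction of $\widetilde P_{\stratopp'}$ to $U$. Absorption from every state means $I-Q$ is invertible. Set $d=q^{\stratopp}-q^{\stratopp'}$. Using $q^{\stratopp'}=Q q^{\stratopp'}+\widetilde P_{\stratopp'}(\cdot,\fsink)$, we get $(I-Q)d\geq 0$ coordinatewise on $U$. The matrix $(I-Q)^{-1}=\sum_k Q^k$ is entrywise nonnegative over $\mathbb R$ whenever $Q$ is a nonnegative transient matrix. This statement is first-order, so it transfers to $\K$. Therefore $d\geq 0$, which proves optimality from every state.

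\emph{Existence.} I would run policy iteration over $\K$. Start from any $\stratopp_0\in D$. If $\stratopp_j$ is not stable, switch every violating state $u$ to an action $b$ that strictly increases $\sum_v \widetilde P(u,b,v)q_v^{\stratopp_j}$, and keep $\stratopp_j$ elsewhere. The new response $\stratopp_{j+1}$ again reaches the roots almost surely, since every response does. The same $(I-Q)^{-1}\geq 0$ argument then gives $q^{\stratopp_{j+1}}\geq q^{\stratopp_j}$. The inequality is strict at the switched states, because $(I-Q)^{-1}\geq I$ entrywise. So the value vectors strictly increase and no response repeats. Since $D$ is finite, the iteration stops, and it can only stop at a stable response.

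A cleaner alternative is to transfer the real statement ``every finite MDP with real transition parameters in which all memoryless responses are absorbing has a memoryless response satisfying the local optimality inequalities'' directly. This sentence is first-order once we quantify over the finitely many responses in $D$.

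The main obstacle is the step from superharmonicity to domination over a non-Archimedean field, where ``$\widetilde P^k\to$ absorption'' has no direct meaning. The fix is the nonnegativity of $(I-Q)^{-1}$, justified by transfer. The step that needs real care is checking that $I-Q$ is invertible for \emph{every} response $\stratopp'$. This is exactly the almost-sure absorption property, and it depends only on the support. I would prove it from the construction of $\Trap$: a response that never reaches a root would create a bottom SCC inside $U$. That SCC would be a target-free end component, so its states would lie in $\Trap$, contradicting their membership in $U$.
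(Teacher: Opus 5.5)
Your proof is correct and has the same skeleton as the paper's. Optimality is shown by comparing a stable response with an arbitrary $\stratopp'$ through $d=q^{\stratopp}-q^{\stratopp'}$, and existence comes from an improvement step.

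The main difference is the tool that turns local inequalities into $d\geq0$. The paper proves a minimum principle directly over $\K$: if $d$ vanishes at the roots and $d_u\geq\sum_v\widetilde P(u,\stratopp'(u),v)d_v$ on $U$, a negative minimum would propagate along a support path to a root. This uses only that $\K$ is totally ordered and that every state has a support path to $\{\tsink,\fsink\}$, so it needs no limits and no transfer.

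You instead use entrywise nonnegativity of $(I-Q)^{-1}$, transferred from $\mathbb R$. This works, but the transferred sentence must be stated carefully, because ``transient'' is not itself a polynomial condition. One option is to require $Q\geq0$, row sums at most $1$, and $Q^{|U|}\mathbf 1<\mathbf 1$ coordinatewise. Another is to quantify over the finitely many support patterns in which every row has a path to a leaking row.

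For existence, the paper avoids an iteration-and-termination argument. It takes a response maximising $\sum_v q_v^{\stratopp}$ and shows that a single improving switch would increase this sum. Your policy iteration, with strictness from $(I-Q)^{-1}\geq I$, is the algorithmic version of the same step.

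Your argument for absorption is the fact the paper takes from \Cref{lem:trap}. A non-absorbing response would give a bottom SCC inside $U$, hence a target-free end component, hence states of $\Trap$.
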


\begin{proof}[Proof sketch]
We use a \emph{minimum principle}: if $d\in\K^V$ is zero at the roots and satisfies $d_u\geq\sum_v\widetilde P(u,\stratopp(u),v)d_v$ for all 
  $u\in U$, then $d\geq0$.
Otherwise, a negative minimum would propagate along a support
path to a root, where $d$ is zero.

Fix a stable response $\stratopp$ and any response $\stratopp'$.
Stability and the reachability equations show that
$d=q^{\stratopp}-q^{\stratopp'}$ satisfies the minimum principle
under $\stratopp'$.
Hence, $q^{\stratopp}\geq q^{\stratopp'}$, so $\stratopp$ is optimal.

For existence, choose a response maximising $\sum_v q_v^{\stratopp}$.
Such a response exists because there are finitely many responses.
If stability fails at $(u,b)$, switch the action at $u$ to $b$.
The minimum principle shows that no value decreases.
The strict improvement at $(u,b)$ gives a larger value at $u$.
This contradicts maximality of the sum.
\end{proof}

\begin{lemma}
\label{lem:np-one-policy-certificate}
The team strategy is Puiseux-winning if and only if
there exists a stable encoded response $\stratopp$ such that
$\rho_{\stratopp}^{\fsink}>\rho_{\stratopp}^{\tsink}$.
\end{lemma}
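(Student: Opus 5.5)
The plan is to prove the two directions separately. The forward direction follows from earlier lemmas. The backward direction transfers optimality of the stable response in the encoded MDP back to the forest criterion of \Cref{lem:forest}, and the only delicate point is a quantitative use of the encoding.

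\emph{Forward direction.} Suppose $\stratteam$ is Puiseux-winning. By \Cref{lem:np-stable-optimal}, some stable encoded response $\stratopp$ exists. By \Cref{lem:forest}, every response, and in particular this $\stratopp$, satisfies $\rho_{\stratopp}^{\fsink}>\rho_{\stratopp}^{\tsink}$.

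\emph{Backward direction, setup.} Fix a stable $\stratopp$ with $\rho_{\stratopp}^{\fsink}>\rho_{\stratopp}^{\tsink}$, and an arbitrary response $\stratopp'$. By \Cref{lem:forest} it suffices to show $\rho_{\stratopp'}^{\fsink}>\rho_{\stratopp'}^{\tsink}$. I will work with the encoded quantities $q_{\sinit}=N_{\sinit}/D$ and the rank rules. The denominator $D^{\stratopp}$ has rank $\min\{\widehat\rho^{\fsink}_{\stratopp},\widehat\rho^{\tsink}_{\stratopp}\}$, where $\widehat\rho$ denotes the minimum encoded forest rank among consistent trapped or targeted forests. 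Therefore
\[
\rk(q^{\stratopp}_{\sinit})=\widehat\rho^{\fsink}_{\stratopp}-\min\{\widehat\rho^{\fsink}_{\stratopp},\widehat\rho^{\tsink}_{\stratopp}\}.
\]
For any forest $F$ we have $\widehat\rk(F)=2^{m+1}\rk(F)+c_F$ with $0\le c_F<2^m$, since $F$ uses distinct edges. Thus $\widehat\rho=2^{m+1}\rho+c$ with $0\le c<2^m$. Since $\rho^{\fsink}_{\stratopp}-\rho^{\tsink}_{\stratopp}\ge1$ (ranks are integers), this gives
\[
\rk(q^{\stratopp}_{\sinit})\ge 2^{m+1}-2^m+1>2^m .
\]

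\emph{Transfer to $\stratopp'$.} Stability and \Cref{lem:np-stable-optimal} give $q^{\stratopp'}_{\sinit}\le q^{\stratopp}_{\sinit}$. Both quantities are nonnegative Puiseux series, so $\rk(q^{\stratopp'}_{\sinit})\ge\rk(q^{\stratopp}_{\sinit})>2^m$. Write $\rk(q^{\stratopp'}_{\sinit})=2^{m+1}(a-b)+(x-y)$, where $a-b$ is the difference of original ranks and $\abs{x-y}<2^m$. This expression is $0$ if the targeted minimum is smaller. Otherwise it equals the encoded trapped minimum minus the encoded targeted minimum. For it to exceed $2^m$ we must have $a-b\ge1$, that is, $\rho^{\fsink}_{\stratopp'}>\rho^{\tsink}_{\stratopp'}$.

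\emph{Main obstacle.} The expected difficulty is exactly this tie case. A response $\stratopp'$ whose original trapped and targeted minima coincide is a genuine counterexample to Puiseux-winning. Yet its encoded failure probability can still be infinitesimal, because the power-of-two tie-breaking may favour the targeted forest. A purely qualitative argument, based only on whether the standard part is zero, therefore fails. The fix is to track the rank gap quantitatively. The assumed strict inequality for $\stratopp$ forces an encoded gap above $2^m$, and no tie can produce such a gap. I would double-check that the pointwise optimality in \Cref{lem:np-stable-optimal} is stated over the same encoded MDP used for the forest comparison, so that the inequality $q^{\stratopp'}\le q^{\stratopp}$ transfers to ranks as claimed.
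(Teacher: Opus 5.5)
Your proof is correct and is essentially the paper's argument run as a contrapositive. The paper shows that a violating $\stratopp'$ has encoded failure rank at most $2^m-1$, hence a strictly larger value than the stable $\stratopp$ (whose value is zero or of rank at least $2^m+1$), contradicting optimality; you instead push optimality through to $\rk(q^{\stratopp'}_{\sinit})>2^m$ and use that to exclude a violation. One small slip: $\rk(q^{\stratopp'}_{\sinit})$ is $0$ when the encoded \emph{trapped} minimum is the smaller one, not the targeted one, and your gap formula holds only in the other case; this does not affect your conclusion.
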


\begin{proof}[Proof sketch]
By \Cref{lem:forest}, the team strategy is Puiseux-winning
only when every response satisfies the forest criterion.
If this holds, we choose a stable response using
\Cref{lem:np-stable-optimal}.

For the converse, we show that any response violating the criterion
has a larger encoded failure probability than every response
satisfying it.
The encoding multiplies each forest rank by $2^{m+1}$ and adds
a number between $0$ and $2^m-1$.
Applying~\eqref{eq:rank-criterion-short} therefore gives
\begin{align*}
  \rho_{\stratopp}^{\fsink}\leq\rho_{\stratopp}^{\tsink}
  &\implies \rk(q_{\sinit}^{\stratopp})\leq 2^m-1,\\
  \rho_{\stratopp}^{\fsink}>\rho_{\stratopp}^{\tsink}
  &\implies \rk(q_{\sinit}^{\stratopp})\geq 2^m+1.
\end{align*}
Here, $\rk(0)=\infty$ also covers the case where no forest
leads from $\sinit$ to $\fsink$.
A positive Puiseux value with smaller rank is larger.
Thus, a response violating the criterion has a larger failure
probability than one satisfying it.

Now suppose a stable response satisfies the criterion.
If another response violated it, that response would have a larger
failure probability.
This contradicts the optimality of the stable response
from~\Cref{lem:np-stable-optimal}.
Hence every response satisfies the criterion, and the team
strategy is Puiseux-winning.
\end{proof}

\subsubsection{Checking stability in polynomial time}

We define the \emph{Bellman residual} as
$R^{\stratopp}(u,b)=q_u^{\stratopp}
-\sum_{v\in V}\widetilde P(u,b,v)q_v^{\stratopp}$.
Stability requires all residuals to be nonnegative.
The residual is zero for $b=\stratopp(u)$.
Fix $b\neq\stratopp(u)$.
For each forest $F$, we put $\chi_F(v)=1$ if $v$ leads to
$\fsink$, and $0$ otherwise.
Substituting $q_v^{\stratopp}=N_v^{\stratopp}/D^{\stratopp}$ gives
\begin{align*}
  Z_{u,b}D^{\stratopp}R^{\stratopp}(u,b)
  &=\sum_e\epsilon^{\widehat{\rk}(e)}
      (N_u^{\stratopp}-N_v^{\stratopp})\\
  &=\sum_{e,F}(\chi_F(u)-\chi_F(v))
      \epsilon^{\widehat{\rk}(F)+\widehat{\rk}(e)}.
\end{align*}
Here, $e=(u,b,v)$ ranges over $E(u,b)$ and
$F$ over $\mathcal F_{\stratopp}$.
If $u,v$ lead to the same root, the coefficient is zero.
Otherwise, adding $e$ joins the two trees of $F$ into one tree.
Its coefficient is $+1$ if $u$ leads to $\fsink$ and $-1$ otherwise.
Let $\mathcal T(\stratopp,u,b)$ contain the spanning trees
of the underlying undirected graph that select one edge
from each $E(z,\stratopp(z))$, for $z\in U$, and one from $E(u,b)$.
We retain edge labels.
Deleting the latter edge recovers the forest, so these trees
correspond to the nonzero terms above.
Writing $\operatorname{sgn}(T)$ for their coefficients gives
\begin{equation}
\label{eq:signed-tree-identity}
  Z_{u,b}D^{\stratopp}R^{\stratopp}(u,b)
  =\sum_{T\in\mathcal T(\stratopp,u,b)}
     \operatorname{sgn}(T)\epsilon^{\widehat{\rk}(T)}.
\end{equation}
Without encoding, minimum-rank trees may cancel out.
Encoded ranks are distinct, so the unique minimum tree determines
the residual sign, since $Z_{u,b}D^{\stratopp}>0$.
If no tree exists, the residual is zero.

\begin{example}
\label{ex:signed-tree-test}
Let $U=\{u,v\}$ and $\sinit=u$.
Action $a$ takes $u$ to $v$.
Action $b$ at $u$ and the only action $c$ at $v$ each have
both roots as successors.
Below, $e_i:d$ labels edge $e_i$ of action $d$.
Double circles mark absorbing roots, whose self-loops are omitted.

\begin{center}
\begingroup
\setlength{\arraycolsep}{2pt}
\begin{tikzpicture}[
    x=1cm,y=1cm,font=\scriptsize,>=stealth,
    baseline=(current bounding box.center),
    state/.style={circle,draw,minimum size=5mm,inner sep=1pt},
    root/.style={state,double,double distance=0.5pt}]
  \node[state] (u) at (0,0) {$u$};
  \node[state] (v) at (2.4,0) {$v$};
  \node[root]  (f) at (1.2,0.9) {$\fsink$};
  \node[root]  (t) at (1.2,-0.9) {$\tsink$};
  \draw[->] (u) -- node[above] {$e_0:a$} (v);
  \draw[->] (v) -- node[above right] {$e_1:c$} (f);
  \draw[->] (v) -- node[below right] {$e_2:c$} (t);
  \draw[->] (u) -- node[above left] {$e_3:b$} (f);
  \draw[->] (u) -- node[below left] {$e_4:b$} (t);
\end{tikzpicture}
\quad
$\begin{array}{c|rrrrr}
  i&0&1&2&3&4\\ \hline
  \rk(e_i)&0&1&0&1&0\\
  \widehat{\rk}(e_i)&1&66&4&72&16
\end{array}$
\endgroup
\end{center}

Here $m=5$, so $\widehat{\rk}(e_i)=64\rk(e_i)+2^i$.
Suppose $\stratopp(u)=a$ and $\stratopp(v)=c$.
Its two forests are $F_{\fsink}=\{e_0,e_1\}$ and
$F_{\tsink}=\{e_0,e_2\}$.
They send $u$ through $v$ to $\fsink$ and $\tsink$, respectively.

We check action $b$ at $u$.
Adding $e_3$ to $F_{\fsink}$ or $e_4$ to $F_{\tsink}$
creates an undirected cycle.
In both cases, the endpoints already lead to the same root,
so the term is zero.
The other pairs give the positive tree
$T_+=\{e_0,e_1,e_4\}$ of encoded rank $83$
and the negative tree $T_-=\{e_0,e_2,e_3\}$ of encoded rank $77$.
Both have original rank $1$, so their terms without encoding
cancel as $\epsilon-\epsilon=0$.
After encoding, \eqref{eq:signed-tree-identity} gives
$Z_{u,b}D^{\stratopp}R^{\stratopp}(u,b)
=\epsilon^{83}-\epsilon^{77}<0$.
Thus, $\stratopp$ is not stable.

Switching to $b$ at $u$ reverses the tree signs when we check $a$.
The new response $\stratopp'$ is therefore stable.
Its original forest ranks satisfy
$\rho_{\stratopp'}^{\fsink}=1>0=\rho_{\stratopp'}^{\tsink}$,
giving the certificate in~\Cref{lem:np-one-policy-certificate}.
\end{example}

\begin{lemma}
\label{lem:np-bellman-test}
Residual signs can be checked
in polynomial time.
\end{lemma}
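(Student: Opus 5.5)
By \eqref{eq:signed-tree-identity} and $Z_{u,b}D^{\stratopp}>0$, the sign of $R^{\stratopp}(u,b)$ equals $\operatorname{sgn}(T^\ast)$. Here $T^\ast$ is the tree of minimum encoded rank in $\mathcal T(\stratopp,u,b)$, and the residual is zero if $\mathcal T(\stratopp,u,b)$ is empty. The encoded ranks $\widehat{\rk}(J)$ of distinct edge sets are distinct, because the low-order parts $\sum_{e_i\in J}2^i$ are distinct and each is below $2^m$. Hence $T^\ast$ is unique and its term cannot cancel. The plan is therefore to compute $T^\ast$ in polynomial time and then read off its sign.

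\textbf{Step 1: matroid intersection formulation.} The ground set is the multiset of labelled edges
\[
X=\bigcup_{z\in U}E(z,\stratopp(z))\;\cup\;E(u,b).
\]
Edges from the two groups at $u$ are kept as distinct copies, even when they share endpoints. A member of $\mathcal T(\stratopp,u,b)$ is a set $T\subseteq X$ satisfying two conditions:
\begin{itemize}
\item[(a)] $T$ is a spanning tree of the underlying undirected graph on $V=U\cup\{\tsink,\fsink\}$. Equivalently, $T$ has $\abs{U}+1$ edges and is independent in the graphic matroid of $X$ ignoring direction and labels. The count works because $\abs{V}=\abs{U}+2$.
\item[(b)] $T$ takes exactly one edge from each group $E(z,\stratopp(z))$ for $z\in U$, and exactly one from $E(u,b)$. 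These $\abs{U}+1$ groups are disjoint in $X$, so this is a basis of the partition matroid with capacity one per group.
\end{itemize}
Thus $\mathcal T(\stratopp,u,b)$ is exactly the set of common bases of a graphic matroid and a partition matroid of the same rank $\abs{U}+1$. A minimum-weight common basis under the weights $\widehat{\rk}$ can be found, or shown not to exist, by weighted matroid intersection~\cite{Frank81,Schrijver03}. This runs in time polynomial in $\abs{X}$ and in the bit length of the weights, which is polynomial by the remark after the encoding. Independence oracles for both matroids are trivial: test acyclicity for the first and count per group for the second.

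\textbf{Step 2: reading the sign.} Deleting from $T^\ast$ its edge $e=(u,b,v)$ taken from $E(u,b)$ leaves a forest $F$. This $F$ uses one $\stratopp$-edge per nonroot and has two components, one containing each root. As in the derivation of \eqref{eq:signed-tree-identity}, $\operatorname{sgn}(T^\ast)=+1$ if $u$ lies in the component of $\fsink$, and $-1$ otherwise. Both checks are simple graph searches. If the matroid intersection algorithm reports that no common basis exists, the residual is zero. Repeating this for every $u\in U$ and $b\neq\stratopp(u)$ decides stability in polynomial time.

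\textbf{Main obstacle.} The delicate step is making sure that the correspondence behind \eqref{eq:signed-tree-identity} is exactly a common-basis condition, so that the minimum over $\mathcal T$ computed by matroid intersection is the minimum over the nonzero terms. The $\stratopp$-edges of a forest in $\mathcal F_{\stratopp}$ form an undirected spanning forest with two components, one per root, and with one edge per nonroot. Conversely, one must check that such a selection is automatically a directed forest rooted at the roots. This holds because each nonroot has out-degree exactly one within the set, and by acyclicity every path must end at a root.

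The edge case $v=u$ also needs care. An edge in $E(u,b)$ that is a self-loop, or that joins two vertices of the same component, is rejected by the graphic matroid, which matches its zero coefficient. With these points checked, the reduction is exact and the lemma follows.
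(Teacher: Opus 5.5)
Your proposal is correct and takes the same route as the paper. You intersect the partition matroid on the groups $E(z,\stratopp(z))$, $z\in U$, and $E(u,b)$ with the graphic matroid, extract the unique minimum-$\widehat{\rk}$ tree by weighted matroid intersection, and read the sign from whether $u$ leads to $\fsink$ once the $E(u,b)$ edge is deleted.

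The only difference is cosmetic. The paper maximises the profit $p(e)=W+1-\widehat{\rk}(e)$ over common independent sets and reports no tree if the result has fewer than $|V|-1$ edges, whereas you ask directly for a minimum-weight common basis. These are equivalent, with one small imprecision on your side: the graphic matroid need not have rank $|U|+1$ when $X$ is disconnected. In that case no common independent set of that size exists and the residual is zero, which is exactly what your algorithm reports.
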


\begin{proof}[Proof sketch]
We restrict the edges to the prescribed groups.
Selecting at most one edge per group is a partition matroid constraint.
Having no undirected cycle is a graphic matroid constraint.
Loops and pairs of parallel edges count as cycles.
There are $|U|+1=|V|-1$ groups.
Thus, common independent sets of this size select one edge per
group and form spanning trees.

Weighted matroid intersection finds a tree of minimum encoded
rank or reports none exists in polynomial
time~\cite{Schrijver03}.
We delete its edge from $E(u,b)$ and follow the forest from $u$
to determine its sign.
Equation~\eqref{eq:signed-tree-identity} gives the residual sign.
The encoded ranks have polynomial bit length. We check
polynomially many residuals.
\end{proof}

\begin{algorithm}[h]
\caption{Nondeterministic Algorithm for Limit-Sure Reachability}
\label{algo:limit-sure-reachability}
\begin{algorithmic}[1]
  \Statex \textbf{Input:} A game $\game$, an initial state $\sinit$, and targets $\target$.
  \Statex \textbf{Output:} Accept limit-sure winning or reject
  \State Guess $r_{p,s,a}\in\{0,\ldots,B\}\cup\{\infty\}$
         for every $(p,s,a)$.
  \State Reject if $\min_a r_{p,s,a}\neq0$ for some $(p,s)$.
  \State Build the support and transition ranks of $\imdp$
         using~\eqref{eq:transition-rank}.
  \State Compute $\Trap$ as the union of target-free MECs.
  \State Accept if $\sinit\in\target$ or $\Trap=\emptyset$.
  \State Make $\Trap$ absorbing to obtain $\imdp^\sharp$.
  \State Reject if $\sinit\in\Trap$ or no path from $\sinit$ reaches
         $\target$ in $\imdp^\sharp$.
  \State Merge $\target$ into $\tsink$ and $\Trap$ into $\fsink$ to obtain $\mergedmdp$.
  \State Compute the encoded edge ranks $\widehat{\rk}$ for the encoded MDP.
  \State Guess one action $\stratopp(u)\in A_{\opp}$ at each $u\in U$.
  \State Check all residual signs using~\Cref{lem:np-bellman-test}.
  \State Reject if $R^{\stratopp}(u,b)<0$ for any $(u,b)\in U\times A_{\opp}$.
  \State Using the original ranks $\rk$, compute the minimum
         trapped rank $\rho_{\stratopp}^{\fsink}$ and targeted rank $\rho_{\stratopp}^{\tsink}$
         by~\Cref{lem:forest}.
  \State Accept if $\rho_{\stratopp}^{\fsink}>\rho_{\stratopp}^{\tsink}$, otherwise reject.
\end{algorithmic}
\end{algorithm}

\begin{theorem}
\label{thm:np-upper-bound}
The limit-sure reachability problem is $\mathsf{NP}$-complete.
\end{theorem}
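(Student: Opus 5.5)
Hardness and membership are handled separately. For $\NP$-hardness I would invoke the reduction of Brice et al.\ from $k$-clique \cite[Theorem~4]{BHMST26}. It produces games in which the team either wins almost surely or has value at most $3/4$. On these instances almost-sure winning, limit-sure winning and value~$1$ coincide, so the same polynomial reduction shows that limit-sure reachability is $\NP$-hard.

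For membership in $\NP$, I would show that \Cref{algo:limit-sure-reachability}, with $B=N^{N/2}$, is a correct polynomial-time nondeterministic procedure.

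\emph{Running time.} Each guessed rank has $O(N\log(N+1))$ bits by \Cref{lem:small-ranks}, and the guessed response has polynomial size. The support and transition ranks of $\imdp$ come directly from \eqref{eq:transition-rank}; although the minimum ranges over exponentially many team profiles, it decomposes per edge as a minimum over profiles listed explicitly in the table of $\delta$. The MECs, and hence $\Trap$, are computable in polynomial time \cite{CH14}. The encoded ranks $\widehat{\rk}$ have polynomially many bits. Residual signs are decided by \Cref{lem:np-bellman-test}, and the two forest minima are computed by \Cref{lem:forest}.

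\emph{Completeness.} Suppose the team wins limit surely. By \Cref{lem:puiseux} and \Cref{lem:small-ranks}, some Puiseux-winning strategy has ranks in $\{0,\dots,B\}\cup\{\infty\}$ with minimum $0$ at each $(p,s)$. Since Puiseux winning depends only on the ranks, the realisation \eqref{eq:realisation} of these guessed ranks is also Puiseux-winning. The early accept and reject steps of the algorithm are consistent with this:
\begin{itemize}
\item if $\sinit\in\target$ or $\Trap=\emptyset$, the strategy wins trivially and the algorithm accepts;
\item if $\sinit\in\Trap$, or no target is reachable, the opponent keeps the play in $\Trap$ forever, contradicting \Cref{lem:trap}, so a winning strategy never triggers these rejections.
\end{itemize}
In the remaining case, \Cref{lem:np-stable-optimal} gives a stable encoded response. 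By \Cref{lem:np-one-policy-certificate}, this response satisfies $\rho^{\fsink}_{\stratopp}>\rho^{\tsink}_{\stratopp}$, so guessing it makes the algorithm accept.

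\emph{Soundness.} Suppose the algorithm accepts. Then either a trivial winning case holds, or the guessed response is stable and meets the forest criterion. In the latter case \Cref{lem:np-one-policy-certificate} shows the realised strategy is Puiseux-winning, and \Cref{lem:puiseux} turns it into limit-sure winning.

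I expect the main subtlety to be the soundness step: stability of the single guessed response has to certify the forest criterion against every opponent response. That is exactly what the encoding argument of \Cref{lem:np-one-policy-certificate} provides, so the proof reduces to citing it carefully and confirming that every step above is polynomial in the game size.
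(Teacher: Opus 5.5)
Your proposal is correct and follows the paper's proof essentially step for step. Hardness comes from the $k$-clique reduction of \cite{BHMST26}, whose instances separate almost-sure winning from value at most $3/4$. Membership uses \Cref{algo:limit-sure-reachability} with $B=N^{N/2}$: completeness follows from \Cref{lem:small-ranks}, \Cref{lem:np-stable-optimal} and \Cref{lem:np-one-policy-certificate}, and soundness from \Cref{lem:np-one-policy-certificate} together with \Cref{lem:puiseux}. Your added remark that the transition ranks in \eqref{eq:transition-rank} are computable in polynomial time, because the team profiles are enumerated by the explicit table of $\delta$, fills in a detail that the paper leaves implicit.
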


\begin{proof}
Hardness follows from~\cite{dAF03,BHMST26}.
For the upper bound, we use~\Cref{algo:limit-sure-reachability} with $B=N^{N/2}$.

\emph{Running time.}
Each rank has $O(N\log(N+1))$ bits.
The guessed response has one action at each nonroot.
The graph and encoded ranks have polynomial size.
There are at most $|U|\,|A_{\opp}|$ stability checks.
Every check takes polynomial time by
\Cref{lem:np-bellman-test}.

\emph{Soundness.}
If a run accepts, the preprocessing conditions or
\Cref{lem:np-one-policy-certificate} give a Puiseux-winning strategy.
In the latter case, all residuals are nonnegative and the final
forest comparison holds.
The team wins limit surely by~\Cref{lem:puiseux}.

\emph{Completeness.}
If the team wins limit surely, \Cref{lem:small-ranks} gives winning
ranks bounded by $B$.
A run guesses them.
Their normalised-power realisation is winning because only ranks
matter.
It either accepts during preprocessing or accepts with a stable response by \cref{lem:np-one-policy-certificate}.
\end{proof}

\section{Safety}\label{sec:safety}
We now consider the safety objective $\obj=\Box C$, where
$C\subseteq\states$ is the set of safe states.
Unlike for reachability, we show limit-sure and almost-sure winning are
equivalent.
This extends the two-player result~\cite{CdAH12} to teams without
shared randomness.
The two-player proof constructs a
memoryless almost-sure winning optimal strategy from a fixed-point characterisation of the
game value. We instead use a compactness argument on the space of team
strategies. 
We only show that an almost-sure winning strategy exists if there exists a limit-sure winning strategy.

We equip $\Strats{\team}^{\mathrm h}$ with the product topology
on the action probabilities $\stratp(h)(a)$, which makes it a compact set.
The proof uses compactness to obtain a strategy
that meets all safety thresholds below~$1$.
We give the necessary topological results
in the full version.

\begin{theorem}
\label{thm:safety-limit-almost}
For a game with a safety objective, the team wins limit surely
if and only if it wins almost surely.
\end{theorem}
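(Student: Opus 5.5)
The easy direction is immediate: an almost-sure winning strategy ensures probability $1>1-\lambda$ for every $\lambda$. For the converse, I would use compactness of $\Strats{\team}^{\mathrm h}$ in the product topology on the coordinates $\stratp(h)(a)$, as announced before the statement. Suppose the team wins limit surely, and for each $n\geq 1$ fix a team strategy $\stratteam^{n}$ with $\Prb{\sinit}{\stratteam^n,\stratopp}(\Box C)>1-1/n$ for every opponent strategy $\stratopp\in\Strats{\opp}^{\mathrm h}$. The space is a countable product of simplices, hence compact and metrisable. So there is a subsequence converging coordinatewise to some $\stratteam^{*}$, that is, $\stratp^{n}(h)(a)\to\stratp^{*}(h)(a)$ for every $p$, $h$, $a$. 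Each $\stratteam^{n}$ is a tuple of individual strategies, and pointwise limits of product distributions are product distributions. Hence $\stratteam^{*}$ is again a legitimate team strategy without shared randomness.

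The key step is upper semicontinuity of the safety probability in the team strategy, for a fixed opponent strategy $\stratopp$. For each horizon $k$, the event $\Box_{\leq k} C$ says that the first $k+1$ states lie in $C$. Its probability $\Prb{\sinit}{\stratteam,\stratopp}(\Box_{\leq k} C)$ is a finite sum over histories of length $k+1$ of products of finitely many coordinates $\stratp(h)(a)$, opponent probabilities and $\delta$-entries. It is therefore a polynomial in finitely many coordinates, and so continuous in the product topology. Since $\Box C=\bigcap_k \Box_{\leq k}C$ is a decreasing intersection, $\Prb{}{}(\Box C)=\inf_k \Prb{}{}(\Box_{\leq k}C)$. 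An infimum of continuous functions is upper semicontinuous, which is the direction we need.

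Now fix any $\stratopp$ and any $k$. For every $n$ in the subsequence, $\Prb{\sinit}{\stratteam^n,\stratopp}(\Box_{\leq k}C)\geq \Prb{\sinit}{\stratteam^n,\stratopp}(\Box C)>1-1/n$. By continuity in the limit, $\Prb{\sinit}{\stratteam^*,\stratopp}(\Box_{\leq k}C)\geq 1$. Taking the infimum over $k$ gives $\Prb{\sinit}{\stratteam^*,\stratopp}(\Box C)=1$. As $\stratopp$ was arbitrary, including history-dependent randomised opponents, $\stratteam^*$ wins almost surely.

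The main obstacle I expect is justifying continuity carefully with general history-dependent opponent strategies. The point to check is that the finite-horizon probability involves only finitely many team coordinates, with the opponent's coordinates held fixed, so no uniformity over opponents is needed. The argument never uses the Puiseux machinery, because the limit is taken pointwise for each opponent separately. This works for safety only because the objective is closed, so probabilities can only jump up in the limit. The analogous argument fails for reachability, which is consistent with \cref{example:introexample}. If a sharper statement is wanted, we could combine this with the known polynomial-time algorithm for almost-sure safety to conclude membership in $\P$.
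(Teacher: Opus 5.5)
Your proof is correct and follows essentially the same route as the paper. Both arguments use compactness of the product of simplices $\Strats{\team}^{\mathrm h}$, continuity of the finite-horizon probabilities $\Prb{\sinit}{\stratteam,\stratopp}(\Box_{\leq k} C)$ (each depends on only finitely many coordinates), and continuity from above to pass to $\Box C$. The only difference is cosmetic: you extract a convergent subsequence, which is legitimate because $\states^{+}$ is countable and so the space is metrisable. The paper instead intersects the nested nonempty closed sets $K_m=\bigcap_{\stratopp}K(\stratopp,1-1/m)$ and uses the finite intersection property, which avoids needing metrisability.
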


\begin{proof}[Proof sketch]
Almost-sure winning implies limit-sure winning.
For the converse, suppose that the team wins limit surely.
For each integer $m\geq2$, let $K_m$ be the set of team strategies
that guarantee safety probability at least $1-1/m$ against every
opponent strategy.
Each $K_m$ is nonempty by limit-sure winning.
It is closed because it is an intersection of closed sets,
one for each opponent strategy.
The sets decrease as $m$ increases.
Compactness therefore gives a strategy
$\stratteam^*\in\bigcap_{m\geq2}K_m$.
Against every opponent strategy, this strategy guarantees
safety probability at least $1-1/m$ for every $m\geq2$.
The probability must be $1$, so $\stratteam^*$ is almost-sure winning.
\end{proof}

Almost-sure and sure winning also coincide for
safety~\cite{BHMST26}.
Hence all three winning conditions are equivalent.
Moreover, sure safety can be decided in polynomial time~\cite{BHMST26}.
Thus, the limit-sure safety problem is also decidable in polynomial time.

\section{Individually Randomised ATL}\label{sec:IRATL}
Brice et al.~\cite[Section~6]{BHMST26} introduced \emph{Individually Randomised ATL} (IRATL) to express what coalitions can achieve with individual or shared sources of randomness. We recall its qualitative fragment and apply our results to its model-checking problem.

Let $G$ be a game structure without a fixed team, initial state or objective, and let $L\colon\states\to 2^Q$ label its states with atomic propositions from $Q$. State formulas and path formulas are given by
\[
\begin{aligned}
 \varphi &::= q\mid\neg\varphi\mid\varphi\vee\varphi
       \mid\langle\!\langle C\rangle\!\rangle_w^r\theta,\\
 \theta &::= \bigcirc\varphi\mid\Box\varphi
       \mid\varphi\,\mathsf U\,\varphi,
\end{aligned}
\]
where $q\in Q$, $C\subseteq\players$, $r\in\{\mathsf{ind},\mathsf{sh}\}$, and $w\in\{\mathsf{sure},\mathsf{almost},\mathsf{limit}\}$.

The logic as introduced also has a quantitative fragment that uses thresholds. However, we restrict our attention to the qualitative fragment with sure, almost-sure, and limit-sure winning.

The formula $\langle\!\langle C\rangle\!\rangle_w^r\theta$ holds at $s$ if the coalition $C$ wins the objective described by $\theta$ from $s$ with winning condition $w$ and randomisation type $r$.
For $r=\mathsf{ind}$, coalition strategies have the product form of \cref{def:strategies}; for $r=\mathsf{sh}$, the coalition chooses a distribution over $A_C$. Semantically, almost-sure and limit-sure winning are as in \cref{def:almost-sure,def:limit-sure}, and sure winning requires every play compatible with the chosen strategies and the transition supports to
satisfy the objective against every response of the opponents, even if the play has probability $0$.

On a play $s^0s^1\cdots$, the semantics of $\bigcirc\varphi$ requires $\varphi$ at $s^1$. The semantics of $\Box\varphi$ requires that $\varphi$ holds at every state, and $\varphi\,\mathsf U\,\psi$ requires $\psi$ at some $s^k$ and $\varphi$ at every $s^j$ with $j<k$. %

\begin{restatable}{theorem}{iratlTheorem}
\label{thm:iratl-model-checking}
Model checking qualitative IRATL on explicitly represented finite stochastic game structures is $\P^{\NP}$-complete. 
\end{restatable}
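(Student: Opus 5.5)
For membership in $\P^{\NP}$ we use the standard bottom-up labelling algorithm for ATL-like logics. We process the subformulas of the input formula in order of increasing size. For each subformula and each state $s$ we compute whether it holds at $s$, and we treat already-labelled subformulas as fresh atomic propositions. There are polynomially many pairs (subformula, state). Boolean connectives are immediate, so it suffices to show that each coalition subformula $\langle\!\langle C\rangle\!\rangle_w^r\theta$ can be decided at a given state with one $\NP$ query, or in polynomial time, given the labels of its immediate subformulas.

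We split into cases on $r$, $w$ and $\theta$. For $r=\mathsf{sh}$ the coalition is a single player, so the game is a two-player concurrent game, and all three winning conditions for next, safety and reachability objectives are decidable in polynomial time~\cite{dAHK07}. For $r=\mathsf{ind}$ we reduce the path formula to a plain objective. For $\bigcirc\varphi$, only the first round matters, so we make every successor state absorbing and treat the formula as reachability of the $\varphi$-states in one step. For $\Box\varphi$ we obtain a safety objective with $C$ the $\varphi$-states. For $\varphi\,\mathsf U\,\psi$ we make the $\psi$-states targets and make the $\neg\varphi\wedge\neg\psi$-states absorbing non-targets, which gives reachability. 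The safety cases are in $\P$: sure safety is in $\P$~\cite{BHMST26}, and almost-sure and limit-sure safety coincide with it by \cref{thm:safety-limit-almost}. The reachability cases (sure, almost-sure and limit-sure) are in $\NP$, by \cite{BHMST26,dAF03} for the first two and by \cref{thm:np-upper-bound} for the limit-sure case. The next operator needs some care. One round with product distributions is a one-shot game, and limit-sure or almost-sure winning there amounts to existence of product distributions avoiding bad outcomes. This is again an instance of reachability, so it is covered by the same $\NP$ procedures. Since each oracle call is made on a polynomial-size game built from the labelling, the overall algorithm is in $\P^{\NP}$.

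For $\P^{\NP}$-hardness I would reduce from a canonical $\P^{\NP}$-complete problem adapted to Boolean nesting, say the parity (or ``last bit'') of the lexicographically maximum satisfying assignment of a formula. Alternatively, I would use the standard pattern for ATL-style logics with $\NP$-hard modalities, where nested coalition formulas let one $\NP$ query depend on answers to earlier ones. Concretely, I would reuse the clique reduction of \cite[Theorem~4]{BHMST26} as a gadget $G_i$ whose initial state satisfies $\langle\!\langle \team\rangle\!\rangle^{\mathsf{ind}}_{\mathsf{limit}}\Diamond q_i$ iff the $i$-th instance is positive. Here limit-sure and almost-sure coincide on these instances. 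I would then chain the gadgets so that the target of query $i$ is labelled by propositions computed from earlier queries. This simulates a polynomial-time machine making adaptive $\NP$ queries, for instance deciding bits of the maximum assignment one by one through nested formulas. The formula size must stay polynomial, so each bit is referenced through a single subformula rather than duplicated. I expect this hardness encoding to be the main obstacle. In particular, the adaptivity has to be realised purely through state labels and coalition nesting, without the blow-up that results from repeating subformulas.
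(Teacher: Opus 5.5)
\textbf{Membership.} Your membership argument is correct and matches the paper's. It uses bottom-up labelling with one oracle call per state and coalition subformula, solves shared randomisation as a two-player game, reduces until to reachability, and collapses safety to sure safety. Two small corrections:
\begin{itemize}
\item The step from almost-sure to sure safety comes from \cite{BHMST26}. \cref{thm:safety-limit-almost} only gives limit-sure $=$ almost-sure.
\item For $\bigcirc\varphi$, routing through the reachability oracle works only if the play starts in a fresh non-target copy of the current state. Otherwise a $\varphi$-state wins at step $0$, and making a self-loop absorbing removes the first move.
\end{itemize}
The paper handles next differently: all three winning conditions coincide with $\mathsf{CPre}_C(\operatorname{Sat}(\varphi))$, so this case is in $\P$.

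\textbf{Hardness: the adaptive gadget is missing.} The hardness half is only a plan. The step you flag as the main obstacle is exactly the substance of the paper's proof, so $\P^{\NP}$-hardness is not established. Reusing the clique gadget of \cite[Theorem~4]{BHMST26} does not make the $i$-th query depend on earlier answers. The tested graph is fixed, and relabelling its target does not change which instance is decided. What is missing is a gadget whose outcomes are routed to states whose value under the previous formula layer encodes earlier answers.

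The paper builds such a gadget at a state $s_i^-$:
\begin{itemize}
\item The two team players each pick a clause of $F_i$ and a literal position in it.
\item The opponent challenges one clause per player and inspects one of them.
\item Disagreeing picks in the same clause, or complementary picks, lead to $\widehat{\mathsf f}$.
\item A picked literal $x_j$ or $\neg x_j$ with $j<i$ leads to $\widehat{s_j^+}$ or $\widehat{s_j^-}$.
\end{itemize}
The objective $q\,\mathsf U\,(p\wedge\langle\!\langle\varnothing\rangle\!\rangle^{\mathsf{ind}}_{\mathsf{sure}}\bigcirc\psi)$ then holds exactly when the first hatted state reached is followed by a $\psi$-state. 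So earlier answers enter only through how $\psi$ evaluates at the states $s_j^{\pm}$. Independence of the players' randomness forces an almost-sure winning strategy to pick one consistent literal per clause (\cref{lemma:correctnessatSiMinus}). Using the almost-sure modality, as the paper does, also avoids re-checking that limit-sure and almost-sure still coincide once targets depend on earlier answers.

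\textbf{Hardness: keeping the formula polynomial.} Suppose each answer is computed by its own subformula that mentions the subformulas of all earlier answers. Then the formula length grows exponentially, and your remark that each bit is referenced through a single subformula does not say how to avoid this. The paper uses one chain, $\Phi_0=z$ and $\Phi_{k+1}=\neg\langle\!\langle C\rangle\!\rangle^{\mathsf{ind}}_{\mathsf{almost}}(q\,\mathsf U\,(p\wedge\langle\!\langle\varnothing\rangle\!\rangle^{\mathsf{ind}}_{\mathsf{sure}}\bigcirc\Phi_k))$, in which $\Phi_k$ occurs once. The same formula is evaluated at different states of a single structure, so one layer encodes all earlier answers at once. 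The deterministic path $s_i^+\to\widehat{s_i^-}\to s_i^-$ and the cycle through $\mathsf t,\widehat{\mathsf f},\mathsf f,\widehat{\mathsf t}$ absorb the alternating negations. An induction (\cref{prop:iratl-snsat-induction}) shows that $s_i^+\models\Phi_k$ if and only if $v_i=1$, for all $k\geq 2i$. Hence checking $\Phi_{2n}$ at $s_n^+$ decides $\mathsf{SNSAT}$.

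Your source problem is fine, since the bitwise lexicographically-maximal-assignment queries form an $\mathsf{SNSAT}$ instance. But without a routing gadget and a single self-referential formula chain of this kind, chaining clique gadgets does not yet give a polynomial reduction.
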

\begin{proof}[Proof sketch]
For each state subformula, we recursively compute the set of states where it is true.
Once these sets are known, formulas of the form $\bigcirc\varphi$ and $\Box\varphi$ can be checked in polynomial time. 
Evaluating if a formula of the form $\varphi_1\,\mathsf U\,\varphi_2$ holds reduces to solving reachability 
by making target states and states containing $\neg\varphi_1$ absorbing. 
For both shared and individual randomisation, sure reachability is in $\P$.
Similarly,  almost-sure reachability with individual randomisation is in $\NP$~\cite{BHMST26}.
Limit-sure reachability is in $\NP$ by \cref{thm:np-upper-bound}. 
Moreover, shared randomisation reduces to the classical two-player games that can be solved in~$\P$. There are at most $|\states|\cdot|\operatorname{Sub}(\varphi)|$ pairs of states and subformulas.
Therefore, they can be solved by a deterministic polynomial-time algorithm with an $\NP$ oracle. The hardness follows from an involved reduction from \emph{sequentially nested satisfiability} (SNSAT), which is $\P^\NP$-complete~\cite{LMS01}.
In fact, the hardness reduction requires only a team of size 2, and it holds even for game structures with deterministic transitions and for the fragment restricted to the sure and almost-sure modalities of the logic.
We give the details in the full version.
\end{proof}

\section{Conclusions}\label{sec:conclusions}
We showed that limit-sure reachability for teams without shared
randomness is in $\NP$, and hence $\NP$-complete, and that for safety
objectives limit-sure, almost-sure, and sure winning coincide and can be
decided in polynomial time. 

We focus now on some future directions of research.
Firstly, for two players, limit-sure winning is well understood for B\"uchi, co-B\"uchi and parity
conditions~\cite{10.1145/1970398.1970404}. Our proof for reachability, however, relies on memoryless team strategies being sufficient~\cite{BHMST26}, and we
do not know whether this is true for other objectives.

Secondly, there are models between full sharing and no sharing: subteams
that share a source of randomness, a correlation device used only
before the play~\cite{Aumann74,CG18}, a few shared random
bits~\cite{Newman91}, or a full structure of shared randomness~\cite{lics26}.

Thirdly, the certificate of \cref{sec:np-upper-bound} may also be useful outside team games. Perturbing ranks so that all forests get distinct weights,
guessing a stable response, and checking residual signs replace a check over exponentially many opponent
strategies by a polynomial-time test. Similar checks arise in the analysis of perturbed Markov chains~\cite{Solan2003}, and in
limit-sure problems for POMDPs with bounded memory~\cite{AsadiEtAl2025}, in particular once an adversary is added.

Finally, as future work, we propose extending PRISM-games~\cite{PRISMgames20} and MCMAS~\cite{MCMAS17}, which still merge a coalition into one player. As our certificates are small and checked
in polynomial time, these tools could support individually randomised coalitions by using a SAT or ILP solver to obtain the ranks using our proposed techniques.

\bibliographystyle{ACM-Reference-Format}
\bibliography{biblio}

\clearpage
\onecolumn
\appendix

\section{Preliminaries}\label{app:sec:prelims}
\begin{lemma}
  \label{lem:single-opponent}
  Let $\obj$ be a reachability or safety objective.
  For every team strategy
  $\stratteam\in\Strats{\team}^{\mathrm h}$,
  \[
    \inf_{\stratrest\in
      \Strats{\players\setminus\team}^{\mathrm h}}
      \Prb{\sinit}{\stratteam,\stratrest}(\obj)
    =
    \inf_{\stratopp\in\Strats{\opp}^{\mathrm h}}
      \Prb{\sinit}{\stratteam,\stratopp}(\obj).
  \]
\end{lemma}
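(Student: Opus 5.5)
The plan is to prove the two inequalities separately. The easy direction is ``$\geq$''. A tuple $\stratrest=(\stratp[q])_{q\in\players\setminus\team}$ of opponent strategies defines a single strategy of the merged opponent $\opp$ with action set $A_{\opp}=A_{\players\setminus\team}$. At each history $h$ it plays the product distribution $\prod_q\stratp[q](h)$, exactly as in \eqref{eq:product} for the team. The transition function is unchanged by the merge. So the probability of each cylinder of plays, and hence by uniqueness of the extension the whole measure $\Prb{\sinit}{\stratteam,\stratrest}$, coincides with the measure induced by this merged strategy. Every value in the left-hand infimum therefore also appears in the right-hand one, so the right-hand infimum is at most the left-hand one.

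For ``$\leq$'', I will show that deterministic strategies of $\opp$ already attain values arbitrarily close to the right-hand infimum, indeed no worse than any randomised strategy. A deterministic strategy $d\colon\states^+\to A_{\opp}$ splits componentwise into deterministic strategies $d_q\colon\states^+\to A_q$, one per opponent $q$. This gives a tuple in $\Strats{\players\setminus\team}^{\mathrm h}$ inducing the same measure. Hence it suffices to prove that for every $\stratopp\in\Strats{\opp}^{\mathrm h}$ there is a deterministic $d$ with $\Prb{\sinit}{\stratteam,d}(\obj)\leq\Prb{\sinit}{\stratteam,\stratopp}(\obj)$.

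To obtain such a $d$, I will use a Kuhn-type decomposition. The set $\states^+$ of histories is countable. Let $\mu=\bigotimes_{h\in\states^+}\stratopp(h)$ be the product probability measure on the space $D^{\mathrm h}=A_{\opp}^{\states^+}$ of deterministic strategies, so that $d(h)$ is sampled independently for each $h$. The key claim is the mixture identity
\[
  \Prb{\sinit}{\stratteam,\stratopp}(\obj)=\int_{D^{\mathrm h}}\Prb{\sinit}{\stratteam,d}(\obj)\,d\mu(d).
\]
Given this identity, some $d$ in the support of $\mu$ must have value at most the integral, which proves the claim. In fact this gives a pointwise inequality, not just an $\eta$-approximation.

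To prove the mixture identity, I first check measurability of $d\mapsto\Prb{\sinit}{\stratteam,d}(C)$ for each cylinder $C$; this value depends on only finitely many coordinates $d(h)$. Next I verify the identity on cylinders $s^0\cdots s^k\states^\omega$ by induction on $k$. Each history prefix $s^0\cdots s^j$ is a distinct coordinate $h_j$ of the product. So the integral of the product
\[
  \prod_{j<k}\sum_{\bar a,b}\stratteam(h_j)(\bar a)\,\mathbf 1[d(h_j)=b]\,\delta(s^j,(\bar a,b))(s^{j+1})
\]
factorises into the product of expectations, and $\mathbb E_\mu\mathbf 1[d(h_j)=b]=\stratopp(h_j)(b)$. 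This is exactly the cylinder probability under $\stratopp$. Both sides are probability measures on plays that agree on the cylinder $\pi$-system, so they agree on every measurable $\obj$, in particular on $\Reach{\target}$ and $\Box C$. The main point to handle with care is the factorisation step: it needs the distinct prefixes $h_j$ of a single history to be distinct coordinates, which holds because they have different lengths. I expect this bookkeeping, together with the measurability check, to be the only real obstacle; the rest is routine.
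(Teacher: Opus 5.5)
Your proposal is correct and is essentially the paper's proof. It uses the same sandwich through deterministic opponent strategies, which both the merged opponent and the original tuple can implement, and the same product measure $\mu$ over histories, with the cylinder identity resting on each history occurring at most once along a play. The only difference is the passage from cylinders to $\obj$: you use uniqueness of measures on the cylinder $\pi$-system, which also needs measurability of $d\mapsto\Prb{\sinit}{\stratteam,d}(A)$ for every event $A$ via a routine monotone-class argument, and this covers arbitrary measurable objectives, whereas the paper approximates reachability or safety by the time-bounded events $\obj_n$ and applies bounded convergence, so it only ever integrates finitely determined functions.
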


\begin{proof}
Fix $\stratteam$.
Let $\mathcal D^{\mathrm h}$ be the set of deterministic
strategies $d\colon\states^{+}\to A_{\opp}$.
Each $d\in\mathcal D^{\mathrm h}$ can be implemented by the
original opponents without shared randomness.
At each history, each opponent plays its component of $d(h)$.
Conversely, the single opponent can implement any tuple of
strategies of the original opponents by using their product
distribution.
Hence,
\begin{align*}
  \inf_{\stratopp\in\Strats{\opp}^{\mathrm h}}
    \Prb{\sinit}{\stratteam,\stratopp}(\obj)
  &\leq
  \inf_{\stratrest\in
    \Strats{\players\setminus\team}^{\mathrm h}}
    \Prb{\sinit}{\stratteam,\stratrest}(\obj)\\
  &\leq
  \inf_{d\in\mathcal D^{\mathrm h}}
    \Prb{\sinit}{\stratteam,d}(\obj).
\end{align*}

For the reverse inequality, fix
$\stratopp\in\Strats{\opp}^{\mathrm h}$.
Independently for each history $h$, sample $d(h)$ according
to $\stratopp(h)$.
Since the set of histories is countable, these choices define
a product probability measure $\mu$ on $\mathcal D^{\mathrm h}$.
Each history is visited at most once along a play.
Thus, averaging over $d$ gives the same probability for every
finite state sequence as playing $\stratopp$.

For reachability, let $\obj_n$ require a visit to the target
by time $n$.
For safety, let $\obj_n$ require that all states through
time $n$ are safe.
In both cases, the probabilities of $\obj_n$ converge to
those of $\obj$.
Therefore,
\begin{align*}
  \Prb{\sinit}{\stratteam,\stratopp}(\obj)
  &= \lim_{n\to\infty}
     \Prb{\sinit}{\stratteam,\stratopp}(\obj_n)\\
  &= \lim_{n\to\infty}
     \int_{\mathcal D^{\mathrm h}}
       \Prb{\sinit}{\stratteam,d}(\obj_n)\,d\mu(d)
     &&\text{(equality on finite prefixes)}\\
  &= \int_{\mathcal D^{\mathrm h}}
       \Prb{\sinit}{\stratteam,d}(\obj)\,d\mu(d)
     &&\text{(bounded convergence)}\\
  &\geq
     \inf_{d\in\mathcal D^{\mathrm h}}
       \Prb{\sinit}{\stratteam,d}(\obj)
     &&\text{(an average is at least the infimum).}
\end{align*}
Taking the infimum over $\stratopp$ proves the reverse
inequality and hence the claim.
\end{proof}

\section{Reachability}\label{app:sec:reach}
We now focus on the reachability objective $\obj=\Reach{\target}$. We assume that each target state is absorbing, that is, $\delta(t, (\bar{a}, \bar{b}))(t)=1$ for all $t\in\target$ and $(\bar{a}, \bar{b})\in A_{\players}$. This assumption is without loss of generality, because a play satisfies $\Reach{\target}$ as soon as it visits $\target$. 

\begin{definition}[Limit-sure reachability problem]
  \label{app:def:problem}
  Given a game $\game$ with a reachability objective, decide whether the team wins limit surely.
\end{definition}

Brice et al.~\cite{BHMST26} show that deciding whether the team wins almost surely is $\mathsf{NP}$-complete for reachability objectives. 
They prove the $\mathsf{NP}$-hardness by a reduction from the $k$-clique problem. In the games it produces, the team either wins almost surely or has value at most $3/4$~\cite[Theorem~4]{BHMST26}. The same reduction therefore shows that the limit-sure reachability problem is $\mathsf{NP}$-hard.

We show in four steps that the problem belongs to $\mathsf{NP}$. First, we represent the family of team strategies that limit-sure winning requires by a single strategy whose probabilities are series in an infinitesimal $\epsilon$, which are called Puiseux series. Second, we show that whether this strategy wins depends only on the exponents of the leading terms of its probabilities, which we call \emph{ranks}. Third, we show that once both strategies of the team and the opponent are fixed, deciding whether the resulting Markov chain reaches $\target$ with probability infinitesimally close to $1$ takes polynomial time, by comparing directed minimum-weight spanning forests. Finally, we show that a certificate consisting of the team ranks and an optimal opponent strategy in an encoded structure has polynomial size and can be checked in polynomial time. Therefore, solving the main problem is in $\NP$.

First, we build on a result of Brice et al.~\cite{BHMST26}, which shows that memoryless team strategies suffice.

\begin{proposition}[{\cite[Theorem~1]{BHMST26}}]
  \label{app:prop:memoryless}
  Let $\game$ be a game with objective $\Reach{\target}$, and let $t \in [0,1]$. If some team strategy ensures that the play reaches $\target$ with probability greater than $t$ against every tuple of strategies of the opponents, then some memoryless team strategy does.
\end{proposition}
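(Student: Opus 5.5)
The plan is to follow the structure of the classical argument that memoryless strategies suffice in MDPs, applied to the opponent side and then lifted to the team. I fix a (possibly history-dependent) team strategy $\stratteam$ guaranteeing reachability probability greater than $t$ against every opponent response, and I aim to build a memoryless team strategy $\stratteam'$ with the same guarantee.

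\textbf{Step 1: a compact formulation.} Against a memoryless team strategy $x=(x_{p,s})$, the opponent faces the finite MDP $(\states,A_{\opp},P_x)$. Its optimal failure probability $f(x)=\max_{\stratopp\in D}\Prb{\sinit}{x,\stratopp}(\neg\Reach{\target})$ is attained by a deterministic memoryless response. The quantity to control is therefore $g(x)=1-f(x)$, the minimum over $D$ of reachability values. Each such value is a rational function of $x$ on the region where the support is fixed; across supports it is only upper semicontinuous. The target inequality is $g(x)>t$ for some $x$.

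\textbf{Step 2: averaging the history-dependent strategy.} For each state $s$ and each team player $p$, I would define $x_{p,s}$ as an occupation-weighted average of $\stratp(h)$ over histories $h$ ending in $s$. The weighting is tied to a suitable worst-case opponent, and I discount or truncate to a horizon $n$ so the weights are finite.

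The central difficulty, and where I expect the main obstacle, is independence. Averaging each player separately yields the product of the marginal averages, not the average of the products. So the MDP induced by $x$ is not simply the state-action frequency version of $(\stratteam,\stratopp)$, and the usual frequency argument breaks.

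\textbf{Step 2, route I would try first.} Rather than averaging, I would modify $\stratteam$ one state at a time. At a state $s$ the opponent's best response is memoryless. Conditioned on the history, the continuation values depend on $\stratteam(h)$ at $s$ only through a multilinear function of the players' mixed actions. I would replace the behaviour at all histories ending in $s$ by the single choice among them that is best against the worst response. This uses the fact that against a fixed continuation, choosing per-history is no better than choosing the best history's profile. Iterating over the finitely many states then yields a memoryless strategy.

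To make this rigorous I would use the value-iteration characterisation~\eqref{eq:reach-system}. Define $v^*(s)$ as the value that $\stratteam$ secures from histories ending in $s$, taking the worst case over histories. Then show that some profile $x_s$ achieves $v^*(s)\le\min_b\sum_{t} P_{x}(s,b)(t)\,v^*(t)$. Since $v^*(s)$ is the infimum over histories, picking a history where it is nearly attained gives a profile satisfying the local inequality up to a small error. Compactness of $\prod_p\Dist(A_p)$ and continuity of the right-hand side in $x_s$ for fixed $v^*$ then give exact attainment.

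\textbf{Step 3: from local inequalities to a bound.} Local inequalities of the form $v\le$ one-step value only give $\Prb{}{}\ge v$ if the play cannot stall forever in non-target states. This is the second obstacle: the inequality also holds for $v\equiv 1$ in a trap. I would handle it by working with horizon-$n$ values $v_n$, which satisfy the exact recursion with the base case $v_0=\mathbf 1_{\target}$. I would choose $n$ large enough that $v_n(\sinit)>t$ still holds. The stationary selection $x$ then defines a chain whose $n$-step reach probability dominates $v_n$ via the non-stationary-to-stationary selection.

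If stationarity across stages fails, I would fall back to the compactness argument of \cref{thm:safety-limit-almost} applied to the strategy space $\Strats{\team}$, combined with the upper semicontinuity of $g$.
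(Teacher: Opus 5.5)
The paper does not prove this proposition; it imports it from \cite[Theorem~1]{BHMST26}. Judged on its own, your proposal has a genuine gap: neither of the two steps that carry the weight goes through.

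\textbf{The local inequality for $v^*$ fails.} For a product profile $x\in\prod_{p\in\team}\Dist(A_p)$ at $s$, write $P_x(s,b)(t)$ for the induced transition probability. Let $w(h)$ be the worst-case continuation value of $\stratteam$ after a history $h$. For $h$ ending in $s\notin\target$ you have $w(h)=\min_b\sum_t P_{\stratteam(h)}(s,b)(t)\,w(ht)$. The bound $w(ht)\ge v^*(t)$ then only yields $\min_b\sum_t P_{\stratteam(h)}(s,b)(t)\,v^*(t)\le w(h)$. This is an upper bound on the right-hand side of your inequality, not the lower bound by $v^*(s)$ that you need, so choosing $h$ with $w(h)$ close to $v^*(s)$ proves nothing.

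The inequality is in fact false. Let $\sinit$ move to $s_1$ or $s_2$ with probability $1/2$ each, let $s_1$ move deterministically to $s_2$, and let the team at $s_2$ choose between $\target$ and an absorbing losing state. Suppose $\stratteam$ chooses the losing state only after the history $\sinit s_2$. Then $v^*(s_1)=1$ and $v^*(s_2)=0$, so no profile at $s_1$ satisfies $v^*(s_1)\le v^*(s_2)$. Besides, a worst case over histories only certifies $v^*(\sinit)$. That value can be $\le t$ when later visits to $\sinit$ have smaller continuation values than the first one. The best case $u(s)=\sup_h w(h)$ does satisfy $u(s)\le\max_x\min_b\sum_t P_x(s,b)(t)\,u(t)$ with $u(\sinit)\ge w(\sinit)$. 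But such a subsolution is exactly the stall-prone inequality you mention, so everything then rests on Step~3.

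\textbf{Stalling is not resolved, and the missing idea is stopping.} Let $\mathcal B v(s)=\max_x\min_b\sum_tP_x(s,b)(t)\,v(t)$, and let $\mathcal T_x$ be the same operator with $x$ fixed. A stationary selection $x$ attaining $v_n=\mathcal B v_{n-1}$ only gives $v_n\le\mathcal T_x v_n$ (since $v_{n-1}\le v_n$), which is the same inequality again. For example, take a state with a self-loop action and an action that reaches $\target$ or a losing state with probability $1/2$ each. The self-loop attains $v_2=1/2$, yet its stationary selection reaches $\target$ with probability $0$.

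The fallback fails too. Your $g$ is lower, not upper, semicontinuous: a reachability probability is the supremum of the continuous horizon-$n$ probabilities, and in \cref{example:introexample} $\sup g=1$ is not attained. In any case, compactness of $\Strats{\team}$ never compares memoryless strategies with $\stratteam$.

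What works is to stop the play with probability $1-\beta$ at every step. Then $\mathcal B_\beta=\beta\mathcal B$ (keeping value $1$ on $\target$) is a contraction, and the maximum over the compact set $\prod_{p\in\team}\Dist(A_p)$ is attained. A maximising selection $x^\beta$ is a memoryless strategy guaranteeing the fixed point $v^\beta(\sinit)$ in the stopped game. It guarantees at least this in the original game as well, since stopping only lowers reachability.

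The same contraction argument bounds the stopped-game values of $\stratteam$ by $v^\beta$. Hence $v^\beta(\sinit)$ is at least $\beta^n$ times the worst-case probability that $\stratteam$ reaches $\target$ within $n$ steps. By compactness of the opponent's strategy space (the nested-closed-sets argument of \cref{app:lem:nested-closed}), this probability tends to $\min_{\stratopp}\Prb{\sinit}{\stratteam,\stratopp}(\Reach{\target})>t$. Choosing $n$ and then $\beta$ gives $v^\beta(\sinit)>t$.

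Independence causes no trouble here. Only the order $\max_x\min_b$ is used, and at each history the team already plays a product profile, so neither averaging nor a minimax theorem is needed.
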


By \cref{app:prop:memoryless}, restricting the team to memoryless
strategies leaves the value unchanged.
Fixing such a strategy gives the MDP of \cref{def:mdp}.
For the reachability objective, this MDP has an optimal
deterministic memoryless opponent strategy.
Thus, in the rest of this section, we restrict attention to
memoryless team strategies and opponent responses in $D$.

\subsection{Puiseux Strategies}
\label{app:sec:puiseux}

Limit-sure winning allows the team strategy to depend on the required accuracy.
In \cref{example:introexample}, each agent crosses with probability
$\epsilon$ and hides with probability $1-\epsilon$. The team's
winning probability approaches $1$ as $\epsilon$ tends to zero.
We represent such a family by a single symbolic strategy whose probabilities
depend on a positive parameter $\epsilon$ standing for ``an arbitrarily small
positive number''. Puiseux series make this idea precise. Their use in the
analysis of stochastic games goes back to Bewley and Kohlberg~\cite{BewleyKohlberg76}.

\begin{definition}[Puiseux series]
  \label{app:def:puiseux-series}
  A \emph{Puiseux series} is a series $x = \sum_{i \ge i_0} c_i \epsilon^{i/m}$ with $i_0 \in \mathbb{Z}$, $m \in \mathbb{N}_{>0}$, and $c_i \in \mathbb{R}$ for each integer $i$. It is \emph{algebraic} if $Q(\epsilon, x) = 0$ for some nonzero polynomial $Q \in \mathbb{R}[\epsilon,X]$.
\end{definition}

In other words, a Puiseux series may have fractional exponents with a common denominator, but only finitely many negative exponents. For example, $x=1-\epsilon^{1/2}$ is a Puiseux series with the fractional exponent $1/2$. It is algebraic because it satisfies $(1-x)^2-\epsilon=0$. In our setting, such series describe probabilities that depend on a positive parameter tending to zero.

Let $\K$ denote the field of algebraic Puiseux series. We write a nonzero $x\in\K$ as $x=\sum_{i\geq i_0}c_i\epsilon^{i/m}$ with $c_{i_0}\neq0$, and call $c_{i_0}$ its \emph{leading coefficient}. We order $\K$ by
\[
  x>0\iff c_{i_0}>0,
  \qquad
  x<y\iff y-x>0.
\]
This order extends the order on $\mathbb{R}$ and makes $\epsilon$ a positive \emph{infinitesimal}, that is, $0<\epsilon<t$ for every real $t>0$.
For $x\in\K$ bounded in absolute value by a real number, its
\emph{standard part} is $\st(x)=c_0\in\mathbb{R}$, with $c_0=0$ if
$\epsilon^0$ does not occur in $x$.

The following classical properties justify reasoning with these symbolic quantities~\cite{BPR06}. Here a \emph{first-order formula} in the language of ordered fields is built from polynomial equalities and inequalities using logical connectives and quantifiers over field elements. A \emph{sentence} is such a formula with no free variables. An ordered field is \emph{real closed} if every positive element has a square root and every polynomial of odd degree has a root.

\begin{lemma}[Puiseux series properties]
  \label{app:lem:puiseux-calculus}
  \begin{enumerate}[label=(\roman*), itemsep=1pt]
    \item \emph{Ordered field and standard parts:} with the order defined above, $\K$ is a real closed ordered field extending $\mathbb{R}$. For every $x\in\K$ bounded in absolute value by a real number, $\st(x)$ is the unique $a\in\mathbb{R}$ such that $|x-a|<t$ for every real $t>0$.
    \item \emph{Transfer:} every first-order sentence in the language of ordered fields, with real parameters, holds over $\mathbb{R}$ if and only if it holds over $\K$.
    \item \emph{Evaluation:} every $x\in\K$ converges at $\epsilon=r$ for all sufficiently small real $r>0$, and we write $x(r)\in\mathbb{R}$ for its value. Let $\Phi$ be a first-order formula with real parameters and $\bar x=(x_1,\ldots,x_n)\in\K^n$. If $\Phi(\bar x)$ holds over $\K$, then $\Phi(\bar x(r))$ holds over $\mathbb{R}$ for all sufficiently small real $r>0$, where $\bar x(r)=(x_1(r),\ldots,x_n(r))\in\mathbb{R}^n$.
  \end{enumerate}
\end{lemma}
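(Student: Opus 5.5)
The plan is to derive all three items from the classical theory of real closed fields, following \cite{BPR06}, and to do genuine work only where the order on $\K$ meets analytic evaluation.

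\emph{Item (i).} The field axioms, and compatibility of the order with $+$ and $\cdot$, are direct checks on leading coefficients. If $x,y>0$, then $xy$ has leading coefficient $c\,c'>0$. The leading term of $x+y$ is either one of the two leading terms or their sum, and it is positive in every case. Closure of $\K$ under the operations follows because algebraic elements over $\mathbb R(\epsilon)$ form a field. Real closedness is the Newton--Puiseux theorem: the field $\mathbb C\langle\langle\epsilon\rangle\rangle$ of algebraic Puiseux series with complex coefficients is algebraically closed, and $\K$ is its real subfield with $[\mathbb C\langle\langle\epsilon\rangle\rangle:\K]=2$. By Artin--Schreier, such a field is real closed, and its unique order is the one defined by leading coefficients. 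For the standard part, write a bounded $x$ as $x=c_0+y$, where $y$ has only positive exponents. Then $|y|<t$ for every real $t>0$, since the leading term of $t-|y|$ is $t$. Uniqueness holds because two reals $a\neq a'$ cannot both be within $|a-a'|/2$ of $x$.

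\emph{Item (ii).} This is an immediate consequence of Tarski--Seidenberg. The theory of real closed fields admits quantifier elimination, so every sentence with real parameters is equivalent, uniformly over all real closed fields containing those parameters, to a quantifier-free sentence in the parameters. Such a sentence is a Boolean combination of sign conditions on real numbers, so it has the same truth value in $\mathbb R$ and in $\K$.

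\emph{Item (iii).} This is the main step. I would first show that each $x\in\K$ converges for small $r>0$: an algebraic Puiseux series is, after substituting $\epsilon=\eta^m$, a Laurent series that is a root of a polynomial with analytic coefficients, and the analytic implicit-function/Newton polygon argument gives a positive radius of convergence. Next I would show that evaluation $x\mapsto x(r)$ is a ring homomorphism on any finite set of series, for $r$ small enough. I would also show that it preserves signs: if $x>0$ then $x(r)=c_{i_0}r^{i_0/m}(1+o(1))>0$ for small $r$. Given $\Phi(\bar x)$, apply quantifier elimination to $\Phi(\bar X)$ to obtain a quantifier-free formula $\Psi(\bar X)$ that is equivalent to $\Phi$ over every real closed field. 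Its atoms are finitely many polynomials $Q_j(\bar X)$ with real coefficients. Each $Q_j(\bar x)$ is a series in $\K$ with a definite sign, and for all sufficiently small $r$ (taking the minimum over the finitely many $j$) the value $Q_j(\bar x(r))=Q_j(\bar x)(r)$ has the same sign. Therefore $\Psi(\bar x(r))$ holds over $\mathbb R$, and hence so does $\Phi(\bar x(r))$.

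The expected obstacle is the convergence of algebraic Puiseux series. I would not reprove it and would instead cite it as part of the Newton--Puiseux theorem in \cite{BPR06}. The rest of item (iii) reduces to the sign-preservation argument above.
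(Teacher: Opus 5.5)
Your proposal is correct. The paper gives no proof of this lemma and simply cites it as classical from \cite{BPR06}; your outline is the standard argument behind that citation: Newton--Puiseux with Artin--Schreier for real closedness, Tarski--Seidenberg quantifier elimination for transfer, and for (iii) quantifier elimination combined with sign preservation under convergent evaluation, with convergence of algebraic Puiseux series cited.
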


Real closedness is the property of $\K$ that makes transfer in part~(ii) hold. For our purposes, transfer lets us replace arbitrarily accurate real strategies by one symbolic strategy. Evaluation then turns that symbolic strategy back into real strategies by choosing a sufficiently small positive parameter.

\begin{definition}[Puiseux team strategy]
  \label{app:def:puiseux-strategy}
  A \emph{Puiseux team strategy} is a memoryless team strategy with probabilities in $\K$ rather than $\mathbb{R}$. It is a tuple $\stratteam = (\stratp)_{p \in \team}$ with $\stratp \colon \states \to \K^{A_p}$, $\stratp(s)(a) \ge 0$ for all $s\in\states$ and $a\in A_p$, and $\sum_{a\in A_p} \stratp(s)(a) = 1$ for all $s\in\states$.
\end{definition}

A \emph{Puiseux Markov chain} on $V$ is a Markov chain whose transition matrix is a map $P\colon V\times V\to\K$. Its reachability probabilities $\Prb{u}{P}(\Reach{R})$ are the least nonnegative solution of \eqref{eq:reach-system} over $\K$, which exists by \cref{app:lem:puiseux-calculus}(ii).

\begin{definition}[Puiseux-winning]
  \label{app:def:winning}
  A Puiseux team strategy $\stratteam$ is
\emph{Puiseux-winning} if
\[
  \st\bigl(\Prb{\sinit}{\stratteam,\stratopp}(\Reach{\target})\bigr)=1
\]
for every real-valued memoryless opponent strategy
$\stratopp\in D$.
\end{definition}

The following lemma reduces limit-sure winning to the existence of a single symbolic strategy.

\begin{lemma}
  \label{app:lem:puiseux}
  The team wins limit surely if and only if some Puiseux team strategy is Puiseux-winning.
\end{lemma}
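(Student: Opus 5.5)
The plan is to encode limit-sure winning as a first-order sentence over the reals and then use transfer for one direction and evaluation for the other. By \cref{app:prop:memoryless} and the fact that deterministic memoryless responses are optimal in the induced MDP, the team wins limit surely iff for every real $\lambda\in(0,1)$ there is a memoryless team strategy $\stratteam$ with $\Prb{\sinit}{\stratteam,\stratopp}(\Reach{\target})>1-\lambda$ for every $\stratopp\in D$. Since $D$ is finite, this universal quantification is a finite conjunction.

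The first step is to express the memoryless team strategies by variables $x_{p,s,a}$ with $x_{p,s,a}\geq0$ and $\sum_a x_{p,s,a}=1$. The transition entries $P_{\stratteam,\stratopp}(s,t)$ are then polynomials in these variables with rational coefficients. The next step is to express the reachability probability as a formula $\mathrm{Reach}_{\stratopp}(\bar x,\bar q)$. It states that $\bar q$ is a nonnegative solution of \eqref{eq:reach-system} and that every nonnegative solution $\bar q'$ satisfies $\bar q\leq\bar q'$. Together these make $\bar q$ the least solution, and the formula is first-order. Over $\K$ we use this formula as the definition of the Puiseux reachability probabilities, so the real and Puiseux notions are described by the same formula. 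I expect this step to be the main subtle point, since it is what makes transfer legitimate. Let $\Psi(\bar x,\lambda)$ be the formula
\[
\bigwedge_{\stratopp\in D}\exists\bar q\,\bigl(\mathrm{Reach}_{\stratopp}(\bar x,\bar q)\wedge q_{\sinit}>1-\lambda\bigr).
\]
Limit-sure winning is then the sentence $\forall\lambda\in(0,1)\,\exists\bar x\,\Psi(\bar x,\lambda)$, with $\bar x$ ranging over valid strategies.

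(\emph{Only if}) Suppose the team wins limit surely. The sentence above holds over $\mathbb{R}$, so by \cref{app:lem:puiseux-calculus}(ii) it holds over $\K$. Instantiate $\lambda=\epsilon$. This gives a Puiseux team strategy $\bar x\in\K$ whose least-solution probabilities satisfy $1-\epsilon<q_{\sinit}\leq1$ for each $\stratopp\in D$. The upper bound holds because the constant vector $1$ is a solution, so the least solution lies below it. Since $\epsilon$ is infinitesimal, $|q_{\sinit}-1|<t$ for every real $t>0$. By \cref{app:lem:puiseux-calculus}(i) the standard part is $1$, so the strategy is Puiseux-winning.

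(\emph{If}) Fix a Puiseux-winning strategy $\bar x$ and a real $\lambda\in(0,1)$. For each $\stratopp\in D$ the probability $q^{\stratopp}_{\sinit}$ has standard part $1$, so $q^{\stratopp}_{\sinit}>1-\lambda$ in $\K$. The formula $\Psi(\bar x,\lambda)$, together with the strategy-validity constraints, therefore holds over $\K$, with the real parameter $\lambda$. By \cref{app:lem:puiseux-calculus}(iii) it holds at $\bar x(r)$ for all sufficiently small real $r>0$. Because $\Psi$ is a single finite conjunction over $D$, one threshold on $r$ serves every response. The real strategy $\bar x(r)$ is then a valid memoryless team strategy achieving more than $1-\lambda$ against every $\stratopp\in D$, hence against every opponent strategy, since $D$ contains an optimal response in the induced MDP. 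As $\lambda$ was arbitrary, the team wins limit surely.
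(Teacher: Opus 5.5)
Your proposal is correct and follows essentially the same route as the paper. You reduce to memoryless team strategies and the finite set $D$ of responses, and encode limit-sure winning as a first-order sentence via the least-nonnegative-solution characterisation. You then transfer it to $\K$ and instantiate $\lambda=\epsilon$ for one direction, and apply evaluation to the single finite conjunction over $D$ for the other. Your explicit least-solution formula and the bound $q_{\sinit}\leq 1$ via the constant solution spell out details the paper leaves implicit.
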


\begin{proof}
  Fixing a memoryless team strategy leaves the opponent with a
  finite MDP, so responses in $D$ are sufficient. The same holds over $\K$ by
  \cref{app:lem:puiseux-calculus}(ii), using the first-order description of
  reachability probabilities by \eqref{eq:reach-system}.

  Suppose the team wins limit surely. By \cref{app:prop:memoryless}, the sentence
  \[
    \forall\lambda\in(0,1)\;
    \exists\stratteam\in\Strats{\team}\;
    \forall\stratopp\in D:\;
    \Prb{\sinit}{\stratteam,\stratopp}(\Reach{\target})>1-\lambda
  \]
  holds over $\mathbb{R}$. It is a first-order sentence, since strategy
  validity is expressed by polynomial constraints, reachability
  probabilities by the coordinatewise least nonnegative solution of
  \eqref{eq:reach-system}, and $D$ is finite.
  By \cref{app:lem:puiseux-calculus}(ii), it also holds over $\K$.
  Taking $\lambda=\epsilon$, we obtain a Puiseux team strategy
  $\stratteam$ satisfying
  \[
    1-\epsilon
    <\Prb{\sinit}{\stratteam,\stratopp}(\Reach{\target})
    \leq1
    \quad\text{for every }\stratopp\in D.
  \]
  These probabilities
  have standard part $1$ by \cref{app:lem:puiseux-calculus}(i), so
  $\stratteam$ is Puiseux-winning.

  Conversely, let $\stratteam$ be a Puiseux-winning team
  strategy and fix a real $\lambda\in(0,1)$. By
  \cref{app:lem:puiseux-calculus}(i),
  $\Prb{\sinit}{\stratteam,\stratopp}(\Reach{\target})>1-\lambda$ for
  every $\stratopp\in D$. Since $D$ is finite,
  \cref{app:lem:puiseux-calculus}(iii) gives a real $r>0$ such that the
  evaluation $\stratteam(r)\in\Strats{\team}$ satisfies
  \[
    \Prb{\sinit}{\stratteam(r),\stratopp}(\Reach{\target})>1-\lambda
    \quad\text{for every }\stratopp\in D.
  \]
  As responses in $D$ suffice and $\lambda$ was arbitrary,
  the team wins limit surely.
\end{proof}

Now we show that fractional exponents can also be eliminated.

\begin{corollary}
  \label{app:cor:integer-exponents}
  If the team wins limit surely, then some Puiseux-winning team strategy has every probability $\stratp(s)(a)$ of the form $\sum_{i} c_i\, \epsilon^{r_i}$ with coefficients $c_i\in\mathbb{R}$ and exponents $r_i\in\mathbb{N}$.
\end{corollary}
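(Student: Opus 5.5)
Starting from \cref{app:lem:puiseux}, we fix a Puiseux-winning team strategy $\stratteam$. Every probability $\stratp(s)(a)$ is then an algebraic Puiseux series, and there are finitely many triples $(p,s,a)$. Each of these series has exponents in $\frac{1}{m_{p,s,a}}\mathbb{Z}$ for some $m_{p,s,a}\in\mathbb{N}_{>0}$. We take $d$ to be the least common multiple of all the $m_{p,s,a}$, so that every exponent appearing anywhere in $\stratteam$ lies in $\frac1d\mathbb{Z}$. The plan is to substitute $\epsilon=\eta^d$ for a fresh infinitesimal $\eta$. Let $\Phi$ be the map that sends $x=\sum_i c_i\epsilon^{i/d}$ to $\sum_i c_i\eta^{i}$. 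Applying $\Phi$ to every probability gives a strategy $\stratteam'$ whose exponents are all integers.

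The key step is to show that $\Phi$ is an order-preserving field embedding of $\K$ into itself, with $\eta$ used as the new variable. Additivity and multiplicativity follow because the substitution respects the formal operations on series. Algebraicity is preserved: if $Q(\epsilon,x)=0$, then $Q(\eta^d,\Phi(x))=0$. The embedding fixes $\mathbb{R}$, and it preserves leading coefficients, hence signs and the order. It also preserves standard parts, because the $\epsilon^0$ coefficient becomes the $\eta^0$ coefficient.

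It remains to check that $\stratteam'$ is Puiseux-winning. Validity carries over directly: nonnegativity and the equation $\sum_a\stratp(s)(a)=1$ are preserved by $\Phi$. For each $\stratopp\in D$, the induced chain $P_{\stratteam',\stratopp}$ equals $\Phi(P_{\stratteam,\stratopp})$ entrywise, since the transition probabilities $\delta$ are real constants. The reachability probability is characterised as the least nonnegative solution of \eqref{eq:reach-system}, which is a first-order property. The main point to verify is that $\Phi$ carries this least solution to the least solution of the transformed system. I would avoid arguing about leastness under an embedding. Instead I would use the forest formula \eqref{eq:forest-formula-short} of \cref{lem:forest-formula}, applied after making bottom SCCs absorbing. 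This expresses the probability as a rational function of the transition entries with real coefficients. The support is unchanged by $\Phi$, so the same forests appear, and $\Phi$ commutes with this rational expression. Hence $\Prb{\sinit}{\stratteam',\stratopp}(\Reach{\target})=\Phi\bigl(\Prb{\sinit}{\stratteam,\stratopp}(\Reach{\target})\bigr)$, whose standard part is $1$.

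Finally, the negative exponents must be excluded. Each probability lies in $[0,1]$, so it is bounded. A series with a negative leading exponent is unbounded, because its absolute value eventually exceeds every real number. Therefore every exponent is a nonnegative integer, that is, lies in $\mathbb{N}$. Renaming $\eta$ as $\epsilon$ gives the required strategy.

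I expect the main obstacle to be justifying that $\Phi$ commutes with reachability probabilities. If the forest formula route runs into trouble because bottom SCCs differ, the alternative is the following observation: $\Phi$ is an isomorphism of real closed fields onto its image, which is itself real closed. Transfer restricted to the parameters therefore gives the same conclusion.
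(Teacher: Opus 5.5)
Your proposal is correct and follows essentially the same route as the paper. Both arguments take a Puiseux-winning strategy, substitute $\epsilon=\eta^d$ for a common denominator $d$, use that this substitution is an order-preserving field map fixing $\mathbb{R}$ and standard parts, and exclude negative exponents because probabilities are bounded. The only difference is that you justify preservation of Puiseux-winning explicitly through the forest formula, where the paper asserts it in one line. Note that the substitution is in fact a bijection of $\K$, with inverse given by $\eta=\epsilon^{1/d}$, so it is an ordered-field automorphism fixing $\mathbb{R}$. It therefore carries least nonnegative solutions of \eqref{eq:reach-system} directly to least nonnegative solutions, and neither your forest-formula detour nor the real-closed-image fallback is needed.
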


\begin{proof}
  Let $\stratteam$ be a Puiseux-winning team strategy, which exists by \cref{app:lem:puiseux}. We write $x_1, \dots, x_k\in\K$ for its probabilities $\stratp(s)(a)$ with $p \in \team$, $s \in \states$, and $a \in A_p$.

  No exponent is negative. Indeed, if $x_j \ne 0$ had leading exponent $q\in\mathbb{Q}$ with $q < 0$, then $x_j = \epsilon^{q}(c + y)$ with $c\in\mathbb{R}_{>0}$ and $y\in\K$ infinitesimal, and $\epsilon^{q}$ is the inverse of a positive infinitesimal, so $x_j$ would be larger than every real, contradicting $x_j \le 1$. Since exponents increase from the leading one, all exponents of $x_j$ are nonnegative.

  We choose $d\in\mathbb{N}_{>0}$ to be a common multiple of the denominators of the exponents appearing in $x_1, \dots, x_k$, so every exponent of every entry has the form $i/d$ with $i \in \mathbb{N}$. Substituting $\epsilon \coloneqq \eta^{d}$ turns each $x_j$ into a series in $\eta$ with nonnegative integer exponents. This substitution is an order-preserving field bijection of $\K$ that coincides with the identity on $\mathbb{R}$, so the transformed strategy is still Puiseux-winning.
  Renaming $\eta$ as $\epsilon$ gives the strategy required.
\end{proof}

\subsubsection{Ranks}
\label{app:sec:ranks}

Throughout this subsection, we consider Puiseux team
strategies with nonnegative integer exponents, as in
\cref{app:cor:integer-exponents}.

\begin{definition}[Rank]
  \label{app:def:rank}
  The \emph{rank} of a nonzero
  $x=\sum_{i\geq i_0}c_i\epsilon^{i}\in\K$ with $c_{i_0}\neq0$ is its
  smallest exponent $\rk(x)=i_0\in\mathbb{Z}$, and $\rk(0)=\infty$. For a
  Puiseux team strategy $\stratteam$, the rank of an action $a\in A_p$ of a
  player $p\in\team$ at a state $s\in\states$ is
  $r_{p,s,a}=\rk(\stratp(s)(a))$. In a Puiseux Markov chain $P$ on $V$, the
  rank of a transition $(u,v)\in V\times V$ is $\rk(P(u,v))$, and the rank
  of a set $F\subseteq V\times V$ of transitions is
  $\rk(F)=\sum_{(u,v)\in F}\rk(P(u,v))$.
\end{definition}

The rank measures how fast a probability vanishes as $\epsilon\to0$. We
suppose $\min\emptyset=\infty$ and $\infty\pm k=\infty$ for
$k\in\mathbb{Z}$. For all $x,y\in\K$,
\begin{align*}
  \rk(xy)&=\rk(x)+\rk(y),\\
  \rk(x+y)&=\min\{\rk(x),\rk(y)\}&&\text{if }x,y\geq0,\\
  \rk(x/y)&=\rk(x)-\rk(y)&&\text{if }y\neq0.
\end{align*}
The first rule holds because leading coefficients multiply, and the
second because positive leading coefficients cannot cancel. The third
follows from the first, since $(x/y)\,y=x$. For all $p\in\team$ and
$s\in\states$, we have $\sum_{a\in A_p}\stratp(s)(a)=1$, so the sum rule
gives $\min_{a\in A_p}r_{p,s,a}=0$. Conversely, let
$r_{p,s,a}\in\mathbb N\cup\{\infty\}$ for $p\in\team$, $s\in\states$, and
$a\in A_p$ satisfy $\min_{a\in A_p}r_{p,s,a}=0$ for all $p\in\team$ and
$s\in\states$. The Puiseux team strategy
\begin{equation}
  \label{app:eq:realisation}
  \stratp(s)(a)=
  \begin{cases}
    \epsilon^{r_{p,s,a}}\big/\sum_{a'\colon r_{p,s,a'}<\infty}\epsilon^{r_{p,s,a'}}
      & \text{if }r_{p,s,a}<\infty,\\
    0 & \text{if }r_{p,s,a}=\infty
  \end{cases}
\end{equation}
then has action ranks $r_{p,s,a}$, since each denominator has rank $0$.
\Cref{app:ex:ranks-robot} illustrates ranks in the game of
\cref{example:introexample}.

\begin{figure}[h]
  \centering
  \begin{tikzpicture}[x=1cm,y=1cm,font=\footnotesize,
      >={Stealth[length=1.7mm,width=1.5mm]}]
    \tikzset{
      gstate/.style={
        circle,draw,fill=white,
        minimum size=6.6mm,inner sep=0.5pt
      },
      gabs/.style={gstate,double,double distance=0.6pt},
      act/.style={
        circle,fill=black,inner sep=0pt,minimum size=1.5mm
      },
      plab/.style={font=\scriptsize,inner sep=1.2pt},
      alab/.style={font=\scriptsize\itshape,inner sep=1pt}
    }
    \node[anchor=east] at (-0.95,0) {\scriptsize
      \setlength{\tabcolsep}{2.5pt}%
      \begin{tabular}{@{}lcc@{}}
        & cross & hide \\ \midrule
        agent 1
          & $\epsilon$\,\rankbox{1}
          & $1-\epsilon$\,\rankbox{0} \\
        agent 2
          & $\epsilon$\,\rankbox{1}
          & $1-\epsilon$\,\rankbox{0}
      \end{tabular}};
    \node[gstate] (s0)   at (0,0)       {$\sinit$};
    \node[act]    (sc)   at (1.25,0.7)  {};
    \node[act]    (wt)   at (1.25,-0.7) {};
    \node[gabs]   (fail) at (3.6,0.7)   {$\sfail$};
    \node[gabs]   (goal) at (3.6,-0.7)  {$\sgoal$};

    \draw (s0) --
      node[alab,below right,pos=0.55] {$\block_1$} (sc);
    \draw (s0) --
      node[alab,above right,pos=0.55] {$\block_2$} (wt);

    \draw[->] (sc) --
      node[plab,above] {$\epsilon^2$\,\rankbox{2}} (fail);
    \draw[->] (sc) --
      node[plab,fill=white,pos=0.72]
        {$\epsilon(1-\epsilon)$\,\rankbox{1}} (goal);
    \draw[->] (wt) --
      node[plab,fill=white,pos=0.72]
        {$\epsilon^2$\,\rankbox{2}} (fail);
    \draw[->] (wt) --
      node[plab,below]
        {$\epsilon(1-\epsilon)$\,\rankbox{1}} (goal);

    \draw[->] (sc) to[out=135,in=90,looseness=1.2]
      node[plab,above,pos=0.45]
        {$1-\epsilon$\,\rankbox{0}} (s0);
    \draw[->] (wt) to[out=225,in=270,looseness=1.2]
      node[plab,below,pos=0.45]
        {$1-\epsilon$\,\rankbox{0}} (s0);
  \end{tikzpicture}
  \caption{The Puiseux-winning team strategy of
    \cref{app:ex:ranks-robot} (left) and the MDP it induces
    for the opponent (right). Boxes show ranks, and black
    dots are the actions of the opponent.}
  \Description{On the left, a table gives the team strategy:
    each of the two agents crosses with probability epsilon,
    of rank 1, and hides with probability 1 minus epsilon,
    of rank 0. On the right, the induced MDP: from the initial
    state the opponent chooses to scan the first or second
    robot, with the two actions drawn as black dots.
    Under either action, the play moves to the failure state
    with probability epsilon squared (rank 2), to the goal
    state with probability epsilon times 1 minus epsilon
    (rank 1), and back to the initial state with probability
    1 minus epsilon (rank 0). The goal and failure states
    are both double-circled.}
  \label{app:fig:ranks}
\end{figure}

\begin{example}[Ranks of actions and transitions]
  \label{app:ex:ranks-robot}
  In the game of \cref{example:introexample}, let each agent
  cross with probability $\epsilon$ and hide with probability
  $1-\epsilon$ at $\sinit$.
  Crossing then has rank $1$ and hiding has rank $0$.
  \Cref{app:fig:ranks} shows this team strategy and the MDP
  it leaves for the opponent.
  Since the agents randomise independently, each transition
  probability is a sum of products of action probabilities,
  and the product and sum rules give its rank.
  For either opponent action $\block_i$, the play reaches
  $\sfail$ only if both agents cross, so this transition has
  rank $\rk(\epsilon\cdot\epsilon)=1+1=2$.
  It reaches $\sgoal$ only if the scanned agent hides and
  the other crosses, giving rank
  $\rk((1-\epsilon)\epsilon)=0+1=1$.
  It stays at $\sinit$ if both agents hide or only the scanned
  agent crosses, giving rank
  $\rk\bigl((1-\epsilon)^2+\epsilon(1-\epsilon)\bigr)
  =\min\{0,1\}=0$.
  Thus, success has smaller rank than failure.
  Although most rounds return to $\sinit$, success occurs
  before failure with probability $1-\epsilon$, which
  approaches $1$ as $\epsilon$ goes to zero.
\end{example}

To show that only the ranks matter, we express the absorption probabilities of a Markov chain through spanning forests of its support graph. The \emph{support graph} of a Markov chain on $V$ with transition matrix $P$
has vertex set $V$ and edge set $\{(u,v)\in V\times V\mid P(u,v)>0\}$. For a set $F$ of edges with at most one outgoing edge per vertex, a vertex $u$ \emph{leads to} a set $R$ of vertices if following the edges of $F$ from $u$ ends in $R$, and to a vertex $v$ if it leads to $\{v\}$.

\begin{definition}[Spanning forest]
  \label{app:def:spanning-forest}
  Let $P$ be a finite Markov chain on $V$ with a nonempty set $R\subseteq V$ of absorbing states, which we call \emph{roots}. A \emph{spanning forest} is a set $F\subseteq V\times V$ of edges $(u,v)$ with $P(u,v)>0$, with exactly one outgoing edge from each vertex in $V\setminus R$ and none from vertices in $R$, such that every vertex leads to $R$. Its weight is defined as $\wt(F) = \prod_{(u,v) \in F} P(u,v)$. We write $\Forests$ for the set of spanning forests and $\Forests_{R'}(u)\subseteq\Forests$ for those in which $u\in V$ leads to $R'\subseteq R$.
\end{definition}

\begin{lemma}[Functional forests]
  \label{app:lem:functional-forest}
  Let $V$ be a finite set of vertices, let $\emptyset\neq R\subseteq V$, and
  let $F$ be a set of edges between vertices of $V$ with exactly one edge
  leaving each vertex of $V\setminus R$ and none leaving $R$. Every vertex
  leads to $R$ if and only if $F$ has no cycle. In that case, the graph $(V,F)$ has $|R|$ connected components, each a tree with exactly one vertex in $R$.
\end{lemma}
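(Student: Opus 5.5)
The statement has two parts: the equivalence between ``every vertex leads to $R$'' and acyclicity, and the component structure of $(V,F)$. Both follow from looking at the functional walk from a vertex. Since each vertex of $V\setminus R$ has exactly one outgoing edge and vertices of $R$ have none, the walk $u=v_0,v_1,v_2,\ldots$ that follows $F$ from any vertex $u$ is uniquely determined. It either stops at a vertex of $R$ or continues forever.

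For the forward direction of the equivalence, assume every vertex leads to $R$ and suppose $F$ contains a directed cycle $C$. No vertex of $R$ has an outgoing edge, so $C\subseteq V\setminus R$. The walk from a vertex of $C$ is forced to stay on $C$ forever, so it never reaches $R$, a contradiction. The same argument excludes undirected cycles. Take an undirected cycle in $(V,F)$ that is not a directed cycle. Around that cycle, some vertex has both incident cycle edges oriented away from it. This means the vertex has two outgoing edges in $F$, which is impossible.

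For the converse, assume $F$ has no cycle. If the walk from $u$ never stopped, it would visit infinitely many steps in the finite set $V$, so some vertex would repeat. The segment between the two visits is a directed cycle in $F$, a contradiction. Hence every walk terminates, and it can only terminate at a vertex with no outgoing edge, that is, at a vertex of $R$.

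For the component count, use $|F|=|V\setminus R|=|V|-|R|$. An acyclic undirected graph with $|V|$ vertices and $|V|-|R|$ edges has exactly $|R|$ connected components, each of which is a tree. Every vertex is joined by its walk to some root, so each component contains at least one vertex of $R$. Since there are $|R|$ roots and $|R|$ components, each component contains exactly one root.

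I do not expect a real obstacle here. The only point needing care is that ``cycle'' must be read as an undirected cycle, including loops and pairs of parallel or antiparallel edges, because the later matroid argument uses that reading. The out-degree argument above handles it. One should also note that a self-loop at a root is not an edge of $F$, since roots have no outgoing edges in $F$.
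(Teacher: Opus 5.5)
Your proof is correct and follows essentially the same route as the paper. A directed cycle blocks the walk from reaching $R$. The out-degree bound forces every undirected cycle to be directed; you make this explicit by exhibiting a vertex with two outgoing cycle edges, which the paper only asserts. Otherwise finiteness makes the walk terminate in $R$, and the count of $|V|-|F|=|R|$ components together with a pigeonhole argument gives exactly one root per tree.
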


\begin{proof}
  A directed cycle prevents its vertices from leading to $R$. Conversely,
  on a cycle with directions ignored, the restriction to one outgoing edge
  per vertex forces its edges to form a directed cycle. Thus, $F$ has no
  such cycle if and only if it has no directed cycle. Without directed
  cycles, following the edges of $F$ from any vertex ends in $R$, since
  $V$ is finite.

  Without such cycles, $(V,F)$ has $|V|-|F|=|R|$ connected components. Each component contains a
  vertex of $R$, since every vertex leads to $R$. As there are $|R|$
  components, each contains exactly one.
\end{proof}

\begin{lemma}[Forest formula]
  \label{app:lem:forest-formula}
  If every state has a directed path to $R$ in the support graph, then $\sum_{F \in \Forests} \wt(F) > 0$ and
  \[
    \Prb{u}{P}(\Reach{R'}) \;=\; \frac{\sum_{F \in \Forests_{R'}(u)} \wt(F)}{\sum_{F \in \Forests} \wt(F)}
    \qquad \text{for all } u \in V, \ R' \subseteq R.
  \]
\end{lemma}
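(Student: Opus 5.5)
The proof verifies that the forest ratio satisfies the linear system \eqref{eq:reach-system}, and then invokes uniqueness of its solution. Write $\Sigma=\sum_{F\in\Forests}\wt(F)$ and, for fixed $R'$, $h_u=\Sigma^{-1}\sum_{F\in\Forests_{R'}(u)}\wt(F)$.

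\emph{Positivity of the denominator.} At each $v\in V\setminus R$, take the first edge of a shortest support path from $v$ to $R$. The chosen edges strictly decrease the distance to $R$. So they contain no cycle, and by \cref{app:lem:functional-forest} they form a spanning forest. Its weight is a product of positive entries, so $\Sigma>0$.

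\emph{Boundary values.} For $u\in R$, the vertex $u$ leads to itself in every forest. Hence $h_u=1$ if $u\in R'$ and $h_u=0$ otherwise, as required.

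\emph{Harmonicity at $u\notin R$.} We must show
\[
  \Sigma h_u=\sum_{w}P(u,w)\,\Sigma h_w ,
\]
and we split each side according to the edge leaving $u$.
\begin{itemize}
\item On the left, a forest in $\Forests_{R'}(u)$ has a unique edge $(u,w)$, and then $u$ leads to $R'$ exactly when $w$ does.
\item On the right, expand $P(u,w)\sum_{F\in\Forests_{R'}(w)}\wt(F)$. Remove the edge at $u$ from $F$, obtaining $F^-$, and insert $(u,w)$. If $u$ does not lie on the path from $w$ in $F^-$, the result is a forest in which $u$ leads through $w$ to $R'$.
\item The remaining pairs create a cycle through $u$.
\end{itemize}
These cycle terms are the crux. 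I would use the standard matrix-tree bijection: the cycle configurations, built from an $R$-rooted forest plus one extra edge out of $u$ that closes a cycle, appear with total weight that cancels exactly by row-stochasticity, $\sum_w P(u,w)=1$.

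The cleaner route, which I would use instead, is the all-minors matrix-tree theorem on the matrix $I-P$ restricted to $V\setminus R$:
\begin{itemize}
\item $\det\bigl((I-P)_{V\setminus R}\bigr)=\Sigma$, because rows sum to zero once absorbing columns are dropped and off-diagonal entries are nonpositive.
\item By Cramer's rule, the unique solution of $(I-P)_{V\setminus R}\,h=P_{V\setminus R,R'}\mathbf 1$ has numerator determinants equal to the sums of forest weights in which $u$ leads to $R'$.
\end{itemize}
This is the classical statement of \cite[Chapter~6, Lemma~3.3]{FreidlinWentzell12}. The main obstacle is making this identification precise; I would cite it for real $P$.

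\emph{Uniqueness and extension to $\K$.} The hypothesis that every state has a support path to $R$ implies absorption in $R$ with probability~$1$. Then $(I-P)_{V\setminus R}$ is invertible, since $\Sigma\neq0$. Hence the least nonnegative solution of \eqref{eq:reach-system} is the unique solution, which equals $h$. For Puiseux $P$, the identity is a first-order sentence: for all stochastic matrices with this support, the least solution equals the polynomial ratio. It therefore transfers to $\K$ by \cref{app:lem:puiseux-calculus}(ii).
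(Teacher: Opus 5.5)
Your proposal is correct and follows essentially the same route as the paper: make the denominator positive with the shortest-path forest, take the real identity from \cite[Chapter~6, Lemma~3.3]{FreidlinWentzell12}, and lift it to $\K$ with the transfer principle of \cref{app:lem:puiseux-calculus}. One small slip is in your uniqueness step: \eqref{eq:reach-system} with target $R'$ is not uniquely solvable by itself, because each $v\in R\setminus R'$ only gives $q_v=q_v$, so you should first note that the least solution is $0$ there and only then invoke invertibility of $(I-P)_{V\setminus R}$ (also, it is the full rows of $I-P$ that sum to zero, not the rows after the absorbing columns are dropped).
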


\begin{proof}
  The lemma is a special case of the classical forest formula~\cite[Chapter~6, Lemma 3.3]{FreidlinWentzell12}. It transfers to $\K$ by \cref{app:lem:puiseux-calculus}.
\end{proof}

\begin{example}
  \label{app:ex:forest-formula}
  Consider the Markov chain on $V=\{1,2,3,4\}$, where $3$ and $4$
  are absorbing and
  \[
    P(1,2)=P(1,3)=\frac12,
    \qquad P(2,1)=\frac34,\quad P(2,4)=\frac14,
  \]
as depicted in \Cref{app:fig:forest-formula-chain}.
  
  \begin{figure}[htbp]
    \centering
    \begin{tikzpicture}[x=1cm,y=1cm,font=\footnotesize,
        >={Stealth[length=1.7mm,width=1.5mm]}]
      \tikzset{
        mcstate/.style={circle,draw,fill=white,minimum size=5.6mm,inner sep=0.5pt},
        mcroot/.style={mcstate,double,double distance=0.6pt},
        tstate/.style={circle,draw,fill=white,minimum size=4.2mm,inner sep=0pt,font=\scriptsize},
        troot/.style={tstate,double,double distance=0.5pt},
        tlab/.style={align=center,font=\scriptsize,inner sep=1pt},
        unused/.style={->,gray!45,line width=0.4pt,>={Stealth[length=1.2mm,width=1mm]}},
        chosen/.style={->,line width=1pt,>={Stealth[length=1.9mm,width=1.7mm]}},
        loopy/.style={->,>={Stealth[length=1.3mm,width=1.1mm]}},
        box/.style={draw=gray!50,fill=gray!7,rounded corners=3pt}
      }
      \node[mcstate] (s1) at (0,0) {$1$};
      \node[mcstate] (s2) at (1.5,0) {$2$};
      \node[mcroot] (s3) at (0,-1.05) {$3$};
      \node[mcroot] (s4) at (1.5,-1.05) {$4$};
      \draw[->] (s1) to[bend left=25] node[above,inner sep=1pt] {$\tfrac12$} (s2);
      \draw[->] (s2) to[bend left=25] node[below,inner sep=1pt] {$\tfrac34$} (s1);
      \draw[->] (s1) -- node[left,inner sep=1pt] {$\tfrac12$} (s3);
      \draw[->] (s2) -- node[right,inner sep=1pt] {$\tfrac14$} (s4);
      \draw[loopy] (s3) to[out=-115,in=-65,min distance=3.5mm] node[below,inner sep=1pt] {$1$} (s3);
      \draw[loopy] (s4) to[out=-115,in=-65,min distance=3.5mm] node[below,inner sep=1pt] {$1$} (s4);
      \foreach \x/\name in {2.7/A,4.33/B,5.96/C} {
        \draw[box] (\x-0.33,0.32) rectangle (\x+1.13,-1.95);
        \node[tstate] (\name1) at (\x,0) {$1$};
        \node[tstate] (\name2) at (\x+0.8,0) {$2$};
        \node[troot]  (\name3) at (\x,-0.75) {$3$};
        \node[troot]  (\name4) at (\x+0.8,-0.75) {$4$};
        \draw[unused] (\name1) to[bend left=25] (\name2);
        \draw[unused] (\name2) to[bend left=25] (\name1);
        \draw[unused] (\name1) -- (\name3);
        \draw[unused] (\name2) -- (\name4);
        \draw[unused] (\name3) to[out=-115,in=-65,min distance=2.5mm] (\name3);
        \draw[unused] (\name4) to[out=-115,in=-65,min distance=2.5mm] (\name4);
      }
      \draw[chosen] (A1) to[bend left=25] (A2);  \draw[chosen] (A2) -- (A4);
      \draw[chosen] (B2) to[bend left=25] (B1);  \draw[chosen] (B1) -- (B3);
      \draw[chosen] (C1) -- (C3);                \draw[chosen] (C2) -- (C4);
      \node[tlab] at (3.1,-1.55)  {$\wt(F)=\tfrac18$\\ $1$ leads to $4$};
      \node[tlab] at (4.73,-1.55) {$\wt(F)=\tfrac38$\\ $1$ leads to $3$};
      \node[tlab] at (6.36,-1.55) {$\wt(F)=\tfrac18$\\ $1$ leads to $3$};
    \end{tikzpicture}
    \caption{The Markov chain of \cref{app:ex:forest-formula} and its three spanning forests.}
    \Description{A Markov chain with transient states 1 and 2 and absorbing roots 3 and 4, drawn with double circles and self-loops of probability 1. State 1 moves to 2 with probability 1/2 and to 3 with probability 1/2. State 2 moves to 1 with probability 3/4 and to 4 with probability 1/4. Next to it are the three spanning forests rooted at 3 and 4, with the chosen edges in bold: first, edges 1 to 2 and 2 to 4, with weight 1/8, so 1 leads to 4; second, edges 2 to 1 and 1 to 3, with weight 3/8, so 1 leads to 3; third, edges 1 to 3 and 2 to 4, with weight 1/8, so 1 leads to 3.}
    \label{app:fig:forest-formula-chain}
  \end{figure}
  
  We apply \cref{app:lem:forest-formula} with roots $R=\{3,4\}$ to
  compute the probability of reaching $3$ from state $1$.

  Each spanning forest selects one outgoing edge from each of $1$ and $2$.
  Selecting both $1\to2$ and $2\to1$ creates a cycle, so this choice
  is excluded. The three spanning forests, drawn in \cref{app:fig:forest-formula-chain}, are
  \[
    \begin{array}{c|c|c}
      F & \wt(F) & \text{Root to which }1\text{ leads} \\
      \hline
      \{1\to2,\;2\to4\} & \frac12\cdot\frac14=\frac18 & 4 \\
      \{1\to3,\;2\to1\} & \frac12\cdot\frac34=\frac38 & 3 \\
      \{1\to3,\;2\to4\} & \frac12\cdot\frac14=\frac18 & 3
    \end{array}
  \]
  The denominator sums the weights of all three forests, while the
  numerator sums only those in which $1$ leads to $3$. Hence
  \[
    \Prb{1}{P}(\Reach{\{3\}})
    =\frac{\frac38+\frac18}{\frac18+\frac38+\frac18}
    =\frac45.
  \]
\end{example}

Combining the forest formula with the rank rules gives the rank of an
absorption probability.

\begin{lemma}[Rank criterion]
  \label{app:lem:rank-criterion}
  Let $P$ be a Puiseux Markov chain on $V$ with a set $R\subseteq V$ of
  absorbing states such that every state has a path to $R$ in the support
  graph.
  For all $u\in V$ and $R'\subseteq R$,
  \[
    \rk\bigl(\Prb{u}{P}(\Reach{R'})\bigr)
    =\min_{F\in\Forests_{R'}(u)}\rk(F)-\min_{F\in\Forests}\rk(F).
  \]
\end{lemma}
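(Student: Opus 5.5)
The plan is to combine the forest formula of \cref{app:lem:forest-formula} with the rank rules stated above. Since every state has a support path to $R$, \cref{app:lem:forest-formula} (transferred to $\K$) gives
\[
  \Prb{u}{P}(\Reach{R'}) = \frac{\sum_{F\in\Forests_{R'}(u)}\wt(F)}{\sum_{F\in\Forests}\wt(F)},
\]
with a strictly positive denominator. I will compute the rank of the numerator and of the denominator separately, and then apply the quotient rule $\rk(x/y)=\rk(x)-\rk(y)$ for $y\neq 0$.

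First, for each forest $F$, every factor $P(v,w)$ with $(v,w)\in F$ is a nonzero, nonnegative element of $\K$, since forests use only support edges. Repeated use of the product rule then gives $\rk(\wt(F))=\sum_{(v,w)\in F}\rk(P(v,w))=\rk(F)$, and in particular $\wt(F)>0$. Second, all terms $\wt(F)$ are positive, so the sum rule for nonnegative elements applies by induction on the number of terms: $\rk\bigl(\sum_{F\in\mathcal S}\wt(F)\bigr)=\min_{F\in\mathcal S}\rk(F)$ for any finite family $\mathcal S$. This holds because the leading coefficients are all positive and cannot cancel. The empty family gives $0$ on the left and $\min\emptyset=\infty$ on the right, which agree by the convention $\rk(0)=\infty$.

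Applying this to $\mathcal S=\Forests$ shows that the denominator has finite rank $\min_{F\in\Forests}\rk(F)$. Finiteness follows because $\Forests\neq\emptyset$: as in the sketch of \cref{lem:forest-formula}, picking the first edge of a shortest support path to $R$ at each nonroot yields a forest, by \cref{app:lem:functional-forest}. Applying it to $\mathcal S=\Forests_{R'}(u)$ gives the numerator rank. The quotient rule then yields the claim. If $\Forests_{R'}(u)=\emptyset$, both sides equal $\infty$, using $\infty-k=\infty$.

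The only delicate point is the justification that ranks add under products and take minima under sums in $\K$ when exponents may be fractional. This follows as in the rank rules above, since leading coefficients multiply and positive leading coefficients add without cancelling; otherwise the argument is routine.
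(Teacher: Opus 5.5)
Your proposal is correct and follows the paper's own proof: apply the forest formula, use the quotient rule with the nonzero denominator, and then the sum rule for nonnegative terms, each forest weight having rank $\rk(F)$ by the product rule. Your extra remarks are fine additions the paper leaves implicit: $\Forests\neq\emptyset$ (so the denominator has finite rank), and the case $\Forests_{R'}(u)=\emptyset$, where both sides equal $\infty$.
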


\begin{proof}
  By \cref{app:lem:forest-formula} and the rank rules after \cref{app:def:rank},
  \begin{align*}
    \rk\bigl(\Prb{u}{P}(\Reach{R'})\bigr)
    &=\rk\Bigl(\textstyle\sum_{F\in\Forests_{R'}(u)}\wt(F)\Bigr)
      -\rk\Bigl(\textstyle\sum_{F\in\Forests}\wt(F)\Bigr)\\
    &=\min_{F\in\Forests_{R'}(u)}\rk(F)-\min_{F\in\Forests}\rk(F).
  \end{align*}
  The first equality uses the quotient rule, since the denominator is
  nonzero. The second uses the sum rule, since each weight
  $\wt(F)=\prod_{(v,w)\in F}P(v,w)$ is nonnegative and has rank $\rk(F)$
  by the product rule.
\end{proof}

\begin{definition}[Strongly connected components]
  A \emph{strongly connected component} (SCC) of a Markov chain $P$ on $V$
  is a maximal nonempty set $C\subseteq V$ such that any two states of $C$
  are reachable from one another in the support graph. An SCC $C$ is
  \emph{bottom} if $P(u,v)=0$ for all $u\in C$ and $v\in V\setminus C$.
\end{definition}

Once a Markov chain enters a bottom SCC, it stays there forever. A finite Markov chain reaches a bottom SCC almost surely. Since target states are absorbing, a strongly connected set of states that contains a target state is that state alone.
Let $\stratteam$ be a Puiseux team strategy and $\stratopp\in D$. We
write $R\subseteq\states$ for the union of the bottom SCCs of
$P_{\stratteam,\stratopp}$, and
$R'=\bigcup\{C\mid C\text{ is a bottom SCC with }C\cap\target=\emptyset\}$.
Making the states of $R$ absorbing gives the Puiseux Markov chain
$P\colon\states\times\states\to\K$ with
\[
  P(u,v)=
  \begin{cases}
    P_{\stratteam,\stratopp}(u,v) & \text{if }u\notin R,\\
    1 & \text{if }u=v\in R,\\
    0 & \text{otherwise}.
  \end{cases}
\]
On the spanning forests of $P$, we consider the condition
\[
  \min_{F\in\Forests_{R'}(\sinit)}\rk(F)
  >
  \min_{F\in\Forests}\rk(F).
  \tag{$\star$}\label{app:eq:rank-criterion}
\]
By \cref{app:lem:rank-criterion}, \eqref{app:eq:rank-criterion} says that the
failure probability $\Prb{\sinit}{P}(\Reach{R'})$ has positive rank.
Equivalently, $\wt(F)/\wt(F^*)$ is infinitesimal for every
$F\in\Forests_{R'}(\sinit)$ and every $F^*\in\Forests$ of minimum rank.

\begin{lemma}
  \label{app:lem:ranks-only}
  The strategy $\stratteam$ is Puiseux-winning if and only if
  \eqref{app:eq:rank-criterion} holds for every $\stratopp\in D$. In
  particular, if a Puiseux team strategy $\stratteam'$ has the same action
  ranks as $\stratteam$, then $\stratteam'$ is Puiseux-winning if and only
  if $\stratteam$ is.
\end{lemma}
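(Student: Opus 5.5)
Fix a response $\stratopp\in D$ and write $P$ for the chain obtained from $P_{\stratteam,\stratopp}$ by making the union $R$ of its bottom SCCs absorbing. The plan is to show that, for each such $\stratopp$, the condition $\st\bigl(\Prb{\sinit}{\stratteam,\stratopp}(\Reach{\target})\bigr)=1$ is equivalent to \eqref{app:eq:rank-criterion}. Then we take the conjunction over the finite set $D$.

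First, we relate the reachability probability to absorption in $P$. Since a finite Markov chain reaches a bottom SCC almost surely, and this transfers to $\K$ by \cref{app:lem:puiseux-calculus}(ii), the play in $P_{\stratteam,\stratopp}$ ends in $R$ with probability $1$. Every bottom SCC meeting $\target$ is a single absorbing target state. A bottom SCC avoiding $\target$ is never left, so it never reaches $\target$. Non-bottom target states are impossible, because target states are absorbing and hence form bottom SCCs. Consequently $\Prb{\sinit}{\stratteam,\stratopp}(\Reach{\target})=1-\Prb{\sinit}{P}(\Reach{R'})$. Here we also use that $P$ and $P_{\stratteam,\stratopp}$ agree until $R$ is entered, which is the least-solution characterisation \eqref{eq:reach-system} applied to both chains. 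In $P$ every state has a support path to $R$, so \cref{app:lem:rank-criterion} applies. It gives $\rk\bigl(\Prb{\sinit}{P}(\Reach{R'})\bigr)=\min_{F\in\Forests_{R'}(\sinit)}\rk(F)-\min_{F\in\Forests}\rk(F)$.

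Second, we translate standard parts into ranks. A probability $x\in\K$ with $0\le x\le1$ has only nonnegative exponents, as in the proof of \cref{app:cor:integer-exponents}. Hence $\st(x)=0$ holds exactly when $x=0$ or $\rk(x)>0$, that is, when $\rk(x)>0$ with the convention $\rk(0)=\infty$. Applied to the failure probability, this shows that $\st(\text{success})=1$ if and only if \eqref{app:eq:rank-criterion} holds. The case $\Forests_{R'}(\sinit)=\emptyset$ is covered by $\min\emptyset=\infty$.

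For the ``in particular'' part, the support and the transition ranks of $P_{\stratteam,\stratopp}$ depend only on the action ranks. Each entry is a sum over profiles of products of action probabilities with positive real coefficients $\stratopp(s)(b)\,\delta(s,(\bar a,b))(t)$, so no cancellation can occur. Its rank is therefore given by \eqref{eq:transition-rank}, and it is zero exactly when no contributing profile has finite rank. Thus $R$, $R'$, $\Forests$, $\Forests_{R'}(\sinit)$ and all forest ranks coincide for $\stratteam$ and $\stratteam'$. Condition \eqref{app:eq:rank-criterion} then holds for one strategy if and only if it holds for the other, and the first part finishes the argument.

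The main obstacle is the first step. We must check carefully that the identification of failure with absorption into $R'$ is valid over $\K$ rather than $\mathbb{R}$. The plan is to state ``the reachability values equal the absorption values in $P$ and sum to $1$'' as a first-order property of finite chains with fixed support, and then transfer it via \cref{app:lem:puiseux-calculus}(ii).
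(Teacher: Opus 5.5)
Your proposal is correct and follows the paper's proof essentially step for step. You fix $\stratopp\in D$ and identify failure with absorption into $R'$ in the chain where the bottom SCCs are made absorbing. You then apply the rank criterion together with the fact that $\st(x)=0$ if and only if $\rk(x)>0$ for a bounded probability. Finally, you note that the action ranks determine the support, $R$, $R'$ and all forest ranks through the transition-rank formula.

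The one real addition is your care in transferring the fact that a finite chain is absorbed in a bottom SCC to $\K$ via first-order transfer, which the paper leaves implicit. One sentence is worded ambiguously. It is the transition probability, not its rank, that is zero exactly when no supported profile contributes; the rank is then $\infty$.
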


\begin{proof}
  As in the proof of \cref{app:lem:puiseux}, it suffices to consider
  opponent strategies in $D$. We fix $\stratopp\in D$
  and define $P$, $R$, and $R'$ as above. The Markov chain reaches a bottom SCC almost
  surely. It therefore avoids $\target$ exactly when it reaches $R'$, so
  \[
    1-\Prb{\sinit}{\stratteam,\stratopp}(\Reach{\target})
    =\Prb{\sinit}{P}(\Reach{R'}).
  \]
  The strategy $\stratteam$ is Puiseux-winning if and only if this
  failure probability has standard part $0$ for every $\stratopp$. By the
  definition of the standard part, a bounded $x\in\K$ satisfies
  $\st(x)=0$ if and only if $\rk(x)>0$. Every state of $P$ has a path to
  $R$, so by \cref{app:lem:rank-criterion},
  $\rk\bigl(\Prb{\sinit}{P}(\Reach{R'})\bigr)>0$ is equivalent to
  \eqref{app:eq:rank-criterion}.

  The action ranks determine the transition ranks of
  $P_{\stratteam,\stratopp}$. Indeed, the team players randomise
  independently, so by the product rule a team profile $\bar a\in A_{\team}$
  at a state $s\in\states$ has rank
  $\rk(\stratteam(s)(\bar a))=\sum_{p\in\team}r_{p,s,a_p}$. A transition
  probability is a sum of such probabilities multiplied by positive reals.
  For all $s,t\in\states$, the sum rule therefore gives
  \begin{equation}
    \label{app:eq:transition-rank}
    \rk\bigl(P_{\stratteam,\stratopp}(s,t)\bigr)
    =\min\left\{
      \sum_{p\in\team}r_{p,s,a_p}
      \;\middle|\;
      \begin{array}{l}
        \bar a\in\Supp(\stratteam(s)),\\
        \delta\bigl(s,(\bar a,\stratopp(s))\bigr)(t)>0
      \end{array}
    \right\}.
  \end{equation}
  The action ranks thus determine the support, the bottom SCCs, $R$,
  $R'$, and both minima in \eqref{app:eq:rank-criterion}.
\end{proof}

\begin{proposition}[Computing forest minima]
  \label{app:prop:mdst-poly}
  Let $P$ be a Puiseux Markov chain on $V$, given by its transition ranks
  $\rk(P(v,w))\in\mathbb N\cup\{\infty\}$, and let $R\subseteq V$
  be a set of absorbing states. For
  $u\in V\setminus R$ and $R'\subseteq R$, \cref{app:algo:forest-min} computes
  $\min_{F\in\Forests_{R'}(u)}\rk(F)$ in polynomial time.
\end{proposition}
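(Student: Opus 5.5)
The plan is to reduce the computation of $\min_{F\in\Forests_{R'}(u)}\rk(F)$ to a minimum-weight arborescence problem, as sketched after \cref{lem:forest}, and solve it with Edmonds' algorithm~\cite{Edmonds67,ChuLiu65,Tarjan77}. First I discard all pairs $(v,w)$ with $\rk(P(v,w))=\infty$, since they are not support edges, and all edges leaving $R$. The remaining edges get weight $\rk(P(v,w))\in\mathbb N$. If $R'=\emptyset$ the answer is $\infty$. Otherwise I collapse $R'$ into a single vertex $r'$, keeping for each $v$ the minimum-rank edge from $v$ into $R'$ (parallel edges may simply be kept). Forests in which $u$ leads to $R'$ correspond, with equal rank, to forests in the collapsed graph in which $u$ leads to $r'$. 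Indeed, by \cref{app:lem:functional-forest}, the component of $u$ contains exactly one root, and the roots in $R'$ receive no outgoing edges, so merging them creates no cycle. Minimising over which member of $R'$ each edge enters is then exactly the minimum taken in the collapsing step.

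Next I handle the roots $R\setminus R'$, where the path from $u$ must not end. I also collapse them into a single vertex $r''$; if $R'=R$ this step is vacuous. I then build a graph $H$ on $(V\setminus R)\cup\{r'\}$ by deleting $r''$ and redirecting every edge $(v,r'')$ to $(v,u)$ with the same weight. Edges into $u$ from $u$ itself become self-loops, which I drop, since they would form a cycle. The main claim is that spanning arborescences of $H$ directed towards $r'$ correspond bijectively, with equal weight, to forests $F\in\Forests_{R'}(u)$.

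To prove the bijection, take such a forest $F$ and apply the redirection. Each non-root vertex still has exactly one outgoing edge. A cycle could only pass through a redirected edge into $u$, which would mean $u$ leads back to itself. But $u$ leads to $r'$ in $F$, and the edges on its path are not redirected, so no cycle arises. Hence the image is an arborescence towards $r'$. Conversely, given an arborescence of $H$, I undo the redirection by sending each marked edge back to $r''$. The result has one outgoing edge per non-root vertex and no cycle, and its edges form a forest. In it, $u$ leads to $r'$, because its outgoing path in the arborescence reaches $r'$ without passing through $u$ again, so it uses no redirected edge. If $u$ had led to $r''$, its path would end with an edge that was redirected to $u$, giving a cycle through $u$; this is exactly the situation that was excluded.

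The delicate point, and the step I expect to be the main obstacle, is making this correspondence an exact bijection. Redirected edges must stay distinguishable from genuine edges into $u$, so I will keep them as labelled parallel edges. Edges leaving $u$ towards $r''$ are handled by the self-loop deletion. With this in place, Edmonds' algorithm returns the minimum arborescence weight, or reports that none exists, in which case the answer is $\infty$. The algorithm runs in polynomial time on integer weights given in binary, so the whole procedure is polynomial. Finally, the denominator $\min_{F\in\Forests}\rk(F)$ is obtained as the special case $R'=R$ with the same routine.
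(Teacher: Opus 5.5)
Your proposal is correct and follows essentially the same route as the paper. Both contract $R'$ into one root, delete $R\setminus R'$, and redirect edges into it to $u$. The paper keeps these redirected edges as labelled copies $e_{v,w}$, which is exactly the parallel-edge bookkeeping you flagged as the delicate point. Both then show that $F\in\Forests_{R'}(u)$ holds exactly when the image is an arborescence towards the contracted root, since a redirected edge on $u$'s path would close a cycle at $u$. Both finish with a minimum-weight arborescence algorithm.

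Your extra steps are harmless. Pruning to the cheapest edge into $R'$ per vertex and dropping the self-loops at $u$ turn the bijection into a correspondence that still preserves the minimum rank.
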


\begin{proof}
  We reduce the problem to an instance of the minimum-weight spanning arborescence problem. For this, we
  contract $R'$ into a new vertex $\hat v$, delete the vertices of
  $R\setminus R'$, and redirect every edge into $R\setminus R'$ to $u$.
  A set $F$ of edges of the support graph with one outgoing edge at every
  vertex of $V\setminus R$ becomes a set $\hat F$ with one outgoing edge at
  every vertex other than $\hat v$, and
  \[
    F\in\Forests_{R'}(u)
    \iff
    \text{every vertex leads to } \hat v \text{ in } \hat F.
  \]
  Indeed, if $F\in\Forests_{R'}(u)$ then a path of $F$ ending in
  $R\setminus R'$ continues in $\hat F$ from $u$ and ends in $\hat v$.
  Conversely, if every vertex leads to $\hat v$ in $\hat F$ then the path
  from $u$ uses no redirected edge, since such an edge would close a cycle
  at $u$. Hence, $u$ leads to $R'$ in $F$, and every path of $F$ ends in $R$.
  The correspondence preserves the sum of ranks. \Cref{app:algo:forest-min}
  therefore returns the minimum with one call to an oracle
  \texttt{MinDirectedSpanningForest}, which returns the minimum rank of a
  set $\hat F$ on the right, or $\infty$ if no such set exists. By
  \cref{app:lem:functional-forest}, these sets are the spanning trees directed
  towards $\hat v$. A minimum-weight spanning arborescence
  algorithm~\cite{Edmonds67,ChuLiu65,Tarjan77} on edges thus
  implements the oracle in polynomial time.
\end{proof}

\algtext*{EndIf}
\algtext*{EndFor}

\begin{algorithm}[H]
\caption{Forest Minimum \texttt{ForestMin}}
\label{app:algo:forest-min}
\begin{algorithmic}[1]
    \State \textbf{Input:} $V$, $\rk(P(v,w))$ for $v,w\in V$, $R$,
           $u\in V\setminus R$, $R'\subseteq R$
    \State \textbf{Output:} $\min_{F\in\Forests_{R'}(u)}\rk(F)$

    \State $\hat V\gets(V\setminus R)\cup\{\hat v\}$
    \State $\hat E\gets\emptyset$
    \ForAll{$v\in V\setminus R$ and $w\in V$ with $\rk(P(v,w))<\infty$}
        \If{$w\in R'$}
            \State $w'\gets\hat v$
                \Comment{Contract $R'$}
        \ElsIf{$w\in R\setminus R'$}
            \State $w'\gets u$
                \Comment{Redirect to $u$}
        \Else
            \State $w'\gets w$
        \EndIf
        \State $\hat E\gets\hat E\cup\{e_{v,w}\}$
            \Comment{Edge from $v$ to $w'$}
        \State $\rk(e_{v,w})\gets\rk(P(v,w))$
    \EndFor

    \State \Return $\texttt{MinDirectedSpanningForest}(\hat V,\hat E,\rk,\hat v)$
\end{algorithmic}
\end{algorithm}

Since $\Forests=\Forests_R(\sinit)$ for $\sinit$, both minima in
\eqref{app:eq:rank-criterion} can be computed in polynomial time by \cref{app:algo:forest-min}.

\subsubsection{Small ranks}
\label{app:sec:small-ranks}

By \cref{app:lem:ranks-only}, a Puiseux-winning strategy is described by its
action ranks. To guess such a strategy in nondeterministic polynomial time,
we need ranks with polynomially many bits. We now recall two
linear-algebra facts that help us show that such ranks exist.

\paragraph{Cramer's rule.}
For an invertible matrix $A\in\mathbb{R}^{k\times k}$ and a vector
$b\in\mathbb{R}^k$, the unique solution $x\in\mathbb{R}^k$ of $Ax=b$ satisfies
\[
  x_i=\frac{\det A_i}{\det A}\quad\text{for }1\leq i\leq k,
\]
where $A_i\in\mathbb{R}^{k\times k}$ is obtained from $A$ by replacing its
$i$th column with $b$.
If $A$ and $b$ have integer entries, then $\det A$ is a nonzero integer,
so $\abs{\det A}\geq1$ and $\abs{\det A}\,x\in\mathbb{Z}^k$.

\paragraph{Hadamard's inequality.}
For a matrix $B\in\mathbb{R}^{k\times k}$ with columns
$b_1,\ldots,b_k\in\mathbb{R}^k$,
\[
  \abs{\det B}\leq\prod_{i=1}^k\|b_i\|_2,
\]
where $\|b_i\|_2$ is the Euclidean length of column $b_i$.
In particular, every $B\in\{-1,0,1\}^{k\times k}$ has $\|b_i\|_2\leq\sqrt{k}$
for every $i$ and hence $\abs{\det B}\leq k^{k/2}$. Together with Cramer's
rule, we obtain the following bound. If $A\in\{-1,0,1\}^{k\times k}$ is
invertible and $b\in\{0,1\}^k$, then the solution $x\in\mathbb{R}^k$ of
$Ax=b$ satisfies
\begin{align*}
  \abs{\det A}\,x&\in\mathbb{Z}^k,\\
  \abs{\det A}\,\abs{x_i}&=\abs{\det A_i}\leq k^{k/2}
  \quad\text{for }1\leq i\leq k.
\end{align*}

\begin{lemma}
  \label{app:lem:small-ranks}
  Let $N=\abs{\{(p,s,a)\mid p\in\team,\ s\in\states,\ a\in A_p\}}$. If the team wins limit surely, then there exists a Puiseux-winning team strategy whose finite ranks are nonnegative integers bounded by $N^{N/2}$.
\end{lemma}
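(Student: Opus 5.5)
Starting from a Puiseux-winning strategy $\stratteam$ with integer exponents (\cref{app:cor:integer-exponents}), we will describe winning by a finite system of linear inequalities in the action ranks, and then replace the original ranks by a small vertex solution of that system. Let $r^\circ=(r^\circ_{p,s,a})$ be the ranks of $\stratteam$. We fix the support $S=\{(p,s,a): r^\circ_{p,s,a}<\infty\}$ and keep it for the new strategy. Every quantity that depends only on the support therefore stays unchanged: for each $\stratopp\in D$, the support graph of $P_{\stratteam,\stratopp}$, its bottom SCCs, $R$, $R'$, and the sets $\Forests$ and $\Forests_{R'}(\sinit)$. The unknowns are $x_{p,s,a}$ for $(p,s,a)\in S$, so there are at most $N$ of them.

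Next we encode condition \eqref{app:eq:rank-criterion} for each $\stratopp\in D$. By \eqref{app:eq:transition-rank}, each forest rank is a minimum over choices of one support team profile per forest edge. For a forest $F$ and such a choice $c$, the resulting \emph{term} $\ell_{F,c}(x)$ is a linear form in $x$. Its coefficients are nonnegative integers, and one forest may use the same action at a state for several edges only if that state has several edges, which it does not, since $F$ has one edge per nonroot. So each state contributes at most one profile, and each action rank appears with coefficient at most $1$. Under $r^\circ$ we pick a minimising forest $F^*\in\Forests$ and a minimising choice $c^*$. Condition ($\star$) is then implied by requiring $\ell_{F,c}(x)\geq \ell_{F^*,c^*}(x)+1$ for every $F\in\Forests_{R'}(\sinit)$ and every choice $c$. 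This holds at $r^\circ$ because ($\star$) holds and the ranks are integers. Indeed, for any new $x$ satisfying it, $\min_{\Forests_{R'}(\sinit)}\rk\geq \ell_{F^*,c^*}(x)+1>\min_{\Forests}\rk$. The differences $\ell_{F,c}-\ell_{F^*,c^*}$ have coefficients in $\{-1,0,1\}$ once we cancel actions shared by both terms. This cancellation needs care: a single action may appear once in each term and cancel to $0$, so each coefficient is indeed in $\{-1,0,1\}$. We add $x\geq0$ and $x_{p,s,a}=0$ for the actions with $r^\circ_{p,s,a}=0$, which keeps $\min_a x_{p,s,a}=0$. The resulting polyhedron is nonempty since $r^\circ$ lies in it, and it is pointed since $x\geq0$, so it has a vertex $v$.

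At $v$, some $k\le N$ linearly independent constraints are tight. They form a system $Av=b$ with $A\in\{-1,0,1\}^{k\times k}$ invertible (after restricting to the nonzero coordinates) and $b\in\{0,1\}^k$. By Cramer's rule and Hadamard's inequality, $\tilde r=\abs{\det A}\,v$ is a nonnegative integer vector with entries at most $k^{k/2}\le N^{N/2}$. Scaling by $\abs{\det A}\ge1$ preserves homogeneous constraints, the zero constraints, and constraints of the form $\ge1$. Finally we realise $\tilde r$ (with $\infty$ outside $S$) via \eqref{app:eq:realisation}. By \cref{app:lem:ranks-only} and the rank criterion computed with this support, the realised strategy is Puiseux-winning.

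The main obstacle is the middle step: making sure the system really captures ($\star$) with $\{-1,0,1\}$ coefficients. This requires that a minimum over profile choices is only lower-bounded by each term in the failure side, which is the correct direction. It also requires that the single fixed witness $(F^*,c^*)$ on the denominator side suffices, since we only need an upper bound on $\min_\Forests\rk$. The bookkeeping that each action rank occurs at most once per term, using one profile per state, is what keeps the entries in $\{-1,0,1\}$.
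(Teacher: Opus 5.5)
Your proposal is correct and is essentially the paper's proof: you fix the support of a Puiseux-winning strategy with integer ranks, and encode $(\star)$ for every $\stratopp\in D$ by requiring each (forest, profile-choice) cost on the failure side to exceed a fixed minimum-cost witness $(F^*,c^*)$ by at least $1$, add $x\geq 0$ and the zero-rank equalities, and take a vertex of the resulting polyhedron. You then scale that vertex by $\abs{\det A}$ via Cramer's rule and Hadamard's inequality, realise it with \eqref{app:eq:realisation}, and appeal to \cref{app:lem:ranks-only}, which is exactly how the paper concludes (your remark about restricting to nonzero coordinates is unnecessary, since the tight constraints at a vertex already have full rank in all the unknowns).
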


\begin{proof}
  We start with a Puiseux-winning team strategy and construct
  another Puiseux-winning strategy with the same support whose finite action ranks
  are bounded by $N^{N/2}$. We write \eqref{app:eq:rank-criterion} as linear
  inequalities in the action ranks, take a vertex of their solution set,
  and scale it to an integer vector satisfying the required bound.

  By \cref{app:lem:puiseux,app:cor:integer-exponents}, we choose without loss of generality a Puiseux-winning strategy
  with integer action ranks. Let
  $I=\{(p,s,a)\mid p\in\team,\ s\in\states,\ a\in A_p,\ \stratp(s)(a)>0\}$
  be its support. We denote its finite ranks by
  \[
    r^*=(r^*_{p,s,a})_{(p,s,a)\in I}\in\mathbb{N}^I.
  \]
  We keep this support fixed and set $k=|I|\leq N$.

  We first express forest ranks as sums of action ranks. We fix
  $\stratopp\in D$ and consider the Markov chain $P$ and the sets $R$ and
  $R'$ defined before \eqref{app:eq:rank-criterion}. The support graph of $P$,
  the sets $R$ and $R'$, and the forest sets $\Forests$ and
  $\Forests_{R'}(\sinit)$ depend only on the support of the team strategy,
  which we keep fixed. Let $F\in\Forests$ be a spanning forest of $P$.
  For each edge $(u,v)\in F$, we choose a supported team profile
  $\alpha(u)\in\Supp(\stratteam(u))$ with
  $\delta\bigl(u,(\alpha(u),\stratopp(u))\bigr)(v)>0$. Since $F$ has
  exactly one edge leaving each state of $\states\setminus R$, these profiles form
  a map $\alpha\colon\states\setminus R\to A_{\team}$, which we call a
  \emph{choice} for $F$.

  For a vector $r=(r_{p,s,a})_{(p,s,a)\in I}\in\mathbb{R}^I$ of candidate
  ranks, we define the \emph{cost} of $F$ and a choice $\alpha$ by
  \[
    C_{\stratopp}(F,\alpha;r)=\sum_{(u,v)\in F}\sum_{p\in\team}r_{p,u,\alpha(u)_p}.
  \]
  If the actions have ranks $r$, then $C_{\stratopp}(F,\alpha;r)$ is the
  rank of the term of $\wt(F)$ in which each edge $(u,v)\in F$ uses the
  profile $\alpha(u)$, and \eqref{app:eq:transition-rank} gives
  $\rk(F)=\min_{\alpha}C_{\stratopp}(F,\alpha;r)$ over all choices $\alpha$
  for $F$. The cost is linear in $r$ since the coefficient of $r_{p,s,a}$ is $1$
  if $s\notin R$ and $\alpha(s)_p=a$, and $0$ otherwise.

  For each $\stratopp\in D$, we choose a forest $F_{\stratopp}^*\in\Forests$ and
  a choice $\alpha_{\stratopp}^*$ for it whose cost is smallest under the
  original ranks $r^*$. By \eqref{app:eq:rank-criterion} in \cref{app:lem:ranks-only},
  every forest in $\Forests_{R'}(\sinit)$ has a strictly larger rank than
  $F_{\stratopp}^*$. Since these ranks are
  integers, the difference is at least $1$.

  We require the rank $r$ to satisfy
  \[
    C_{\stratopp}(F,\alpha;r)
    -C_{\stratopp}(F_{\stratopp}^*,\alpha_{\stratopp}^*;r)
    \geq1
  \]
  for every $\stratopp\in D$, every $F\in\Forests_{R'}(\sinit)$, and every
  choice $\alpha$ for $F$.
  We also require $r\geq0$, and $r_{p,s,a}=0$ whenever $r^*_{p,s,a}=0$.

  The original vector $r^*$ satisfies this system. Every solution $r$
  also satisfies \eqref{app:eq:rank-criterion}. Indeed, with forest ranks
  computed from the action ranks $r$, every $\stratopp\in D$ and
  $F\in\Forests_{R'}(\sinit)$ satisfy
  \begin{align*}
    \rk(F)&=\min_{\alpha}C_{\stratopp}(F,\alpha;r)
          \geq C_{\stratopp}(F_{\stratopp}^*,\alpha_{\stratopp}^*;r)+1\\
          &\geq\rk(F_{\stratopp}^*)+1
          \geq\min_{F'\in\Forests}\rk(F')+1.
  \end{align*}

  The system may have exponentially many constraints, but we use it only
  to prove that bounded ranks exist. We write the full system as
  \[
    Mr\geq d,
    \qquad M\in\{-1,0,1\}^{m\times k},
    \qquad d\in\{0,1\}^{m},
    \qquad k\leq N,
  \]
  where $m$ is the number of inequalities. We include nonnegativity of entries of $r$
  among these inequalities and express each equality as two inequalities.
  Its solution set
  \[
    Q=\{r\in\mathbb{R}^I\mid Mr\geq d\}
  \]
  is a nonempty polyhedron in the nonnegative orthant, so it has a vertex
  $z\in Q$.

  At a vertex, the constraints that hold with equality determine a unique
  point and their coefficient matrix has full rank $k$, so we can select
  $k$ linearly independent rows of $M$. Let $A\in\{-1,0,1\}^{k\times k}$
  consist of these rows and $b\in\{0,1\}^k$ of the corresponding entries
  of $d$, giving
  \[
    Az=b.
  \]
  Thus $A$ is an invertible $k\times k$ matrix.

  By Cramer's rule, $r'=\abs{\det A}\,z\in\mathbb{Z}^I$, and
  \[
    Mr'=\abs{\det A}\,Mz\geq\abs{\det A}\,d\geq d.
  \]
  The multiplier $\abs{\det A}$ is at least $1$ since $\det A$ is a nonzero integer, so every inequality remains satisfied
  and coordinates fixed to zero stay zero. By the bound after Hadamard's
  inequality,
  \[
    0\leq r'_{p,s,a}\leq k^{k/2}\leq N^{N/2}.
  \]

  The minimum rank at each player-state pair remains zero, so
  \eqref{app:eq:realisation}, with $r'_{p,s,a}=\infty$ for $(p,s,a)\notin I$,
  gives a Puiseux team strategy with ranks $r'$. It is Puiseux-winning by
  \cref{app:lem:ranks-only}.
\end{proof}

Since $N$ is polynomial in the size of the game and $\log_2(N^{N/2})=\frac{N}{2}\log_2 N$, ranks can be encoded with polynomially many bits. Guessing these ranks and checking \eqref{app:eq:rank-criterion} for every $\stratopp\in D$ gives membership in $\mathsf{NP}^{\mathsf{NP}}$ as \eqref{app:eq:rank-criterion} can be checked in polynomial time with \cref{app:algo:forest-min}. In the following sections, we improve the bound to $\mathsf{NP}$.

\subsection{Puiseux Markov Decision Processes}
\label{app:sec:puiseux-mdp}

By \cref{app:prop:mdst-poly}, we can check \eqref{app:eq:rank-criterion} for a
single $\stratopp\in D$ in polynomial time. However, the chain $P$ in
\eqref{app:eq:rank-criterion} depends on $\stratopp$ through its bottom SCCs,
and $D$ can be exponentially large. We therefore build one object for all
$\stratopp\in D$ from the MDP induced by $\stratteam$. 

\subsubsection{The induced Puiseux MDP}

We fix action ranks $r_{p,s,a}\in\mathbb N\cup\{\infty\}$ with
$\min_{a\in A_p}r_{p,s,a}=0$ for all $p\in\team$ and $s\in\states$, and let
$\stratteam$ be the Puiseux team strategy \eqref{app:eq:realisation}.

\begin{definition}[Induced Puiseux MDP]
  \label{app:def:induced-mdp}
  The \emph{Puiseux MDP induced by $\stratteam$} is the MDP $\imdp$ of
  \cref{def:mdp} for $\stratteam$, with transition function $P=P_{\stratteam}\colon\states\times A_{\opp}\to\K^{\states}$
  and transition ranks $\rk(s,b,t)=\rk(P(s,b)(t))$ for $s,t\in\states$ and
  $b\in A_{\opp}$.
\end{definition}

These ranks are given by \eqref{app:eq:transition-rank} with $b$ in place of
$\stratopp(s)$, so we can compute them in polynomial time. For
$\stratopp\in D$, we write
$\Prb{u}{\stratopp}=\Prb{u}{\stratteam,\stratopp}$ for the probabilities in
$\imdp$ after fixing $\stratopp$.

\begin{lemma}[Zero-rank transitions]
  \label{app:lem:zero-edges}
  For all $s\in\states$ and $b\in A_{\opp}$, there is $t\in\states$
  with $\rk(s,b,t)=0$.
\end{lemma}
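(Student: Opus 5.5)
The argument uses only the sum rule for ranks together with the fact that $P(s,b)$ is a probability distribution over $\K$. Fix $s\in\states$ and $b\in A_{\opp}$. Since $\stratteam(s)$ is a product distribution over $A_{\team}$ whose entries are nonnegative elements of $\K$ summing to $1$, and each $\delta(s,(\bar a,b))$ is a real distribution, we have
\[
  \sum_{t\in\states}P(s,b)(t)
  =\sum_{\bar a\in A_{\team}}\stratteam(s)(\bar a)\sum_{t\in\states}\delta(s,(\bar a,b))(t)
  =1 .
\]
Every summand $P(s,b)(t)$ is a nonnegative element of $\K$, because it is a sum of products of nonnegative series and nonnegative reals.

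The rule $\rk(x+y)=\min\{\rk(x),\rk(y)\}$ for nonnegative $x,y$ extends by induction to finite sums of nonnegative elements. This gives
\[
  \min_{t\in\states}\rk\bigl(P(s,b)(t)\bigr)=\rk(1)=0 .
\]
The minimum over the finite set $\states$ is attained, so some $t$ satisfies $\rk(s,b,t)=0$. This minimum is finite, since the sum equals $1\neq0$ and hence not all terms are zero.

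The only point needing care is that ranks are well defined for the realisation \eqref{app:eq:realisation}, which has integer exponents. Since the strategy is fixed by \eqref{app:eq:realisation} with integer exponents, no fractional-exponent issue arises. As a cross-check, \eqref{app:eq:transition-rank} gives the same conclusion directly. At each $(p,s)$ some action has $r_{p,s,a}=0$, so the profile $\bar a$ made of these actions has rank $0$. Any $t$ with $\delta(s,(\bar a,b))(t)>0$ then has $\rk(s,b,t)\le0$, and ranks of probabilities are nonnegative, so $\rk(s,b,t)=0$. This second route is the one I would write out, as it is the most concrete. There is no real obstacle beyond recalling that ranks of elements bounded by $1$ are nonnegative, which the proof of \cref{app:cor:integer-exponents} already establishes.
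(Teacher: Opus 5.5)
Your proposal is correct. The second route, the one you say you would write out, is exactly the paper's proof: a profile built from rank-$0$ actions has rank $0$, so every successor in the support of $\delta(s,(\bar a,b))$ gets rank $0$ by \eqref{app:eq:transition-rank}. Your first route, via the sum rule and the fact that the transition probabilities sum to $1$, is the one-line justification the paper gives in the main text.
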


\begin{proof}
  Since $\min_{a\in A_p}r_{p,s,a}=0$ for all $p\in\team$, some
  $\bar a^0\in A_{\team}$ has $\sum_{p\in\team}r_{p,s,a^0_p}=0$, so every
  $t\in\Supp(\delta(s,(\bar a^0,b)))$ satisfies $\rk(s,b,t)=0$ by
  \eqref{app:eq:transition-rank}.
\end{proof}

\subsubsection{Traps and the merged MDP}
\label{app:sec:traps}\label{app:sec:forests}

\begin{definition}[End component]
  \label{app:def:end-component}
  An \emph{end component} of $\imdp$ is a pair $(X,\Gamma)$ with
  $\emptyset\neq X\subseteq\states$ and $\emptyset\neq\Gamma(x)\subseteq A_{\opp}$
  for $x\in X$ such that $\Supp(P(x,b))\subseteq X$ for all $x\in X$ and
  $b\in\Gamma(x)$, and the graph
  $\bigl(X,\{(x,y)\mid P(x,b)(y)>0\text{ for some }b\in\Gamma(x)\}\bigr)$
  is strongly connected. It is a
  \emph{maximal end component} (MEC) if $X$ and every $\Gamma(x)$ are inclusion-maximal.
\end{definition}

\begin{lemma}[Computing MECs]
  \label{app:lem:mec-support}
  The MECs of $\imdp$ can be computed in polynomial time.
\end{lemma}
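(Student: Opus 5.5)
The only input the MEC computation needs is the support of $\imdp$, that is, the set of triples $(s,b,t)$ with $P(s,b)(t)>0$. By \cref{app:def:end-component}, end components depend on $P$ only through which transition probabilities are positive. So we first build this support explicitly and then run a classical MEC algorithm for finite MDPs on it. The Puiseux nature of the probabilities never enters.

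\emph{Computing the support.} By \eqref{app:eq:transition-rank}, with $b$ in place of $\stratopp(s)$, we have $P(s,b)(t)>0$ if and only if $\rk(s,b,t)<\infty$. This holds exactly when some team profile $\bar a$ with all $r_{p,s,a_p}<\infty$ satisfies $\delta(s,(\bar a,b))(t)>0$. The number of team profiles is at most the size of the table of $\delta$, which is part of the input. So for each $(s,b,t)$ we scan the table entries at $(s,\cdot,b)$ and record whether $t$ is reached with positive probability under some profile whose actions all have finite guessed rank. This takes time polynomial in the size of the game and the encoding of the ranks, and yields a finite MDP graph $G$ on $\states$ with action labels from $A_{\opp}$.

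\emph{Computing the MECs.} Next we apply the standard iterative algorithm to $G$~\cite{CH14}. We start with $X=\states$ and all actions enabled, and repeat the following steps.

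First, compute the SCCs of the graph induced by the currently enabled actions.

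Second, for each state $x$, remove every action $b$ that has a successor outside the SCC of $x$.

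Third, delete every state that is left with no enabled action, together with all actions leading into deleted states.

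We stop when nothing changes. Each round removes at least one action or state, so there are at most $|\states|\cdot|A_{\opp}|$ rounds, and each round runs in polynomial time. At the fixpoint, the remaining SCCs with their enabled actions are exactly the MECs.

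Correctness is the classical argument. Every end component survives each round, because its actions stay inside one SCC. Conversely, each final SCC is closed under its enabled actions and strongly connected, so it is an end component, and maximality follows from the first property.

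The only step specific to this setting is the observation that positivity in $\K$ is determined by finiteness of the rank. I do not expect it to be an obstacle. It follows directly from the product and sum rules for nonnegative series, since leading coefficients of nonnegative terms cannot cancel. Everything else is the textbook procedure.
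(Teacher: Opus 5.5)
Your proof is correct and takes the paper's route: the paper likewise notes that end components depend only on the support of $\imdp$ and then cites the standard polynomial-time MEC algorithms, so your support computation (positivity of $P(s,b)(t)$ iff some profile of finite-rank actions reaches $t$) and your explicit iterative algorithm fill in details the paper leaves to the citations. One small point: your counting argument (one removed action or state per round) gives at most $|\states|(|A_{\opp}|+1)$ rounds rather than $|\states|\cdot|A_{\opp}|$, which is equally polynomial.
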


\begin{proof}
  End components are defined through the support only. We can therefore
  run the polynomial-time MEC algorithms for
  MDPs~\cite{CY95,deAlfaro97,CH14}.
\end{proof}

\begin{definition}[Trap]
  \label{app:def:trap}
  We write
  \[
    \Trap=\bigcup\{X\mid (X,\Gamma)\text{ is a MEC of }\imdp\text{ with }X\cap\target=\emptyset\},
  \]
  and obtain $\imdp^\sharp$ from $\imdp$ by making every state of $\Trap$
  absorbing.
\end{definition}

\begin{remark}
  \label{app:rem:not-only-bottom}
  Inside a MEC, the opponent can select actions that keep the play there. For example, if $s\notin\target$ and
  $P(s,b)(s)=1$ for some $b\in A_{\opp}$, then $(\{s\},\Gamma)$ with
  $\Gamma(s)=\{b\}$ is an end component, so $s\in\Trap$ even if another
  action moves to $\target$.
\end{remark}

\begin{lemma}[Trap criterion]
  \label{app:lem:trap}
  The Puiseux team strategy $\stratteam$ is Puiseux-winning if and only if
  \[
    \st\bigl(\Prb{\sinit}{\stratopp}(\Reach{\Trap})\bigr)=0
    \quad\text{in }\imdp^\sharp
  \]
  for every $\stratopp\in D$. Moreover,
  $\Prb{u}{\stratopp}(\Reach{(\Trap\cup\target)})=1$ in $\imdp^\sharp$ for
  all $u\in\states$ and $\stratopp\in D$.
\end{lemma}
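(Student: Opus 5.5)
I will prove the ``moreover'' part first, since the equivalence depends on it. Fix $\stratopp\in D$ and consider the Markov chain it induces in $\imdp^\sharp$. A finite chain reaches a bottom SCC almost surely, so it suffices to show that every bottom SCC $C$ of this chain is contained in $\target\cup\Trap$.

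\begin{enumerate}[itemsep=1pt,topsep=2pt]
\item If $C$ meets $\Trap$ or $\target$, then $C$ is a single absorbing state of $\target\cup\Trap$, because target states and trap states are absorbing in $\imdp^\sharp$.
\item Otherwise $C\cap(\target\cup\Trap)=\emptyset$. Then no state of $C$ was modified, so the transitions of $C$ are those of $\imdp$. The pair $(C,\Gamma)$ with $\Gamma(x)=\{\stratopp(x)\}$ is an end component of $\imdp$, since $C$ is closed and strongly connected under these actions.
\item This end component is contained in some MEC $(X,\Gamma')$ of $\imdp$. I need $X\cap\target=\emptyset$. If $X$ contained a target state $t$, then $t$ would have to reach the rest of $X$ inside the MEC. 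But $t$ is absorbing, so $X=\{t\}$, which contradicts $C\subseteq X$ and $C\cap\target=\emptyset$.
\item Hence $X\subseteq\Trap$, so $C\subseteq\Trap$, contradicting the assumption of case 2.
\end{enumerate}

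This gives $\Prb{u}{\stratopp}(\Reach{(\Trap\cup\target)})=1$ in $\imdp^\sharp$ for every $u$ and $\stratopp$. The argument only uses supports, so it holds over $\K$ as well.

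For the equivalence, note that the original chain $\imdp$ and $\imdp^\sharp$ agree until the first visit to $\target\cup\Trap$, and targets are absorbing in both. By the first part, for every $\stratopp$,
\[
1-\Prb{\sinit}{\stratteam,\stratopp}(\Reach{\target})\le \Prb{\sinit}{\stratopp}(\Reach{\Trap})\quad\text{in }\imdp^\sharp .
\]
Indeed, in $\imdp$ a play that fails to reach $\target$ reaches $\Trap$ first with probability one. So if every trap probability is infinitesimal, every failure probability is infinitesimal, and $\stratteam$ is Puiseux-winning by \cref{app:def:winning}.

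For the converse, I assume some $\stratopp\in D$ has $\st(\Prb{\sinit}{\stratopp}(\Reach{\Trap}))>0$ in $\imdp^\sharp$, and I build a modified response $\stratopp'$ that traps the play.

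\begin{enumerate}[itemsep=1pt,topsep=2pt]
\item Off $\Trap$, set $\stratopp'=\stratopp$.
\item On each target-free MEC $(X,\Gamma)$, choose actions in $\Gamma$ so that the play never leaves $X$. Any choice from $\Gamma$ works, since all supports stay inside $X$; it need not be strongly connected. If MECs overlapped this could conflict, but distinct MECs are disjoint, so each trap state gets one well-defined choice.
\end{enumerate}

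The chains under $\stratopp$ in $\imdp^\sharp$ and under $\stratopp'$ in $\imdp$ coincide until $\Trap$ is entered. Hence
\[
\Prb{\sinit}{\stratteam,\stratopp'}(\Reach{\Trap})=\Prb{\sinit}{\stratopp}(\Reach{\Trap}),
\]
where the right-hand side is computed in $\imdp^\sharp$. After entry the play stays in $\Trap$, which is target-free. Therefore the failure probability under $\stratopp'$ has positive standard part, and $\stratteam$ is not Puiseux-winning.

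I expect the main obstacle to be justifying the MEC facts over $\K$, in particular that bottom SCCs and MEC closure depend only on the support. I would handle this by stating explicitly that all graph notions, namely end components, MECs and bottom SCCs, are support-based. The rest is routine coupling of prefixes.
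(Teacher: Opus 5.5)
Your proof is correct and follows the paper's argument. Under a fixed response, each bottom SCC is an end component, hence either a target singleton or contained in a target-free MEC, which yields the bound on the failure probability; for the converse, you switch the response inside $\Trap$ to MEC actions so the play is trapped after entry, exactly as the paper does. The only difference is ordering: you prove the ``moreover'' claim first (via bottom SCCs of $\imdp^\sharp$) and derive the inequality from it through the prefix coupling of $\imdp$ and $\imdp^\sharp$, whereas the paper argues with bottom SCCs of $\imdp$ and obtains the ``moreover'' claim at the end.
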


\begin{proof}
  Fix $\stratopp\in D$. Every bottom SCC $C$ of the chain of $\imdp$ under
  $\stratopp$ satisfies $C=\{t\}$ for some $t\in\target$ or
  $C\subseteq\Trap$, since $(C,\Gamma)$ with $\Gamma(x)=\{\stratopp(x)\}$ is
  an end component. The play almost surely reaches a bottom SCC, so
  \[
    1-\Prb{\sinit}{\stratopp}(\Reach{\target})
    \leq\Prb{\sinit}{\stratopp}(\Reach{\Trap}),
  \]
  where the right side is the same in $\imdp$ and $\imdp^\sharp$. This
  inequality shows the first direction. For the second direction, let
  $\stratopp'\in D$ agree with $\stratopp$ outside $\Trap$ and satisfy
  $\stratopp'(x)\in\Gamma(x)$ for every MEC $(X,\Gamma)$ with
  $X\subseteq\Trap$ and every $x\in X$. A play under $\stratopp'$ never leaves
  $\Trap$ after entering it, so
  $\Prb{\sinit}{\stratopp'}(\Reach{\target})\leq1-\Prb{\sinit}{\stratopp}(\Reach{\Trap})$,
  which has standard part less than $1$ if
  $\st\bigl(\Prb{\sinit}{\stratopp}(\Reach{\Trap})\bigr)>0$. Finally, the
  states of $\Trap\cup\target$ are absorbing in $\imdp^\sharp$, and by the
  first argument every other bottom SCC would lie in $\Trap$.
\end{proof}

In $\imdp^\sharp$, we contract $\target$ into a new absorbing state $\tsink$
and $\Trap$ into a new absorbing state $\fsink$, and obtain a Puiseux MDP
$\mergedmdp$ on
\[
  V=U\cup\{\tsink,\fsink\},
  \qquad
  U=\states\setminus(\target\cup\Trap).
\]
By the sum rule, its transition ranks into the new states are
\[
  \rk(u,b,\tsink)=\min_{t\in\target}\rk(u,b,t),
  \qquad
  \rk(u,b,\fsink)=\min_{x\in\Trap}\rk(u,b,x)
\]
for $u\in U$ and $b\in A_{\opp}$. We write $n=|V|$. The \emph{edges} of
$\mergedmdp$ are the triples in
$E=\{(u,b,v)\in U\times A_{\opp}\times V\mid\rk(u,b,v)<\infty\}$, and an
edge $e=(u,b,v)$ has rank $\rk(e)=\rk(u,b,v)$. Unlike the chain $P$ in
\eqref{app:eq:rank-criterion}, $\mergedmdp$ does not depend on $\stratopp$.

\begin{definition}[Spanning root-forests]
  \label{app:def:root-forest}
  A \emph{spanning root-forest} of $\mergedmdp$ is a set $F\subseteq E$ with exactly
  one edge leaving each $u\in U$, such that every vertex leads to
  $\{\tsink,\fsink\}$. Its rank is $\rk(F)=\sum_{e\in F}\rk(e)$. It is
  \emph{trapped} if $\sinit$ leads to $\fsink$, \emph{targeted} if $\sinit$
  leads to $\tsink$, and \emph{consistent} with $\stratopp\in D$ if
  $F\subseteq\{(u,\stratopp(u),v)\mid u\in U,\ v\in V\}$. The forests consistent with
  $\stratopp$ are the spanning forests of the Markov chain $\mergedmdp_{\stratopp}$
  obtained from $\mergedmdp$ by fixing $\stratopp$. In $\mergedmdp_{\stratopp}$, the minimum
  ranks $\rho_{\stratopp}^{\fsink},\rho_{\stratopp}^{\tsink}\in\mathbb N\cup\{\infty\}$ of the
  trapped and of the targeted forests consistent with $\stratopp$ are
  \[
    \rho_{\stratopp}^{\fsink}=\min_{F\in\Forests_{\{\fsink\}}(\sinit)}\rk(F),
    \qquad
    \rho_{\stratopp}^{\tsink}=\min_{F\in\Forests_{\{\tsink\}}(\sinit)}\rk(F).
  \]
\end{definition}

Since $\min_{F\in\Forests}\rk(F)=\min\{\rho_{\stratopp}^{\fsink},\rho_{\stratopp}^{\tsink}\}$ in
$\mergedmdp_{\stratopp}$, the inequality $\rho_{\stratopp}^{\fsink}>\rho_{\stratopp}^{\tsink}$ is
\eqref{app:eq:rank-criterion} for $\mergedmdp_{\stratopp}$ with $R'=\{\fsink\}$, and
\cref{app:algo:forest-min} computes $\rho_{\stratopp}^{\fsink}$ and $\rho_{\stratopp}^{\tsink}$ in
polynomial time by \cref{app:prop:mdst-poly}.

\begin{corollary}[Forest criterion]
  \label{app:lem:forest}
  For every $\stratopp\in D$,
  \[
    \st\bigl(\Prb{\sinit}{\stratopp}(\Reach{\Trap})\bigr)>0
    \quad\text{in }\imdp^\sharp
    \quad\Longleftrightarrow\quad \rho_{\stratopp}^{\fsink}\leq \rho_{\stratopp}^{\tsink}.
  \]
\end{corollary}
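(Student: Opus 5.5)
We will derive the corollary from the rank criterion (\cref{app:lem:rank-criterion}) applied to the Markov chain $\mergedmdp_{\stratopp}$. First, fix $\stratopp\in D$. We check that the probability of reaching $\Trap$ from $\sinit$ in $\imdp^\sharp$ under $\stratopp$ equals the probability of reaching $\fsink$ from $\sinit$ in $\mergedmdp_{\stratopp}$. This holds because contracting the absorbing sets $\target$ and $\Trap$ into single absorbing states does not change the law of the play up to the first visit of $\target\cup\Trap$. The transition probability into $\fsink$ (resp.\ $\tsink$) is the sum of the probabilities into the members of $\Trap$ (resp.\ $\target$), which is also why the merged ranks are minima, by the sum rule. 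We handle the degenerate cases separately. If $\sinit\in\target$, then no forest is trapped and the probability is $0$, so both sides are consistent with $\rho^{\fsink}_{\stratopp}=\infty$. If $\sinit\in\Trap$, the statement is not needed, since $\sinit\in U$ is the intended setting. So we assume $\sinit\in U$.

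Second, we verify the hypothesis of \cref{app:lem:rank-criterion}: every state of $\mergedmdp_{\stratopp}$ has a support path to the roots $R=\{\tsink,\fsink\}$. This follows from the ``moreover'' part of \cref{app:lem:trap}: every state reaches $\target\cup\Trap$ with probability $1$, so in particular it has a support path there. Hence \cref{app:lem:rank-criterion} with $R'=\{\fsink\}$ and $u=\sinit$ gives
\[
\rk\bigl(\Prb{\sinit}{\stratopp}(\Reach{\Trap})\bigr)=\rho^{\fsink}_{\stratopp}-\min_{F\in\Forests}\rk(F).
\]
Because every forest is either trapped or targeted, $\min_{F\in\Forests}\rk(F)=\min\{\rho^{\fsink}_{\stratopp},\rho^{\tsink}_{\stratopp}\}$, and this minimum is finite since $\Forests\neq\emptyset$ by \cref{app:lem:forest-formula}.

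Finally, we translate standard parts into ranks. The probability $x$ lies in $[0,1]$, so it is bounded with nonnegative leading exponent. Then $\st(x)>0$ if and only if $x\neq0$ and $\rk(x)=0$, since any positive exponent makes $x$ infinitesimal and $\rk(x)=0$ gives standard part equal to the positive leading coefficient. Now $\rk(x)=0$ means $\rho^{\fsink}_{\stratopp}=\min\{\rho^{\fsink}_{\stratopp},\rho^{\tsink}_{\stratopp}\}<\infty$, which is exactly $\rho^{\fsink}_{\stratopp}\le\rho^{\tsink}_{\stratopp}$; if $\rho^{\fsink}_{\stratopp}=\infty$, the inequality fails because the minimum is finite.

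The only delicate step is the first: justifying that reachability probabilities are preserved under the merge. Here we argue via the least solution of \eqref{eq:reach-system}, which is unchanged when target and trap states are identified, with the equation transferred to $\K$ by \cref{app:lem:puiseux-calculus}.
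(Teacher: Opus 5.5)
Your proposal is correct and follows essentially the paper's proof. You use \cref{app:lem:trap} to identify the trap probability in $\imdp^\sharp$ with the probability of reaching $\fsink$ in $\mergedmdp_{\stratopp}$ and to supply support paths to the roots, then apply \cref{app:lem:rank-criterion} with $\min_{F\in\Forests}\rk(F)=\min\{\rho_{\stratopp}^{\fsink},\rho_{\stratopp}^{\tsink}\}$, and finally use that a probability has positive standard part exactly when its rank is $0$. Your explicit treatment of the case $\rho_{\stratopp}^{\fsink}=\infty$ and of degenerate initial states spells out what the paper leaves implicit; the case $\sinit\in\Trap$ is just as trivial as the one you handled, since then every forest is trapped.
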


\begin{proof}
  By \cref{app:lem:trap}, every state of $\mergedmdp_{\stratopp}$ has a path to
  $\{\tsink,\fsink\}$, and $\Prb{\sinit}{\stratopp}(\Reach{\Trap})$ in
  $\imdp^\sharp$ equals $\Prb{\sinit}{\stratopp}(\Reach{\{\fsink\}})$ in $\mergedmdp$.
  By \cref{app:lem:rank-criterion},
  \[
    \rk\bigl(\Prb{\sinit}{\stratopp}(\Reach{\{\fsink\}})\bigr)
    =\rho_{\stratopp}^{\fsink}-\min\{\rho_{\stratopp}^{\fsink},\rho_{\stratopp}^{\tsink}\}.
  \]
  A probability has positive standard part exactly when its rank is $0$,
  and the right side is $0$ exactly when $\rho_{\stratopp}^{\fsink}\leq \rho_{\stratopp}^{\tsink}$.
\end{proof}

\subsection{The \texorpdfstring{$\mathsf{NP}$}{NP} Upper Bound}
\label{app:sec:np-upper-bound}

We now prove that limit-sure reachability in the general game setting
belongs to $\mathsf{NP}$.
By \cref{app:lem:small-ranks}, a winning team strategy can be described by
action ranks with polynomially many bits.
For fixed team ranks, \cref{app:lem:trap,app:lem:forest} give a winning criterion.
It compares forest ranks for every opponent strategy.
For one opponent strategy, this comparison takes polynomial time
by \cref{app:prop:mdst-poly}.
However, there may be exponentially many opponent strategies, so checking them one by one takes exponential time.

Instead, we use a certificate with two parts.
The first part consists of the team's action ranks.
The second part is one opponent strategy in an encoded Puiseux MDP.
We check that this opponent strategy is \emph{stable}.
This means that the opponent cannot increase its value by choosing
a different action for the next move and then following the original
strategy.
We show that every stable strategy is optimal and that a stable
strategy exists.
Finally, because stability implies optimality, we propose an algorithm to check stability in polynomial time. This algorithm uses the structure of the encoded Puiseux MDP. 

\paragraph{Recalling the setting.}
Fix action ranks $r_{p,s,a}\in\mathbb N\cup\{\infty\}$ with
$\min_a r_{p,s,a}=0$ at every $(p,s)$, and let $\stratteam$ be the
strategy that realises them as in \cref{app:sec:puiseux-mdp}.
We use the transition ranks of \cref{app:def:induced-mdp}, the set
$\Trap$ of \cref{app:def:trap}, and the graph $\mergedmdp=(V,E)$ of \cref{app:sec:forests}.
The cases listed after \cref{app:lem:trap} will be handled directly in the
algorithm. Therefore, we assume
$\sinit\notin\target\cup\Trap$, $\Trap\neq\emptyset$, and that a
target is reachable from $\sinit$ in the support of $\imdp^\sharp$.

Recall that $V=U\cup\{\tsink,\fsink\}$, where
$U=\states\setminus(\target\cup\Trap)$. 
An edge $e=(u,b,v)$ has source $u\in U$, action label $b$, and rank
$\rk(e)$. There is at most one edge for each triple $(u,b,v)$.
Write $E(u,b)$ for the edges with source $u$ and label $b$.
This set is nonempty by \cref{app:lem:zero-edges}.
Throughout this subsection, an opponent strategy is a deterministic
memoryless map $\stratopp\colon U\to A_{\opp}$.

We use the forests of \cref{app:def:root-forest}. A vertex \emph{leads} to
a root if following the selected edges ends at that root. Write $\mathcal F_{\stratopp}$ for the forests
consistent with $\stratopp$, and $\mathcal F_{\stratopp}(v)$ for those
in which $v$ leads to $\fsink$.
By \cref{app:lem:trap,app:lem:forest},
\begin{equation}
  \stratteam\text{ is Puiseux-winning}
  \quad\Longleftrightarrow\quad
  \rho_{\stratopp}^{\fsink}>\rho_{\stratopp}^{\tsink}\text{ for every }\stratopp.
  \label{app:eq:winning-all-policies}
\end{equation}
Here $\rho_{\stratopp}^{\fsink}$ and $\rho_{\stratopp}^{\tsink}$ are the minimum ranks of
trapped and targeted forests from \cref{app:def:root-forest}, with
$\min\emptyset=\infty$.

\subsubsection{The encoded MDP}

\paragraph{Breaking ties between forests.}
Index the edges as $e_0,\ldots,e_{m-1}$ and define
\begin{align*}
  \widehat{\rk}(e_i)&=2^{m+1}\rk(e_i)+2^i,
  &\widehat{\rk}(J)&=\sum_{e\in J}\widehat{\rk}(e)
  \qquad\text{for all }J\subseteq E.
\end{align*}
The powers of two distinguish different sets of edges. Their total is
at most $2^m-1$, so they cannot reverse a strict comparison of the
original integer ranks. More precisely,
\begin{align*}
  0\leq\widehat{\rk}(J)-2^{m+1}\rk(J)\leq2^m-1.
\end{align*}
Thus $\rk(J)<\rk(J')$ implies
$\widehat{\rk}(J)<\widehat{\rk}(J')$.
If the encoded ranks are equal, the original ranks must be equal,
and then uniqueness of binary representation gives $J=J'$.

\paragraph{Transitions and values.}
The encoded MDP has state set $V$, action set $A_{\opp}$, and
absorbing roots $\tsink,\fsink$. For $u\in U$, put
\begin{align*}
  Z_{u,b}&=\sum_{e\in E(u,b)}\epsilon^{\widehat{\rk}(e)}.
\end{align*}
Its transition probabilities are
\begin{equation}
  \widetilde P(u,b,v)=
  \begin{cases}
    \epsilon^{\widehat{\rk}(u,b,v)}/Z_{u,b},&(u,b,v)\in E,\\
    0,&\text{otherwise.}
  \end{cases}
  \label{app:eq:generic-transition}
\end{equation}
These are nonnegative elements of $\K$ whose sum over $v$ is $1$.
The support is the same as in $\mergedmdp$. The encoded probabilities are
chosen for the certificate. Thus, they are not necessarily equal to the probabilities
of $\imdp^\sharp$. In particular, $\widehat{\rk}(e)$ is the exponent
of an unnormalised weight, not necessarily the rank of a transition
probability after division by $Z_{u,b}$.

For an opponent strategy $\stratopp$, let $q_v^{\stratopp}$ be its probability
of reaching $\fsink$ from $v$ in this encoded MDP, as defined by
the least nonnegative solution of \cref{eq:reach-system} over $\K$.
The opponent seeks to maximise these values.

By \cref{app:lem:trap}, every vertex has a support path to a root under
each fixed strategy. This path property is
unchanged in the encoded MDP. Selecting at every nonroot the first
edge of a shortest path to a root gives a forest, since the distance
to the roots strictly decreases. Hence $\mathcal F_{\stratopp}$ is
nonempty for every $\stratopp$.

Define the forest sums
\begin{align*}
  N_v^{\stratopp}
    &=\sum_{F\in\mathcal F_{\stratopp}(v)}
         \epsilon^{\widehat{\rk}(F)},
  &D^{\stratopp}
    &=\sum_{F\in\mathcal F_{\stratopp}}
         \epsilon^{\widehat{\rk}(F)}.
\end{align*}
Every forest uses one edge from each nonroot. Its probability weight
is therefore $\epsilon^{\widehat{\rk}(F)}/\prod_{u\in U}Z_{u,\stratopp(u)}$.
The common denominator cancels in \cref{app:lem:forest-formula}, giving
\begin{equation}
  q_v^{\stratopp}=\frac{N_v^{\stratopp}}{D^{\stratopp}},
  \qquad D^{\stratopp}>0.
  \label{app:eq:generic-forest-formula}
\end{equation}
Thus the value is the fraction of forest weight sending $v$ to the
failure root. In particular, $q_{\fsink}^{\stratopp}=1$ and
$q_{\tsink}^{\stratopp}=0$. By \cref{eq:reach-system},
\begin{equation}
  q_u^{\stratopp}
    =\sum_{v\in V}\widetilde P(u,\stratopp(u),v)q_v^{\stratopp}
    \qquad\text{for all }u\in U.
  \label{app:eq:fixed-policy-bellman}
\end{equation}

\subsubsection{One stable response suffices}

A strategy $\stratopp$ is \emph{stable} if
\begin{equation}
  q_u^{\stratopp}\geq
  \sum_{v\in V}\widetilde P(u,b,v)q_v^{\stratopp}
  \qquad\text{for all }u\in U\text{ and }b\in A_{\opp}.
  \label{app:eq:stable-policy}
\end{equation}
The right-hand side is the value of playing $b$ for one step and then
following $\stratopp$. Stability says that no such change improves
the value. Equality holds when $b=\stratopp(u)$ by
\cref{app:eq:fixed-policy-bellman}.

This subsection shows that if an opponent strategy is stable, then it is optimal. Moreover, a stable strategy exists.   

The next lemma gives a principle to compare two strategies.
It shows how inequalities in individual states imply an inequality
between their values at every state.

\begin{lemma}[Minimum principle]
  \label{app:lem:np-minimum-principle}
  Fix $\stratopp$ and $d\in\K^V$ with
  $d_{\tsink}=d_{\fsink}=0$. If
  $d_u\geq\sum_v\widetilde P(u,\stratopp(u),v)d_v$ for every
  $u\in U$, then $d_v\geq0$ for every $v\in V$.
\end{lemma}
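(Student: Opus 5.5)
We argue by contradiction using a minimal element, exactly as sketched in the main text. Suppose some coordinate of $d$ is negative. Since $V$ is finite and $\K$ is totally ordered, the set of values $\{d_v\mid v\in V\}$ has a minimum $\mu<0$. Let $M=\{v\in V\mid d_v=\mu\}$. The roots have $d=0>\mu$, so $M\subseteq U$.

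The key step is to show that $M$ is closed under the support of $\stratopp$. Take $u\in M$. The coefficients $\widetilde P(u,\stratopp(u),v)$ are nonnegative and sum to $1$, so we can write
\[
  \sum_v\widetilde P(u,\stratopp(u),v)\,d_v-\mu
  =\sum_v\widetilde P(u,\stratopp(u),v)\,(d_v-\mu).
\]
Each term on the right is nonnegative. The hypothesis gives $\mu=d_u\geq\sum_v\widetilde P(u,\stratopp(u),v)d_v$, so the left side is at most $0$. Hence every term vanishes. Whenever $\widetilde P(u,\stratopp(u),v)>0$, which holds exactly when $(u,\stratopp(u),v)\in E$, we get $d_v=\mu$ and thus $v\in M$. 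This uses only that $\K$ is an ordered field in which a sum of nonnegative elements is zero only if each summand is zero.

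For the conclusion, we use the path property stated before \eqref{app:eq:generic-forest-formula}: by \cref{app:lem:trap}, every vertex has a support path to $\{\tsink,\fsink\}$ under the fixed $\stratopp$ in the encoded MDP. Pick $u\in M$ and follow such a path. By the closure property and induction along the path, every vertex on it lies in $M$, including the final root. This contradicts $M\subseteq U$. Hence $d_v\geq0$ for all $v$.

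The only delicate point is the availability of the support path. It holds for the encoded MDP because its support equals that of $\mergedmdp$, and hence that of $\imdp^\sharp$ restricted to $U$ with roots merged. Under any $\stratopp$, the set $\Trap\cup\target$ is reached with probability $1$ there, so a support path exists from every state.
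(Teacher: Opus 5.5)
Your proposal is correct and follows essentially the same route as the paper: take the negative minimum $\mu$ of $d$, show that the set where it is attained is closed under positive-probability successors of $\stratopp$ and excludes the roots, and contradict the support-path property inherited from \cref{app:lem:trap}. Your version just spells out two steps the paper leaves implicit: the vanishing-sum argument in $\K$, and why support paths in $\imdp^\sharp$ carry over to the encoded MDP.
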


\begin{proof}
  Suppose $\delta=\min_v d_v<0$ and let $W=\{v:d_v=\delta\}$.
  The roots are outside $W$. For $u\in W$,
  \begin{align*}
    \delta=d_u
    &\geq\sum_v\widetilde P(u,\stratopp(u),v)d_v
     \geq\sum_v\widetilde P(u,\stratopp(u),v)\delta=\delta.
  \end{align*}
  Equality implies that every positive-probability successor of $u$
  also belongs to $W$.
  Thus, no vertex of $W$ has a support path to a root, which is a contradiction.
\end{proof}

We will apply the minimum principle lemma to the difference between two value vectors.
This difference is zero at both roots as the value of any strategy is the same in the absorbing roots.
If it satisfies the inequalities in the lemma, it is nonnegative
at all states.
Thus, one strategy has at least the value of the other in every state.

\begin{lemma}[Stable strategies are optimal]
  \label{app:lem:np-stable-optimal}
  A stable strategy exists. Every stable strategy $\stratopp$ satisfies
  $q_v^{\stratopp}\geq q_v^{\stratopp'}$ for every strategy
  $\stratopp'$ and vertex $v$.
\end{lemma}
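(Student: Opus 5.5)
We prove optimality first and existence second; both parts rely on the minimum principle, \cref{app:lem:np-minimum-principle}.

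\emph{Optimality.} Fix a stable strategy $\stratopp$ and an arbitrary strategy $\stratopp'$, and set $d=q^{\stratopp}-q^{\stratopp'}\in\K^V$. Both value vectors equal $1$ at $\fsink$ and $0$ at $\tsink$, so $d$ vanishes at both roots. For $u\in U$, we apply stability \eqref{app:eq:stable-policy} with $b=\stratopp'(u)$, and then the fixed-strategy equation \eqref{app:eq:fixed-policy-bellman} for $\stratopp'$:
\begin{align*}
  d_u
  &\geq\sum_v\widetilde P(u,\stratopp'(u),v)q_v^{\stratopp}
     -\sum_v\widetilde P(u,\stratopp'(u),v)q_v^{\stratopp'}\\
  &=\sum_v\widetilde P(u,\stratopp'(u),v)d_v .
\end{align*}
The minimum principle applies to the strategy $\stratopp'$. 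Its hypothesis that every vertex has a support path to a root under $\stratopp'$ is supplied by \cref{app:lem:trap} and is preserved in the encoded MDP. Hence $d\geq0$, that is, $q^{\stratopp}\geq q^{\stratopp'}$ everywhere.

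\emph{Existence.} The set of strategies $U\to A_{\opp}$ is finite, and $\K$ is totally ordered, so some $\stratopp$ maximises $\Sigma(\stratopp)=\sum_{v\in V}q_v^{\stratopp}$. Suppose stability fails at some $(u_0,b_0)$, meaning $q_{u_0}^{\stratopp}<\sum_v\widetilde P(u_0,b_0,v)q_v^{\stratopp}$. Let $\stratopp'$ agree with $\stratopp$ except that $\stratopp'(u_0)=b_0$, and put $d=q^{\stratopp'}-q^{\stratopp}$.

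At $u\neq u_0$, both strategies play the same action, so $d_u=\sum_v\widetilde P(u,\stratopp'(u),v)d_v$. At $u_0$ we have
\begin{align*}
  d_{u_0}
  &=\sum_v\widetilde P(u_0,b_0,v)q_v^{\stratopp'}-q_{u_0}^{\stratopp}\\
  &>\sum_v\widetilde P(u_0,b_0,v)\bigl(q_v^{\stratopp'}-q_v^{\stratopp}\bigr)
   =\sum_v\widetilde P(u_0,b_0,v)d_v .
\end{align*}
The minimum principle under $\stratopp'$ therefore gives $d\geq0$.

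It remains to show $d_{u_0}>0$, which is the only delicate point. If $d_{u_0}=0$, the strict inequality gives $0>\sum_v\widetilde P(u_0,b_0,v)d_v\geq0$, because $d\geq0$ and the transition probabilities are nonnegative. This is a contradiction. Hence $d_{u_0}>0$, so $\Sigma(\stratopp')>\Sigma(\stratopp)$, contradicting maximality. Therefore the maximiser is stable.

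The main obstacle is justifying that the minimum principle and the fixed-strategy equations hold for every strategy in the encoded MDP. This comes down to two facts: every vertex reaches a root under every strategy (\cref{app:lem:trap}, with support preserved), and the values are the least solutions given by \eqref{app:eq:generic-forest-formula}. Both are available from earlier results, so the rest is routine.
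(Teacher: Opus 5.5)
Your proof is correct and follows the paper's argument essentially step for step. Optimality comes from applying the minimum principle under $\stratopp'$ to $q^{\stratopp}-q^{\stratopp'}$, and existence from taking a maximiser of $\sum_v q_v^{\stratopp}$ and switching the action at a single unstable state; your derivation of $d_{u_0}>0$ from $d\geq 0$ and the strict inequality just spells out a step the paper states in one line.
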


\begin{proof}
    Suppose $\stratopp$ is stable and fix an arbitrary opponent strategy
    $\stratopp'$.
    For every $v\in V$, put
    $d_v=q_v^{\stratopp}-q_v^{\stratopp'}$.
    Both strategies have value $0$ at $\tsink$ and $1$ at $\fsink$.
    Hence $d_{\tsink}=d_{\fsink}=0$.
    
    For every $u\in U$, we have
    \begin{align*}
      d_u
      &=q_u^{\stratopp}-q_u^{\stratopp'}\\
      &\geq
        \sum_{v\in V} \bigl( \widetilde P(u,\stratopp'(u),v)q_v^{\stratopp}\bigr)
        -q_u^{\stratopp'}
        &&\text{(by stability of $\stratopp$)}\\
      &=\sum_{v\in V}\widetilde P(u,\stratopp'(u),v)
           \bigl(q_v^{\stratopp}-q_v^{\stratopp'}\bigr)
        &&\text{(by \cref{app:eq:fixed-policy-bellman} for $\stratopp'$)}\\
      &=\sum_{v\in V}\widetilde P(u,\stratopp'(u),v)d_v.
    \end{align*}
    Thus $d$ satisfies all hypotheses of
    \cref{app:lem:np-minimum-principle} for the strategy $\stratopp'$.
    It follows that $d_v\geq0$ for every $v\in V$.
    Therefore $q_v^{\stratopp}\geq q_v^{\stratopp'}$ at every vertex.
    Since $\stratopp'$ was an arbitrary strategy, we conclude that $\stratopp$ is optimal.

  To prove existence, choose a strategy $\stratopp$ maximising
  $\sum_{v\in V}q_v^{\stratopp}$. Such a choice exists because there
  are finitely many positional strategies and $\K$ is totally ordered.
  Suppose it is not stable. Choose $u,b$ with
  $\sum_v\widetilde P(u,b,v)q_v^{\stratopp}>q_u^{\stratopp}$,
  and let $\stratopp'$ change only the action at $u$ to $b$.
  Put $d_v=q_v^{\stratopp'}-q_v^{\stratopp}$.
  At the roots $d_v=0$. At every $z\in U\setminus\{u\}$,
  \cref{app:eq:fixed-policy-bellman} gives
  $d_z=\sum_v\widetilde P(z,\stratopp'(z),v)d_v$. At $u$,
  \begin{align*}
    d_u
    &=\sum_v\widetilde P(u,b,v)q_v^{\stratopp'}-q_u^{\stratopp}\\
    &>\sum_v\widetilde P(u,b,v)
                    (q_v^{\stratopp'}-q_v^{\stratopp})
     =\sum_v\widetilde P(u,b,v)d_v.
  \end{align*}
  By \cref{app:lem:np-minimum-principle}, every $d_v$ is nonnegative.
  The strict inequality then gives $d_u>0$, contradicting maximality
  of the sum. Hence $\stratopp$ is stable.
\end{proof}

Now, we connect optimality in the encoded MDP to the original forest criterion in the main Puiseux MDP. 
For a nonzero series $x\in\K$, write $\nu(x)$ for its
smallest exponent, and put $\nu(0)=\infty$.
Products add these exponents and quotients subtract them.
In a sum of positive series, leading coefficients cannot cancel,
so its rank is the smallest rank of its summands.

\begin{lemma}[Ranks of encoded values]
  \label{app:lem:np-value-rank}
  Fix a deterministic memoryless opponent strategy $\stratopp$.
  Let $\rho_{\stratopp}^{\fsink},\rho_{\stratopp}^{\tsink}$ be its minimum trapped and
  targeted forest ranks, and $q_{\sinit}^{\stratopp}$ its encoded
  failure probability.
  \begin{enumerate}
    \item If $\rho_{\stratopp}^{\fsink}\leq \rho_{\stratopp}^{\tsink}$, then
      $q_{\sinit}^{\stratopp}>0$ and
      $\nu(q_{\sinit}^{\stratopp})\leq2^m-1$.
    \item If $\rho_{\stratopp}^{\fsink}>\rho_{\stratopp}^{\tsink}$, then
      $q_{\sinit}^{\stratopp}=0$ or
      $\nu(q_{\sinit}^{\stratopp})\geq2^m+1$.
  \end{enumerate}
\end{lemma}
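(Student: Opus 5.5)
The plan is to compute $\nu(q_{\sinit}^{\stratopp})$ directly from the forest expression \eqref{app:eq:generic-forest-formula}, namely $q_{\sinit}^{\stratopp}=N_{\sinit}^{\stratopp}/D^{\stratopp}$, and then compare it with the original ranks through the encoding bound $0\leq\widehat{\rk}(J)-2^{m+1}\rk(J)\leq 2^m-1$. Since $N_{\sinit}^{\stratopp}$ and $D^{\stratopp}$ are sums of monomials $\epsilon^{\widehat{\rk}(F)}$ with coefficient $1$, no cancellation can occur, so $\nu(N_{\sinit}^{\stratopp})=\min_{F\in\mathcal F_{\stratopp}(\sinit)}\widehat{\rk}(F)$ and $\nu(D^{\stratopp})=\min_{F\in\mathcal F_{\stratopp}}\widehat{\rk}(F)$. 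Here $D^{\stratopp}>0$ because $\mathcal F_{\stratopp}\neq\emptyset$. Writing $\widehat\rho^{\fsink}$ and $\widehat\rho^{\tsink}$ for the minimum encoded ranks of trapped and targeted consistent forests, and noting that every consistent forest is either trapped or targeted (since $\sinit\in U$ leads to one of the two roots), we get
\[
\nu(q_{\sinit}^{\stratopp})=\widehat\rho^{\fsink}-\min\{\widehat\rho^{\fsink},\widehat\rho^{\tsink}\},
\]
with the convention that $q_{\sinit}^{\stratopp}=0$ exactly when there is no trapped forest.

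The next step is to relate the encoded minima to $\rho_{\stratopp}^{\fsink}$ and $\rho_{\stratopp}^{\tsink}$. Applying the encoding bound to each forest and taking minima over a fixed class, whenever the class is nonempty, gives
\[
2^{m+1}\rho\leq\widehat\rho\leq 2^{m+1}\rho+2^m-1,
\]
where $\rho$ and $\widehat\rho$ denote the original and encoded minima over that class. This holds because the minimiser of the encoded rank has original rank at least $\rho$, and the original minimiser has encoded rank at most $2^{m+1}\rho+2^m-1$.

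For case 1, $\rho^{\fsink}_{\stratopp}$ is finite, so trapped forests exist and $q_{\sinit}^{\stratopp}>0$. If $\widehat\rho^{\fsink}\leq\widehat\rho^{\tsink}$, then $\nu=0$. Otherwise
\[
\nu=\widehat\rho^{\fsink}-\widehat\rho^{\tsink}\leq 2^{m+1}\rho^{\fsink}_{\stratopp}+2^m-1-2^{m+1}\rho^{\tsink}_{\stratopp}\leq 2^m-1,
\]
using $\rho^{\fsink}_{\stratopp}\leq\rho^{\tsink}_{\stratopp}$. For case 2, if there is no trapped forest then $q_{\sinit}^{\stratopp}=0$. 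Otherwise $\rho^{\tsink}_{\stratopp}<\rho^{\fsink}_{\stratopp}$ are finite integers, so $\rho^{\fsink}_{\stratopp}\geq\rho^{\tsink}_{\stratopp}+1$. Then
\[
\widehat\rho^{\fsink}-\widehat\rho^{\tsink}\geq 2^{m+1}\bigl(\rho^{\fsink}_{\stratopp}-\rho^{\tsink}_{\stratopp}\bigr)-(2^m-1)\geq 2^m+1.
\]
In particular this difference is positive, so the minimum in the formula for $\nu$ is $\widehat\rho^{\tsink}$ and $\nu\geq 2^m+1$.

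The only delicate point is the first step: making sure the normalising factors $Z_{u,\stratopp(u)}$ cancel exactly, and that $\nu$ of a sum of monomials with positive coefficients is the minimum exponent. This was already established in deriving \eqref{app:eq:generic-forest-formula}. Beyond that, the work is careful bookkeeping of the infinite cases.
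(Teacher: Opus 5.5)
Your proposal is correct and follows the paper's proof essentially step by step. Both arguments use the same three ingredients: the forest-sum expression $\nu(q_{\sinit}^{\stratopp})=\widehat\rho^{\fsink}-\min\{\widehat\rho^{\fsink},\widehat\rho^{\tsink}\}$, the sandwich $2^{m+1}\rho\leq\widehat\rho\leq2^{m+1}\rho+2^m-1$ for each nonempty class of forests, and the same two case computations. The only differences are cosmetic: you fold the degenerate cases (no trapped or no targeted forest) into the formula via $\min\emptyset=\infty$ instead of treating them up front, and you justify the sandwich slightly more explicitly than the paper does.
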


\begin{proof}
  If only trapped forests exist, then
  $\rho_{\stratopp}^{\fsink}<\rho_{\stratopp}^{\tsink}=\infty$ and
  $q_{\sinit}^{\stratopp}=1$, so the first claim holds.
  If only targeted forests exist, then
  $\rho_{\stratopp}^{\tsink}<\rho_{\stratopp}^{\fsink}=\infty$ and
  $q_{\sinit}^{\stratopp}=0$, so the second claim holds.

  Suppose both kinds of forests exist.
  Let $a$ and $b$ be their minimum encoded ranks in the encoded MDP, for trapped and
  targeted forests, respectively.
  The encoding preserves strict rank comparisons.
  Thus, forests attaining $a$ and $b$ have ranks
  $\rho_{\stratopp}^{\fsink}$ and $\rho_{\stratopp}^{\tsink}$
  in the original Puiseux MDP, respectively.
  Each tie-breaking contribution lies between $0$ and $2^m-1$, so

  \begin{align*}
    2^{m+1}\rho_{\stratopp}^{\fsink}
    &\leq a\leq2^{m+1}\rho_{\stratopp}^{\fsink}+2^m-1,\\
    2^{m+1}\rho_{\stratopp}^{\tsink}
    &\leq b\leq2^{m+1}\rho_{\stratopp}^{\tsink}+2^m-1.
  \end{align*}

  Since a trapped forest exists, $N_{\sinit}^{\stratopp}$
  contains at least one term.
  Every term is positive in $\K$, so
  $N_{\sinit}^{\stratopp}>0$.
  The denominator $D^{\stratopp}$ is also positive.
  By \cref{app:eq:generic-forest-formula},
  \begin{align*}
    q_{\sinit}^{\stratopp}
    &=\frac{N_{\sinit}^{\stratopp}}{D^{\stratopp}}>0.
  \end{align*}
  The leading coefficients in both forest sums are positive
  and cannot cancel.
  Hence
  \begin{align*}
    \nu(N_{\sinit}^{\stratopp})&=a,\\
    \nu(D^{\stratopp})&=\min(a,b),\\
    \nu(q_{\sinit}^{\stratopp})
      &=a-\min(a,b)=\max(0,a-b).
  \end{align*}

  If $\rho_{\stratopp}^{\fsink}\leq \rho_{\stratopp}^{\tsink}$, then
  \begin{align*}
    a-b
    &\leq2^{m+1}(\rho_{\stratopp}^{\fsink}-\rho_{\stratopp}^{\tsink})+2^m-1\\
    &\leq2^m-1.
  \end{align*}
  Therefore $\nu(q_{\sinit}^{\stratopp})\leq2^m-1$,
  proving the first claim.

  If $\rho_{\stratopp}^{\fsink}>\rho_{\stratopp}^{\tsink}$, both minima are finite
  integers, so their difference is at least $1$.
  Thus
  \begin{align*}
    a-b
    &\geq2^{m+1}(\rho_{\stratopp}^{\fsink}-\rho_{\stratopp}^{\tsink})-(2^m-1)\\
    &\geq2^m+1.
  \end{align*}
  Consequently,
  $\nu(q_{\sinit}^{\stratopp})=a-b\geq2^m+1$,
  proving the second claim.
\end{proof}

The bounds separate strategies that satisfy the forest criterion
from those that violate it.
A violating strategy has a strictly larger encoded value at $\sinit$.
This allows us to check the criterion using one optimal strategy.

\begin{lemma}[One-response certificate]
  \label{app:lem:np-one-policy-certificate}
  Fix the team's action ranks and let $\stratteam$ be the
  Puiseux team strategy that realises them.
  The following statements are equivalent.
  \begin{enumerate}
    \item In the original game, for every deterministic memoryless
      opponent strategy $\stratopp'$,
      \begin{align*}
        \st\bigl(
          \Prb{\sinit}{\stratteam,\stratopp'}(\Reach{\target})
        \bigr)=1.
      \end{align*}
    \item In the encoded MDP, there exists a stable opponent
      strategy $\stratopp$ such that
      $\rho_{\stratopp}^{\fsink}>\rho_{\stratopp}^{\tsink}$.
  \end{enumerate}
\end{lemma}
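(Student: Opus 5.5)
The plan is to prove both directions by combining the forest criterion \eqref{app:eq:winning-all-policies} with the optimality of stable strategies (\cref{app:lem:np-stable-optimal}) and the rank separation of encoded values (\cref{app:lem:np-value-rank}). Throughout, we work under the standing assumptions of this subsection: $\sinit\notin\target\cup\Trap$, $\Trap\neq\emptyset$, and a target is reachable. Under these assumptions, statement~(1) is exactly the definition of $\stratteam$ being Puiseux-winning. By \cref{app:lem:trap} and \cref{app:lem:forest}, this is equivalent to $\rho_{\stratopp'}^{\fsink}>\rho_{\stratopp'}^{\tsink}$ for every deterministic memoryless $\stratopp'\colon U\to A_{\opp}$. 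Strategies on $U$ extend arbitrarily to $\states$, and the values do not depend on the extension, since roots are absorbing.

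\emph{(1)$\Rightarrow$(2).} By \cref{app:lem:np-stable-optimal}, some stable strategy $\stratopp$ exists in the encoded MDP. Since (1) gives the forest criterion for every strategy, it holds in particular for $\stratopp$. This gives (2) directly.

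\emph{(2)$\Rightarrow$(1).} Let $\stratopp$ be stable with $\rho_{\stratopp}^{\fsink}>\rho_{\stratopp}^{\tsink}$, and suppose for contradiction that some $\stratopp'$ has $\rho_{\stratopp'}^{\fsink}\leq\rho_{\stratopp'}^{\tsink}$. By \cref{app:lem:np-value-rank}(1), $q_{\sinit}^{\stratopp'}>0$ with $\nu(q_{\sinit}^{\stratopp'})\leq 2^m-1$. By \cref{app:lem:np-value-rank}(2), either $q_{\sinit}^{\stratopp}=0$ or $\nu(q_{\sinit}^{\stratopp})\geq 2^m+1$.

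In either case $q_{\sinit}^{\stratopp'}>q_{\sinit}^{\stratopp}$. This is immediate in the first case. In the second, both values are positive, and a positive series with strictly smaller leading exponent is larger: the difference has the leading coefficient of $q_{\sinit}^{\stratopp'}$, which is positive. This contradicts $q_{\sinit}^{\stratopp}\geq q_{\sinit}^{\stratopp'}$ from \cref{app:lem:np-stable-optimal}. Hence every strategy satisfies the forest criterion, and (1) follows from \eqref{app:eq:winning-all-policies}.

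The only delicate point is the step comparing Puiseux values by leading exponent. The separation of the exponent ranges ($\leq 2^m-1$ versus $\geq 2^m+1$) is exactly what makes this comparison valid. Without the tie-breaking encoding, the two ranges could meet and the comparison would fail. I would also state explicitly that the values $q$ are taken in the encoded MDP, while the forest criterion uses the original ranks $\rk$. \Cref{app:lem:np-value-rank} is the bridge between the two, so no further argument is needed there.
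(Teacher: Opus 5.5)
Your proposal is correct and follows the paper's proof essentially step for step. The direction (1)$\Rightarrow$(2) takes the stable strategy that exists by \cref{app:lem:np-stable-optimal}, and (2)$\Rightarrow$(1) argues by contradiction, pitting the exponent separation of \cref{app:lem:np-value-rank} against the optimality of the stable strategy; your remarks on the standing assumptions and on restricting strategies to $U$ are harmless clarifications that the paper leaves implicit.
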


\begin{proof}
  By \cref{app:lem:trap,app:lem:forest}, the first statement holds
  if and only if $\rho_{\stratopp'}^{\fsink}>\rho_{\stratopp'}^{\tsink}$ for every opponent strategy
  $\stratopp'$.

  Suppose the first statement holds.
  A stable strategy $\stratopp$ exists by
  \cref{app:lem:np-stable-optimal}.
  It satisfies $\rho_{\stratopp}^{\fsink}>\rho_{\stratopp}^{\tsink}$, proving the second
  statement.

  Conversely, fix a stable strategy $\stratopp$ satisfying
  $\rho_{\stratopp}^{\fsink}>\rho_{\stratopp}^{\tsink}$.
  Suppose some strategy $\stratopp'$ satisfies
  $\rho_{\stratopp'}^{\fsink}\leq \rho_{\stratopp'}^{\tsink}$.
  By \cref{app:lem:np-value-rank},
  \begin{align*}
    q_{\sinit}^{\stratopp'}&>0,
    &\nu(q_{\sinit}^{\stratopp'})&\leq2^m-1,
  \end{align*}
  whereas $q_{\sinit}^{\stratopp}$ is zero or has rank at least
  $2^m+1$.
  Thus $q_{\sinit}^{\stratopp'}>q_{\sinit}^{\stratopp}$.
  Indeed, their difference has the same positive leading term
  as $q_{\sinit}^{\stratopp'}$.
  This contradicts \cref{app:lem:np-stable-optimal}.
  Therefore every opponent strategy satisfies the forest
  criterion, proving the first statement.
\end{proof}

\subsubsection{Checking stability in polynomial time}

Fix a deterministic memoryless strategy $\stratopp$.
For a vertex $u\in U$ and an action $b\in A_{\opp}$, define the \emph{Bellman residual}
\begin{equation}
  R^{\stratopp}(u,b)
  =q_u^{\stratopp}-\sum_{v\in V}\widetilde P(u,b,v)q_v^{\stratopp}.
  \label{app:eq:bellman-residual}
\end{equation}
Stability requires every residual to be nonnegative.
In this subsection, we propose a polynomial algorithm to determine the sign of the Bellman residuals. Naively evaluating the forest sums is not possible, as checking all possible forests would take exponential time. Moreover,
the residual is zero for $b=\stratopp(u)$ by
\cref{app:eq:fixed-policy-bellman}, so we only check the case where $b\neq\stratopp(u)$.

For a forest $F\in\mathcal F_{\stratopp}$, let $\chi_F(v)$ be $1$ if the vertex $v$
leads to $\fsink$ in the forest $F$, and $0$ otherwise.
Let $\mathcal T(\stratopp,u,b)$ consist of the spanning trees that
select one edge from $E(z,\stratopp(z))$ at every $z\in U$ and
one additional edge from $E(u,b)$.
For such a tree $T$, let $e_b(T)$ be its unique edge from $E(u,b)$.
By \cref{app:lem:functional-forest},
$F=T\setminus\{e_b(T)\}$ belongs to $\mathcal F_{\stratopp}$.
Define $\operatorname{sgn}(T)=+1$ if $\chi_F(u)=1$, and $-1$ otherwise.

\begin{lemma}[Signed tree formula]
  \label{app:lem:np-signed-tree}
  For $u\in U$ and $b\neq\stratopp(u)$,
  \begin{equation}
    Z_{u,b}D^{\stratopp}R^{\stratopp}(u,b)
    =\sum_{T\in\mathcal T(\stratopp,u,b)}
       \operatorname{sgn}(T)\epsilon^{\widehat{\rk}(T)}.
    \label{app:eq:signed-tree-identity}
  \end{equation}
  Moreover, the residual is zero if and only if $\mathcal T(\stratopp,u,b)$ is empty.
  Otherwise, its sign is the sign of the unique tree of minimum rank.
\end{lemma}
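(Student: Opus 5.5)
The plan is to expand the residual using the forest formula \eqref{app:eq:generic-forest-formula} and then reinterpret each surviving term as a spanning tree of $\mathcal T(\stratopp,u,b)$.

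\emph{Step 1: expand the residual.} Since $\sum_v\widetilde P(u,b,v)=1$, we can write $q_u^{\stratopp}=\sum_v\widetilde P(u,b,v)q_u^{\stratopp}$. Hence
\[
Z_{u,b}D^{\stratopp}R^{\stratopp}(u,b)=\sum_{e=(u,b,v)\in E(u,b)}\epsilon^{\widehat{\rk}(e)}\bigl(N_u^{\stratopp}-N_v^{\stratopp}\bigr),
\]
where we set $N_{\fsink}^{\stratopp}=D^{\stratopp}$ and $N_{\tsink}^{\stratopp}=0$. These conventions are consistent with $q_{\fsink}=1$ and $q_{\tsink}=0$, provided we write $N_v^{\stratopp}=\sum_{F\in\mathcal F_{\stratopp}}\chi_F(v)\epsilon^{\widehat{\rk}(F)}$ for every $v\in V$. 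Expanding gives the double sum $\sum_{e,F}(\chi_F(u)-\chi_F(v))\epsilon^{\widehat{\rk}(F)+\widehat{\rk}(e)}$.

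\emph{Step 2: match terms with trees.} A pair $(F,e)$ with $\chi_F(u)\neq\chi_F(v)$ means that $u$ and $v$ lie in different components of $F$. By \cref{app:lem:functional-forest}, $F$ has exactly two components, one per root. Adding $e$ therefore gives an acyclic undirected graph with $|V|-1$ edges, that is, a spanning tree $T=F\cup\{e\}$. This tree uses one edge of $E(z,\stratopp(z))$ at each $z\in U$ and the edge $e\in E(u,b)$. Because $b\neq\stratopp(u)$, the edge $e$ is distinguishable from $F$'s edge at $u$, even when both have the same endpoints.

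The coefficient $\chi_F(u)-\chi_F(v)$ equals $+1$ exactly when $u$ leads to $\fsink$, so it equals $\operatorname{sgn}(T)$. The exponent is $\widehat{\rk}(T)$.

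Conversely, let $T\in\mathcal T(\stratopp,u,b)$. Removing $e_b(T)$ leaves a graph with one edge out of each vertex of $U$ and no cycle, hence a forest in $\mathcal F_{\stratopp}$. Its two components are separated by $e_b(T)$, so $u$ and $v$ lead to different roots. The map $(F,e)\mapsto T$ is thus a bijection onto $\mathcal T(\stratopp,u,b)$, and this proves \eqref{app:eq:signed-tree-identity}.

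\emph{Step 3: sign and vanishing.} Distinct trees are distinct edge sets, so the tie-breaking encoding gives them pairwise distinct exponents $\widehat{\rk}(T)$. The sum is then a finite sum of signed monomials with distinct exponents. If $\mathcal T(\stratopp,u,b)$ is empty, the sum is zero. Otherwise the sum is nonzero, and its leading term is the unique minimum-exponent tree, whose sign gives the sign of the sum. Since $Z_{u,b}>0$ and $D^{\stratopp}>0$, the residual has the same sign, and it is zero iff $\mathcal T(\stratopp,u,b)$ is empty.

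The main obstacle is Step 2. Care is needed with parallel edges, with loops at $u$ (where $v=u$), and with edges from $u$ into a root. The convention $N_{\fsink}=D$ makes the root cases uniform. A loop has $\chi_F(u)=\chi_F(v)$, so it contributes zero and matches no tree.
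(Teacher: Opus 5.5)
Your proposal is correct and follows essentially the same route as the paper. Both rewrite the residual as $\sum_{e}\epsilon^{\widehat{\rk}(e)}(N_u^{\stratopp}-N_v^{\stratopp})$, biject the nonvanishing pairs $(F,e)$ with $\mathcal T(\stratopp,u,b)$ by adding or deleting $e_b(T)$, and use distinctness of encoded ranks to read off the sign. Your explicit conventions $N_{\fsink}^{\stratopp}=D^{\stratopp}$ and $N_{\tsink}^{\stratopp}=0$, and your use of strict positivity of $Z_{u,b}$ and $D^{\stratopp}$ rather than mere nonnegativity, are small but welcome clarifications.
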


\begin{proof}
  Recall that $q_v^{\stratopp}=N_v^{\stratopp}/D^{\stratopp}$.
Here $N_v^{\stratopp}$ is the sum of the weights of forests in which
$v$ leads to $\fsink$, and $D^{\stratopp}$ is the sum of all forests.
For an edge $e=(u,b,v)$, its transition probability is
\begin{align*}
  \widetilde P(u,b,v)
  &=\frac{\epsilon^{\widehat{\rk}(e)}}{Z_{u,b}},
  & Z_{u,b}
  &=\sum_{e\in E(u,b)}\epsilon^{\widehat{\rk}(e)}.
\end{align*}

Using the definition of the Bellman residual:
\begin{align*}
  R^{\stratopp}(u,b)
  &=q_u^{\stratopp}
    -\sum_{v\in V}\widetilde P(u,b,v)q_v^{\stratopp}\\
  &=\frac{N_u^{\stratopp}}{D^{\stratopp}}
    -\sum_{e=(u,b,v)\in E(u,b)}
      \frac{\epsilon^{\widehat{\rk}(e)}}{Z_{u,b}}
      \frac{N_v^{\stratopp}}{D^{\stratopp}}.
\end{align*}
Multiplying by
$Z_{u,b}D^{\stratopp}$ gives
\begin{align*}
  Z_{u,b}D^{\stratopp}R^{\stratopp}(u,b)
  &=Z_{u,b}N_u^{\stratopp}
    -\sum_{e=(u,b,v)\in E(u,b)}
      \epsilon^{\widehat{\rk}(e)}N_v^{\stratopp}\\
  &=\sum_{e=(u,b,v)\in E(u,b)}
      \epsilon^{\widehat{\rk}(e)}N_u^{\stratopp}
    -\sum_{e=(u,b,v)\in E(u,b)}
      \epsilon^{\widehat{\rk}(e)}N_v^{\stratopp}
      &&\text{(expand $Z_{u,b}$)}\\
  &=\sum_{e=(u,b,v)\in E(u,b)}
      \epsilon^{\widehat{\rk}(e)}
      (N_u^{\stratopp}-N_v^{\stratopp}).
      &&\text{(combine the two sums)}
\end{align*}
  By the definition of $\chi_F$,
  \begin{align*}
    N_u^{\stratopp}-N_v^{\stratopp}
    &=\sum_{F\in\mathcal F_{\stratopp}}
       (\chi_F(u)-\chi_F(v))\epsilon^{\widehat{\rk}(F)}.
  \end{align*}
  Now, we prove \Cref{app:eq:signed-tree-identity} by the following bijection.
  
  A pair $(F,e)$ contributes to the left side only when $u$ and $v$ lead to different roots.
  Its coefficient is $+1$ if $u$ leads to $\fsink$, and $-1$ otherwise.
  The forest has two connected components, each containing one root.
  Thus, adding $e$ joins them without creating a cycle.
  Also, $e\notin F$ since $b\neq\stratopp(u)$.
  Hence $F\cup\{e\}\in\mathcal T(\stratopp,u,b)$.

  Conversely, deleting $e_b(T)=(u,b,v)$ from a tree $T$ gives a
  consistent forest and separates $u$ from $v$.
  Each connected component contains one root by
  \cref{app:lem:functional-forest}.
  These operations are inverse, and
  \begin{align*}
    \widehat{\rk}(T)&=\widehat{\rk}(F)+\widehat{\rk}(e_b(T)),\\
    \operatorname{sgn}(T)&=\chi_F(u)-\chi_F(v).
  \end{align*}
  This proves \cref{app:eq:signed-tree-identity}.
  Distinct trees have distinct encoded ranks, so the minimum term
  cannot cancel and determines the sign.
  Since both $Z_{u,b}$ and $D^{\stratopp}$ are nonnegative, multiplying by them does not change the sign.
\end{proof}

Next, we focus on determining the sign of the minimum-rank tree. Define
\begin{align*}
  \mathcal G&=\{E(z,\stratopp(z)):z\in U\}\cup\{E(u,b)\},\\
  E'&=\bigcup_{G\in\mathcal G}G.
\end{align*}
These are $|U|+1=n-1$ nonempty, disjoint groups, where $n=|V|$.
Let $\mathcal I_1$ contain the subsets of $E'$ with at most one edge
from each group. Let $\mathcal I_2$ contain the subsets with no undirected cycle.
These are the independent sets of a partition matroid and a graphic
matroid, respectively. A common independent set belongs to both.
Give each edge the profit
\begin{align*}
  W&=\sum_{e\in E'}\widehat{\rk}(e),
      &&\text{(total encoded rank)}\\
  p(e)&=W+1-\widehat{\rk}(e).
      &&\text{(profit of an edge)}
\end{align*}
Weighted matroid intersection finds a common independent set of
maximum profit in polynomial time~\cite{Schrijver03}.

\algtext*{EndIf}
\algtext*{EndFor}

\begin{algorithm}[H]
\caption{Residual Sign \texttt{ResidualSign}}
\label{app:algo:residual-sign}
\begin{algorithmic}[1]
    \State \textbf{Input:} $\mergedmdp$, $\widehat{\rk}$, $\stratopp$,
           $u\in U$, $b\neq\stratopp(u)$
    \State \textbf{Output:} The sign of $R^{\stratopp}(u,b)$

    \State $T\gets
        \texttt{MinTree}(\mergedmdp,\widehat{\rk},\stratopp,u,b)$

    \If{$T=\bot$}
        \State \Return $0$
    \EndIf

    \State $e_b\gets$ the unique edge in $T\cap E(u,b)$
    \State $F\gets T\setminus\{e_b\}$

    \If{$u$ leads to $\fsink$ in $F$}
        \State \Return $+1$
    \Else
        \State \Return $-1$
    \EndIf
\end{algorithmic}
\end{algorithm}

\begin{algorithm}[H]
\caption{Minimum Tree \texttt{MinTree}}
\label{app:algo:minimum-tree}
\begin{algorithmic}[1]
    \State \textbf{Input:} $\mergedmdp$, $\widehat{\rk}$, $\stratopp$,
           $u\in U$, $b\neq\stratopp(u)$
    \State \textbf{Output:} A minimum encoded-rank tree in
           $\mathcal T(\stratopp,u,b)$, or $\bot$

    \State $\mathcal G\gets\{E(u,b)\}$
    \ForAll{$z\in U$}
        \State $\mathcal G\gets
            \mathcal G\cup\{E(z,\stratopp(z))\}$
    \EndFor
    \State $E'\gets\bigcup_{G\in\mathcal G}G$

    \State $\mathcal I_1\gets
        \texttt{PartitionMatroid}(\mathcal G)$
        \Comment{At most one edge per group}
    \State $\mathcal I_2\gets
        \texttt{GraphicMatroid}(V,E')$
        \Comment{No undirected cycle}

    \State $W\gets\sum_{e\in E'}\widehat{\rk}(e)$
    \ForAll{$e\in E'$}
        \State $p(e)\gets W+1-\widehat{\rk}(e)$
    \EndFor

    \State $J\gets
        \texttt{WeightedMI}(\mathcal I_1,\mathcal I_2,p)$

    \If{$|J|<|V|-1$}
        \State \Return $\bot$
    \Else
        \State \Return $J$
    \EndIf
\end{algorithmic}
\end{algorithm}

\begin{lemma}[Polynomial stability test]
  \label{app:lem:np-bellman-test}
  \Cref{app:algo:residual-sign}
  computes the sign of
  $R^{\stratopp}(u,b)$ in polynomial time.
  Consequently, stability of $\stratopp$ can be checked in polynomial time.
\end{lemma}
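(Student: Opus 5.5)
The argument splits into a correctness part for \texttt{MinTree}, a correctness part for \texttt{ResidualSign}, and a running-time bound.

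\emph{Correctness of \texttt{MinTree}.} We show that the returned set $J$ is a minimum encoded-rank tree of $\mathcal T(\stratopp,u,b)$, or $\bot$ exactly when this set is empty.

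First we identify the feasible sets. The groups in $\mathcal G$ are disjoint and there are $n-1$ of them. So a common independent set of $\mathcal I_1$ and $\mathcal I_2$ has at most $n-1$ edges, since it takes at most one edge per group. A common independent set of size exactly $n-1$ takes one edge from every group. Because it is acyclic on the $n$ vertices of $V$ with $n-1$ edges, it is a spanning tree of the underlying undirected graph. Parallel edges and loops count as cycles in $\mathcal I_2$, so this conclusion is safe. Hence the common independent sets of size $n-1$ are exactly the elements of $\mathcal T(\stratopp,u,b)$.

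Next we check that the profit forces maximum cardinality first. Each profit satisfies $p(e)\ge 1$, and
\[
  p(J)=|J|(W+1)-\widehat{\rk}(J),
  \qquad 0\le\widehat{\rk}(J)\le W .
\]
A set with one more edge therefore gains at least $W+1$ in profit and loses at most $W$. So every maximum-profit common independent set has maximum cardinality. Among sets of size $n-1$, maximising $p$ is the same as minimising $\widehat{\rk}$.

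Consequently, $|J|<n-1$ holds iff $\mathcal T(\stratopp,u,b)=\emptyset$. Otherwise $J$ is a tree of minimum encoded rank. By the tie-breaking property of the encoding, this minimum tree is unique.

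\emph{Correctness of \texttt{ResidualSign}.} By \cref{app:lem:np-signed-tree}, the residual is zero iff $\mathcal T(\stratopp,u,b)$ is empty. Otherwise the residual has the sign of $\operatorname{sgn}(T)$ for the unique minimum tree $T$, because $Z_{u,b}D^{\stratopp}>0$.

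To compute $\operatorname{sgn}(T)$, the algorithm removes the unique edge of $T$ lying in $E(u,b)$. This leaves $F=T\setminus\{e_b(T)\}\in\mathcal F_{\stratopp}$. The sign is then $+1$ iff $u$ leads to $\fsink$ in $F$. This is checked by following the successor edges of $F$ from $u$, which takes at most $n$ steps. For $b=\stratopp(u)$, the residual is zero by \cref{app:eq:fixed-policy-bellman}, so no call is needed.

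\emph{Running time.} The graph has polynomially many edges. Each encoded rank is $2^{m+1}\rk(e)+2^i$, so it has $O(m+\log\rk)$ bits, which is polynomial by \cref{app:lem:small-ranks}. Hence $W$ and all profits have polynomial bit length.

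Weighted matroid intersection runs in strongly polynomial time in the number of elements, given independence oracles~\cite{Schrijver03}. This bound counts arithmetic operations on the weights, so with polynomial-bit weights the total time is polynomial. Both oracles are polynomial: the partition oracle counts edges per group, and the graphic oracle tests acyclicity with union–find.

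Stability requires checking at most $|U|\,|A_{\opp}|$ residuals. Each check is one call to \texttt{ResidualSign}, so the whole stability test is polynomial.

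The main obstacle is the cardinality-first profit argument, together with making sure the matroid intersection bound really applies with weights of exponential magnitude but polynomial bit length. For the latter I would cite the strongly polynomial weighted matroid intersection algorithm, which uses only additions and comparisons of weights.
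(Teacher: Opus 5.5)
Your proposal is correct and follows essentially the same route as the paper. You identify $\mathcal T(\stratopp,u,b)$ with the common independent sets of size $n-1$ of the partition and graphic matroids, use the profit $W+1-\widehat{\rk}(e)$ to force maximum cardinality before minimum encoded rank, read the residual sign off the unique minimum tree via \cref{app:lem:np-signed-tree}, and make at most $|U|\,|A_{\opp}|$ such calls; your appeal to a strongly polynomial weighted matroid intersection algorithm on polynomial-bit weights is a slightly more careful version of the paper's remark that all ranks and profits have polynomially many bits.
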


\begin{proof}
  A common independent set of size $n-1$ selects one edge from each
  group and is acyclic. It is connected since it has $n-1$ edges on
  $n$ vertices. Thus, these sets are $\mathcal T(\stratopp,u,b)$.

  For $J\subseteq E'$, we have $0\leq\widehat{\rk}(J)\leq W$ and
  \begin{align*}
    p(J)&=\sum_{e\in J}p(e)\\
        &=(W+1)|J|-\widehat{\rk}(J).
          &&\text{(sum the edge profits)}
  \end{align*}
  If $|J|>|J'|$, then
  \begin{align*}
    p(J)-p(J')
    &=(W+1)(|J|-|J'|)-\widehat{\rk}(J)+\widehat{\rk}(J')\\
    &\geq(W+1)-W=1.
  \end{align*}
  For equal sizes, $p(J)-p(J')=\widehat{\rk}(J')-\widehat{\rk}(J)$.
  Hence maximum profit gives maximum size and then minimum encoded rank.
  If $|J|<n-1$, no required tree exists.
  Otherwise, $J$ is the unique minimum tree.
  Following $F$ from $u$ reaches a root in at most $|U|$ steps.
  The returned sign is therefore correct by \cref{app:lem:np-signed-tree}.

  Membership in the two matroids is checked by counting edges in
  each group and testing for undirected cycles.
  Both checks take polynomial time.
  All ranks, profits, and their sums have polynomially many bits.
  Hence the procedure takes polynomial time.
  
  Running the procedure for every $u$ and $b\neq\stratopp(u)$ uses at most
  $|U|\,|A_{\opp}|$ calls.
  The strategy is stable when no call returns $-1$,
  since the omitted residuals are zero.
\end{proof}

\subsubsection{The nondeterministic algorithm}

\begin{theorem}[Membership in NP]
  \label{app:thm:np-upper-bound}
  The limit-sure reachability problem belongs to $\mathsf{NP}$.
\end{theorem}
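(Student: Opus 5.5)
The plan is to exhibit a nondeterministic polynomial-time procedure, namely \cref{algo:limit-sure-reachability} with $B=N^{N/2}$, and to verify running time, soundness and completeness. The certificate has two parts. The first is the vector of action ranks $r_{p,s,a}\in\{0,\ldots,B\}\cup\{\infty\}$. The second is one deterministic memoryless opponent response $\stratopp\colon U\to A_{\opp}$ in the encoded MDP.

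\emph{Running time.} Each rank has $O(N\log(N+1))$ bits, so the certificate has polynomial size. From the ranks we compute the transition ranks of $\imdp$ by \eqref{app:eq:transition-rank}. Each such rank is a minimum over team profiles of sums of action ranks, and the explicit table of $\delta$ indexed by complete profiles makes this polynomial in the input. The set $\Trap$ is obtained by \cref{app:lem:mec-support}, and $\mergedmdp$ by contraction. The encoded ranks $\widehat{\rk}$ have $O(m+N\log N)$ bits. Stability of the guessed $\stratopp$ is then checked with at most $|U|\,|A_{\opp}|$ calls to \cref{app:algo:residual-sign}, each polynomial by \cref{app:lem:np-bellman-test}. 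Finally, $\rho_{\stratopp}^{\fsink}$ and $\rho_{\stratopp}^{\tsink}$ are computed by \cref{app:algo:forest-min} via \cref{app:prop:mdst-poly}.

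\emph{Soundness.} Suppose some run accepts. First take the preprocessing cases. If $\sinit\in\target$, the team trivially wins. If $\Trap=\emptyset$, then by \cref{app:lem:trap} every response has trap probability $0$, so the realisation \eqref{app:eq:realisation} is Puiseux-winning. Otherwise the run accepted with a stable $\stratopp$ satisfying $\rho_{\stratopp}^{\fsink}>\rho_{\stratopp}^{\tsink}$. Then \cref{app:lem:np-one-policy-certificate} gives the Puiseux-winning property for every response in $D$. In either case \cref{app:lem:puiseux} yields limit-sure winning.

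\emph{Completeness.} If the team wins limit surely, \cref{app:lem:small-ranks} gives a Puiseux-winning strategy with ranks bounded by $B$, and the run guesses these ranks. By \cref{app:lem:ranks-only} the realisation \eqref{app:eq:realisation} is also Puiseux-winning, so none of the reject branches fires. Rejection when $\sinit\in\Trap$ is justified because the opponent stays there forever. Rejection when no target is reachable is justified because the failure probability is then $1$. If the run does not accept during preprocessing, the forward direction of \cref{app:lem:np-one-policy-certificate}, using existence from \cref{app:lem:np-stable-optimal}, provides a stable response satisfying the forest criterion. Guessing that response makes the run accept.

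The main obstacle is making sure the preprocessing branches exactly match the standing assumptions of \cref{app:sec:np-upper-bound}. Every rejection must be justified for the chosen ranks, so that completeness is not broken by a premature reject. Apart from that, the argument is an assembly of the earlier lemmas.
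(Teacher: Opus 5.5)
Your proposal is correct and follows the paper's proof essentially step for step. Both use the same nondeterministic algorithm with $B=N^{N/2}$ and the same running-time accounting; soundness goes through the trap lemma and the one-response certificate, and completeness goes through small ranks, invariance under ranks, and existence of a stable response. Your explicit justification of the two reject branches ($\sinit\in\Trap$ and no reachable target) is a small clarification that the paper leaves implicit in ``preprocessing does not reject.''
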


\begin{proof}
  Let $N$ be the number of triples $(p,s,a)$ with a player $p\in\team$,
  a state $s\in\states$, and an action $a\in A_p$, and put $B=N^\frac{N}{2}$.
  Use the following nondeterministic algorithm.

  \algtext*{EndIf}
\algtext*{EndFor}

\begin{algorithm}[H]
\caption{Nondeterministic Algorithm for Limit-Sure Reachability}
\label{app:algo:limit-sure-reachability}
\begin{algorithmic}[1]
    \State \textbf{Input:} A game and a rank bound $B$
    \State \textbf{Output:} Accept or reject

    \ForAll{$p\in\team$ and $s\in\states$}
        \State Guess a nonempty support $\Gamma_{p,s}\subseteq A_p$.
        \ForAll{$a\in\Gamma_{p,s}$}
            \State Guess $r_{p,s,a}\in\{0,\ldots,B\}$.
        \EndFor
        \ForAll{$a\in A_p\setminus\Gamma_{p,s}$}
            \State $r_{p,s,a}\gets\infty$
        \EndFor
        \If{$\min_{a\in A_p}r_{p,s,a}\neq0$}
            \State \Return Reject
        \EndIf
    \EndFor

    \State Compute the induced transition ranks.
    \State Compute the MECs and $\Trap$.

    \If{$\sinit\in\target$ or $\Trap=\emptyset$}
        \State \Return Accept
    \EndIf
    \If{$\sinit\in\Trap$}
        \State \Return Reject
    \EndIf
    \If{no target is reachable from $\sinit$
        in the support of $\imdp^\sharp$}
        \State \Return Reject
    \EndIf

    \State Build $\mergedmdp$ and its encoded edge ranks $\widehat{\rk}$.
    \State Guess $\stratopp\colon U\to A_{\opp}$.

    \ForAll{$u\in U$}
        \ForAll{$b\in A_{\opp}\setminus\{\stratopp(u)\}$}
            \State $d\gets
                \texttt{ResidualSign}(\mergedmdp,\widehat{\rk},\stratopp,u,b)$
            \If{$d=-1$}
                \State \Return Reject
            \EndIf
        \EndFor
    \EndFor

    \State Compute $\rho_{\stratopp}^{\fsink}$ and $\rho_{\stratopp}^{\tsink}$
           using \cref{app:prop:mdst-poly}.
    \If{$\rho_{\stratopp}^{\fsink}>\rho_{\stratopp}^{\tsink}$}
        \State \Return Accept
    \Else
        \State \Return Reject
    \EndIf
\end{algorithmic}
\end{algorithm}

  \emph{Running time.}
  Each guessed rank has $O((N+1)\log(N+2))$ bits, and the
  guessed opponent strategy has polynomial size because it is memoryless and deterministic. 
  Transition ranks are computed as in \cref{app:def:induced-mdp}. MECs are computed in
  polynomial time by \cref{app:lem:mec-support}.
  The graph and the binary encodings of the weights $2^i$,
  $2^{m+1}\rk(e_i)+2^i$ have polynomial size.
  There are polynomially many calls to \cref{app:algo:residual-sign}. \cref{app:prop:mdst-poly} computes the last comparison in
  polynomial time. 
  
  \emph{Soundness.}
  Consider the realisation of the guessed ranks by the normalised powers of $\epsilon$
  used in \cref{app:sec:puiseux-mdp}. Acceptance in the preprocessing
  cases is correct by \cref{app:lem:trap}.
  Otherwise, the guessed response is stable. Therefore, the tested inequalities
  hold, and those for $b=\stratopp(u)$ are equalities by
  \cref{app:eq:fixed-policy-bellman}. The last comparison and
  \cref{app:lem:np-one-policy-certificate} show that the team strategy is
  Puiseux-winning. In either case, \cref{app:lem:puiseux} implies that
  the team wins limit surely.

  \emph{Completeness.}
  If the team wins limit surely, \cref{app:lem:small-ranks}
  gives a Puiseux-winning strategy with finite ranks at most
  $N^{N/2}\leq B$. Some run guesses these supports and ranks.
  Their normalised-power realisation is also Puiseux-winning by
  \cref{app:lem:ranks-only}. Thus, preprocessing does not reject. Next, \cref{app:lem:np-one-policy-certificate} gives a stable
  response satisfying the final forest comparison. A run guessing
  that response passes every check and accepts.
\end{proof}

\section{Safety}\label{app:sec:safety}
We now consider the safety objective $\obj=\Box C$, for a set
$C\subseteq\states$ of safe states. We show that, unlike for
reachability, limit-sure and almost-sure winning are equivalent. 
This is an extension of a known result for concurrent 
games with two players~\cite {CdAH12} to teams
without shared randomness. The two-player proof constructs a
memoryless almost-sure winning optimal strategy from a fixed-point characterisation of the
game value. We instead use a compactness argument on the space of team
strategies. 
We only show that an almost-sure winning strategy exists if there exists a limit-sure winning strategy.

In \Cref{app:sec:safety-topology}, we establish the topological lemmas used
in the main proof. We show that the space of team strategies is compact, and for every
opponent strategy, the set of team strategies that guarantee a given
safety probability is closed. In \Cref{app:sec:safety-equivalence}, we
use these lemmas to show that limit-sure winning implies
almost-sure winning.

\subsection{Topological Preliminaries}
\label{app:sec:safety-topology}

For $n\in\mathbb{N}$, let
\[
  \Box_n C
  =
  \{s^0s^1\cdots\in\states^\omega
  \mid s^k\in C \text{ for every } 0\leq k\leq n\}.
\]

The events $\Box_n C$ form a decreasing sequence and satisfy $\Box C=\bigcap_{n\geq0}\Box_n C$. We view a team strategy as the collection of all probabilities $\stratp(h)(a)$, for every player $p\in\team$, history $h\in\states^+$, and action $a\in A_p$. We use the usual product topology on these probabilities.
We equip $\Strats{\team}^{\mathrm h}$ with the product topology.

\begin{lemma}[Compactness of the strategy space]
\label{app:lem:safety-compact}
The space $\Strats{\team}^{\mathrm h}$ is compact.
\end{lemma}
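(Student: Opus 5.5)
The plan is to identify $\Strats{\team}^{\mathrm h}$ with a product of compact spaces and apply Tychonoff's theorem. By \cref{def:strategies},
\[
  \Strats{\team}^{\mathrm h}=\prod_{p\in\team}\prod_{h\in\states^{+}}\Dist(A_p),
\]
where each factor $\Dist(A_p)$ is the probability simplex in $\mathbb{R}^{A_p}$. The product topology on the coordinates $\stratp(h)(a)$ coincides with the product topology on this product of simplices. The reason is that the topology of $\Dist(A_p)$ as a subspace of $\mathbb{R}^{A_p}$ is itself the product topology of its finitely many coordinates $\stratp(h)(a)$, $a\in A_p$. So I would first record this identification, noting that grouping coordinates by $(p,h)$ does not change the product topology.

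Second, each factor $\Dist(A_p)$ is compact. Since $A_p$ is finite and nonempty, $\Dist(A_p)$ is a nonempty, closed and bounded subset of $\mathbb{R}^{A_p}$, so Heine--Borel applies. It is bounded because it lies in $[0,1]^{A_p}$, and it is closed because it is cut out by the closed conditions $x_a\geq0$ and $\sum_a x_a=1$.

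Third, the index set $\team\times\states^{+}$ is countable, since $\states$ is finite. Tychonoff's theorem then gives compactness of the product. Because the index set is countable, only the countable version is needed, and I could avoid the axiom of choice: each simplex is a compact metric space, so the product is metrizable, for instance with $d(x,y)=\sum_i 2^{-i}\min(1,\|x_i-y_i\|)$. Compactness then follows from sequential compactness via a diagonal extraction of convergent subsequences coordinate by coordinate.

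I expect no step to be a real obstacle. The only point needing care is the first one, namely checking that the topology declared in the paper, the product over all $\stratp(h)(a)$, equals the product of the simplex topologies. Equivalently, $\Strats{\team}^{\mathrm h}$ is a closed subset of the compact cube $[0,1]^{\team\times\states^{+}\times A}$, cut out by the closed conditions $\sum_a\stratp(h)(a)=1$, each of which involves finitely many coordinates. I would present this closed-subset-of-a-cube version as the main argument, since it is shortest.
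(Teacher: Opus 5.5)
Your proposal is correct and follows essentially the same route as the paper. The paper writes $\Strats{\team}^{\mathrm h}=\prod_{p\in\team}\bigl(\Dist(A_p)\bigr)^{\states^+}$, notes that each simplex is compact because $A_p$ is finite, and concludes by Tychonoff's theorem. Your closed-subset-of-a-cube version and your countable diagonal argument are equivalent packagings of that idea, and your explicit check that the coordinatewise topology coincides with the product of the simplex topologies is a detail the paper leaves implicit.
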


\begin{proof}
For every player $p\in\team$ and history $h\in\states^+$, the value
$\stratp(h)$ belongs to the simplex $\Dist(A_p)$, which is compact since
$A_p$ is finite. By \cref{def:strategies},
\[
  \Strats{\team}^{\mathrm h}
  =
  \prod_{p\in\team}
  \bigl(\Dist(A_p)\bigr)^{\states^+}.
\]
Hence $\Strats{\team}^{\mathrm h}$ is compact because the product of compact sets is also compact by Tychonoff's theorem.
\end{proof}

We show that a lower bound on the safety probability defines
a closed set of strategies.
For a threshold $c\in[0,1]$ and a fixed tuple $\stratrest$ of strategies for the opponents, let
\[
  K(\stratrest,c)
  =
  \left\{
    \stratteam\in\Strats{\team}^{\mathrm h}
    \mid
    \Prb{\sinit}{\stratteam,\stratrest}(\Box C)\geq c
  \right\}.
\]

\begin{lemma}[Closed safety guarantees]
\label{app:lem:safety-closed}
For every $\stratrest\in\Strats{\opp}^{\mathrm h}$ and $c\in[0,1]$,
the set $K(\stratrest,c)$ is closed.
\end{lemma}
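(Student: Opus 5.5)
The plan is to write the safety probability as a decreasing limit of finite-horizon probabilities and show that each of these is continuous in the team strategy. Concretely, fix $\stratrest$ and $c$. Since $\Box C=\bigcap_{n\geq0}\Box_n C$ is a decreasing intersection, continuity of measure from above gives
\[
  \Prb{\sinit}{\stratteam,\stratrest}(\Box C)
  =\inf_{n\geq0}\Prb{\sinit}{\stratteam,\stratrest}(\Box_n C).
\]
Hence
\[
  K(\stratrest,c)=\bigcap_{n\geq0}\bigl\{\stratteam\mid \Prb{\sinit}{\stratteam,\stratrest}(\Box_n C)\geq c\bigr\}.
\]
An intersection of closed sets is closed, so it suffices to show that each set on the right is closed. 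For that, it is enough to show that the map $f_n\colon\stratteam\mapsto\Prb{\sinit}{\stratteam,\stratrest}(\Box_n C)$ is continuous in the product topology. Each such set is then the preimage of the closed interval $[c,1]$ under $f_n$.

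For continuity, expand $f_n$ as a finite sum over state sequences $s^0\cdots s^n$ with $s^0=\sinit$ and all $s^k\in C$. The probability of one such prefix is a product over $k<n$ of the one-step factors
\[
  \sum_{\bar a\in A_{\team},\,\bar b}\Bigl(\prod_{p\in\team}\stratp(h_k)(a_p)\Bigr)\,\stratrest(h_k)(\bar b)\,\delta\bigl(s^k,(\bar a,\bar b)\bigr)(s^{k+1}),
\]
where $h_k=s^0\cdots s^k$. Everything here is fixed except the coordinates $\stratp(h_k)(a)$. So the prefix probability is a polynomial in finitely many coordinates of $\stratteam$, namely those indexed by the histories of length at most $n$, which are finitely many since $\states$ is finite.

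Coordinate projections are continuous in the product topology, so polynomials in finitely many coordinates are continuous. A finite sum of continuous functions is continuous, and this gives continuity of $f_n$. The expansion is stated with the opponent possibly randomising, $\stratrest\in\Strats{\opp}^{\mathrm h}$, but this changes nothing because $\stratrest$ is fixed. The case $s^0=\sinit\notin C$ is trivial: then $f_n\equiv0$.

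The only step needing care is the first one, namely justifying $\Prb{}{}(\Box C)=\lim_n\Prb{}{}(\Box_n C)$. This is exactly where closedness, rather than mere continuity, comes from: the limit is an infimum of continuous functions, which is only upper semicontinuous. I expect no real obstacle beyond stating that the finite-prefix probabilities are given by the product formula from the definition of the induced measure.
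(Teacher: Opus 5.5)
Your proposal is correct and follows the paper's proof essentially step by step. Continuity from above writes $K(\stratrest,c)$ as the intersection over $n$ of the sets $\{\stratteam \mid f_n(\stratteam)\geq c\}$, and each of these is closed because $f_n$ is a finite sum of products of finitely many strategy coordinates with fixed opponent and transition probabilities, hence continuous in the product topology.
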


\begin{proof}
For $n\geq0$, define
\[
  f_n:
  \Strats{\team}^{\mathrm h}\times\Strats{\opp}^{\mathrm h}
  \longrightarrow [0,1],
  \qquad
  f_n(\stratteam,\stratrest)
  =
  \Prb{\sinit}{\stratteam,\stratrest}(\Box_n C).
\]
Fix $\stratrest$. For fixed $n$, the map
$\stratteam\mapsto f_n(\stratteam,\stratrest)$ depends only on the
strategy choices at histories of length at most $n+1$. There are finitely
many such histories, and the probability is a finite sum of products of
strategy probabilities and transition probabilities. Hence this map is
continuous.

Since $\Box_n C$ decreases to $\Box C$, continuity of probability measures
from above gives
\begin{align*}
  \Prb{\sinit}{\stratteam,\stratrest}(\Box C)
  &=
  \lim_{n\to\infty} f_n(\stratteam,\stratrest)
  &&\text{(since $\Box_n C\downarrow\Box C$)}
  \\
  &=
  \inf_{n\geq0} f_n(\stratteam,\stratrest).
\end{align*}
Therefore:
\[
  K(\stratrest,c)
  =
  \bigcap_{n\geq0}
  \left\{
    \stratteam\in\Strats{\team}^{\mathrm h}
    \mid
    f_n(\stratteam,\stratrest)\geq c
  \right\}.
\]
For every $n$, the set in the intersection is closed by continuity of
$\stratteam\mapsto f_n(\stratteam,\stratrest)$. Therefore,
$K(\stratrest,c)$ is closed.
\end{proof}

Now, we show an existential argument on closed nested sets.

\begin{lemma}[Nested closed sets]
\label{app:lem:nested-closed}
Let $X$ be compact and let $K_2\supseteq K_3\supseteq\cdots$ be nonempty
closed subsets of $X$. Then $\bigcap_{m\geq2}K_m\neq\emptyset$.
\end{lemma}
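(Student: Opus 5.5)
I will prove this through the finite intersection property, which characterises compactness. I argue by contradiction. Suppose $\bigcap_{m\geq2}K_m=\emptyset$. Then the complements $X\setminus K_m$, for $m\geq2$, are open sets whose union is all of $X$. By compactness of $X$, finitely many of them already cover $X$, say $X\setminus K_{m_1},\ldots,X\setminus K_{m_j}$ with $m_1<\cdots<m_j$.

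The nested structure now does the rest. Since $K_{m_j}\subseteq K_{m_i}$ for every $i\leq j$, we have $X\setminus K_{m_i}\subseteq X\setminus K_{m_j}$. The finite union therefore equals $X\setminus K_{m_j}$, and this set is all of $X$. Hence $K_{m_j}=\emptyset$, which contradicts the assumption that every $K_m$ is nonempty.

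A degenerate case needs a remark: if $X$ were empty, the sets $K_m$ could not be nonempty, so it does not arise. No step here is a genuine obstacle. The only point needing care is to use the open-cover definition of compactness, which is the one Tychonoff's theorem delivers in \cref{app:lem:safety-compact}, rather than sequential compactness. The product $\Strats{\team}^{\mathrm h}$ is a countable product of simplices, so it is in fact metrisable, but the argument above does not need this.
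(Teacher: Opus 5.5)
Your proof is correct and matches the paper's argument: assume the intersection is empty, take the complements $X\setminus K_m$ as an open cover, extract a finite subcover, and use the nesting to show that the complement for the largest index is all of $X$, contradicting $K_M\neq\emptyset$. Your remarks on the empty space and on metrisability are accurate but not needed.
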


\begin{proof}
Suppose that $\bigcap_{m\geq2}K_m=\emptyset$.
For every $m\geq2$, let $U_m=X\setminus K_m$. Since $K_m$ is closed, $U_m$ is open, and \begin{align*} 
    X &=X\setminus\bigcap_{m\geq2}K_m &&\text{(the intersection is empty)} \\ &=\bigcup_{m\geq2}(X\setminus K_m) &&\text{(by De Morgan's law)} \\ &=\bigcup_{m\geq2}U_m. 
\end{align*}

Thus $(U_m)_{m\geq2}$ is an open cover of $X$. By the definition of compactness, every open cover of $X$ has a finite subcover. Hence there exist indices $m_1,\ldots,m_k$ such that $X=U_{m_1}\cup\cdots\cup U_{m_k}$.
Let $M=\max\{m_1,\ldots,m_k\}$. Since the sets $K_m$ are decreasing, the sets $U_m$ are increasing, so $U_{m_i}\subseteq U_M$ for every $i$. Hence $X=U_M$. Therefore $K_M=\emptyset$, which is a contradiction.
\end{proof}

\subsection{From Limit-Sure to Almost-Sure Winning}
\label{app:sec:safety-equivalence}

\begin{theorem}[Limit-sure safety equals almost-sure safety]
\label{app:thm:safety-limit-almost}
For a game with a safety objective, the team wins limit surely if and
only if it wins almost surely.
\end{theorem}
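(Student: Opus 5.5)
The forward direction is immediate: an almost-sure winning team strategy ensures safety probability $1>1-\lambda$ against every opponent strategy, for every $\lambda\in(0,1)$. For the converse, I would use compactness of the full (history-dependent) strategy space, combining \cref{app:lem:safety-compact}, \cref{app:lem:safety-closed} and \cref{app:lem:nested-closed}. By \cref{rem:single-opponent} and \cref{lem:single-opponent}, I work with a single opponent throughout; the infimum over opponents is unchanged for each fixed team strategy.

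Assume the team wins limit surely. For each integer $m\geq2$, define
\[
  K_m=\bigcap_{\stratopp\in\Strats{\opp}^{\mathrm h}}K(\stratopp,1-1/m),
\]
which is the set of team strategies guaranteeing safety probability at least $1-1/m$ against every opponent strategy. The steps are as follows.

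\begin{enumerate}
\item $K_m$ is nonempty. Applying the definition of limit-sure winning with $\lambda=1/m$ gives a team strategy whose safety probability is strictly greater than $1-1/m$ against every opponent.
\item $K_m$ is closed. Each $K(\stratopp,1-1/m)$ is closed by \cref{app:lem:safety-closed}, and an arbitrary intersection of closed sets is closed. The index set of opponent strategies is uncountable, which is harmless here.
\item The sets are nested, $K_{m+1}\subseteq K_m$, since $1-1/(m+1)\geq 1-1/m$.
\end{enumerate}

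Since $\Strats{\team}^{\mathrm h}$ is compact by \cref{app:lem:safety-compact}, \cref{app:lem:nested-closed} yields some $\stratteam^*\in\bigcap_{m\geq2}K_m$. Then for every opponent strategy $\stratopp$, $\Prb{\sinit}{\stratteam^*,\stratopp}(\Box C)\geq 1-1/m$ for all $m$, hence equals $1$. So $\stratteam^*$ is almost-sure winning.

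The main point needing care is closedness of each $K(\stratopp,c)$, in particular that the safety probability is upper semicontinuous in the team strategy. This is exactly \cref{app:lem:safety-closed}, which writes the safety probability as the decreasing limit of the continuous finite-horizon probabilities $f_n$. With that lemma assumed, the only remaining subtlety is that the limiting strategy $\stratteam^*$ may require memory. This is acceptable because \cref{def:almost-sure} quantifies over $\Strats{\team}^{\mathrm h}$, so no memorylessness is needed. A different argument would be needed only if one insisted on memoryless strategies, where a limit of memoryless strategies stays memoryless but the limit-sure witnesses themselves need not be memoryless for safety.
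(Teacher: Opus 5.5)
Your proof is correct and matches the paper's argument: the same nested sets $K_m$, closedness via the finite-horizon probabilities, compactness of the strategy space, and the nested-closed-sets lemma. Your explicit appeal to the single-opponent reduction is a small addition that the paper leaves implicit.
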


\begin{proof}
Almost-sure winning implies limit-sure winning by
\cref{def:almost-sure,def:limit-sure}. 

We prove the converse.
Suppose that the team wins limit surely. 
For every integer $m\geq2$, let
\[
  K_m
  =
  \left\{
    \stratteam\in\Strats{\team}^{\mathrm h}
    \;\middle|\;
    \Prb{\sinit}{\stratteam,\stratrest}(\Box C)
    \geq 1-\frac1m
    \text{ for every }
    \stratrest\in\Strats{\opp}^{\mathrm h}
  \right\}.
\]
Every $K_m$ is nonempty by limit-sure winning. Moreover,
\[
  K_m
  =
  \bigcap_{\stratrest\in\Strats{\opp}^{\mathrm h}}
  K\left(\stratrest,1-\frac1m\right),
\]
so $K_m$ is closed by \cref{app:lem:safety-closed} because an intersection of closed sets is always closed. The sets are nested ($K_2\supseteq K_3\supseteq K_4\supseteq\cdots
$), since
$1-\frac{1}{m+1}>1-\frac1m$.
Thus, $(K_m)_{m\geq2}$ is a nested sequence of nonempty closed subsets of the
compact space $\Strats{\team}^{\mathrm h}$. By \cref{app:lem:nested-closed},
$\bigcap_{m\geq2}K_m\neq\emptyset$. Fix
$\stratteam^*\in\bigcap_{m\geq2}K_m$.

For every $\stratrest\in\Strats{\opp}^{\mathrm h}$ and every $m\geq2$,
\[
  \Prb{\sinit}{\stratteam^*,\stratrest}(\Box C)
  \geq 1-\frac1m.
\]
Since the left-hand side does not depend on $m$ and is at most $1$, it
must be equal to $1$. Thus
\[
  \Prb{\sinit}{\stratteam^*,\stratrest}(\Box C)=1
  \qquad
  \text{for every }
  \stratrest\in\Strats{\opp}^{\mathrm h}.
\]
Hence $\stratteam^*$ is almost-sure winning.
\end{proof}

By \cite[Theorem~8 and Corollary~2]{BHMST26}, almost-sure and sure winning
also coincide for safety. Together with \cref{app:thm:safety-limit-almost}, we obtain that for safety objectives, limit-sure, almost-sure, and sure winning are all equivalent. 

Moreover, sure safety can be decided in polynomial time by computing the
largest set of safe states from which some team profile ensures that,
against every opponent action profile, all possible successors remain in the set.
Hence limit-sure, almost-sure, and sure safety are all decidable in polynomial
time.

\section{IRATL}\label{app:sec:iratl}
We prove \cref{thm:iratl-model-checking}. Fix the input structure
$(G,L)$ and write for a state formula $\psi$: 
\[
 \operatorname{Sat}(\psi)=\{s\in\states\mid(G,L),s\models\psi\}
\]
for the set $\operatorname{Sat}(\psi)$. 
We omit $(G,L)$ from the notation if it is clear from context. 
From the semantics of boolean logic, the following statements hold. 
\[
\begin{aligned}
 \operatorname{Sat}(q)&=\{s\in\states\mid q\in L(s)\},\\
 \operatorname{Sat}(\neg\psi)&=\states\setminus\operatorname{Sat}(\psi),\\
 \operatorname{Sat}(\psi\vee\chi)&=
     \operatorname{Sat}(\psi)\cup\operatorname{Sat}(\chi).
\end{aligned}
\]

We write $\operatorname{Sub}(\varphi)$ for the set of distinct state subformulas of $\varphi$, including $\varphi$ itself and the state formulas appearing as arguments of temporal operators. 
It has size at most the length of $\varphi$.

\begin{lemma}\label{lem:iratl-one-modality}
    Consider a formula $\langle\!\langle C\rangle\!\rangle_w^r\theta$.
    Given the set of states satisfying each state formula appearing in $\theta$, we can decide in polynomial time using an $\NP$ oracle whether this formula holds at a given state.
\end{lemma}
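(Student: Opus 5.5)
The plan is a case split over the temporal operator $\theta\in\{\bigcirc\varphi,\Box\varphi,\varphi_1\,\mathsf U\,\varphi_2\}$, the randomisation type $r$ and the winning condition $w$. In each case I build a game from $G$ with team $C$, the single opponent given by \cref{rem:single-opponent}, and initial state the given state $s$. Then I call the appropriate decision procedure. The sets $\operatorname{Sat}(\psi)$ for the argument formulas are available by assumption, so all target and safe sets can be written down directly.

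\emph{Next.} For $\bigcirc\varphi$ only one round matters, so memory and randomisation beyond the first step are irrelevant. For sure and almost-sure winning (which coincide for a one-step objective), the coalition must find a distribution under which every action it can play, against every opponent profile, leads with probability one into $\operatorname{Sat}(\varphi)$. It suffices to test deterministic profiles: take any support profile of a winning product distribution. This is a polynomial check over the explicit table. For limit-sure winning we maximise $\min_{\bar b}\sum_{\bar a}\sigma(\bar a)\,\delta(s,(\bar a,\bar b))(\operatorname{Sat}(\varphi))$. Because the objective is one-step and the strategy space is compact, the value $1$ is attained, so limit-sure coincides with almost-sure here; this is the one-step version of the compactness argument of \cref{app:thm:safety-limit-almost}. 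The same holds with shared randomness.

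\emph{Always.} For $\Box\varphi$ the safe set is $\operatorname{Sat}(\varphi)$. With $r=\mathsf{ind}$, \cref{app:thm:safety-limit-almost} and the sure/almost-sure equivalence of \cite{BHMST26} show that all three conditions reduce to sure safety. Sure safety is the greatest fixed point described after that theorem. Each step of that fixed point asks whether some deterministic team profile keeps all successors inside the set for every opponent profile, which is polynomial in the explicit table. With $r=\mathsf{sh}$ the coalition is one player, and the classical two-player algorithms of \cite{dAHK07,CdAH12} run in polynomial time.

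\emph{Until.} For $\varphi_1\,\mathsf U\,\varphi_2$ I make every state of $\operatorname{Sat}(\varphi_2)$ an absorbing target. Every state outside $\operatorname{Sat}(\varphi_1)\cup\operatorname{Sat}(\varphi_2)$ becomes absorbing and non-target. A play of the original structure satisfies the until formula exactly when the modified play reaches the target, and probabilities of the relevant events are preserved under any strategies, so each winning condition transfers.
\begin{itemize}
\item Sure reachability is polynomial for both types: it is an attractor computation over deterministic profiles, because surely winning never needs randomisation for the team.
\item Almost-sure individual reachability is in $\NP$ by \cite{BHMST26}.
\item Limit-sure individual reachability is in $\NP$ by \cref{thm:np-upper-bound}.
\item Shared reachability of every kind is in $\P$ by \cite{dAHK07}.
\end{itemize}
Each $\NP$ question is answered by a single oracle call on a polynomially sized instance, and the remaining work is polynomial.

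The main obstacle is the next-operator limit-sure case, where I must argue that the supremum over product distributions is attained. I would prove it by continuity of the one-step winning probability on the compact product of simplices. A secondary point to check is that sure winning in the individual setting really reduces to deterministic profiles; I would justify this by replacing each team mixed action with any single action in its support.
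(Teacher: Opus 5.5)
Your proposal is correct and follows the paper's case split: next via the sure predecessor $\mathsf{CPre}_C$, always via the collapse of limit-sure, almost-sure and sure safety followed by the greatest fixed point, until via making $\operatorname{Sat}(\varphi_2)$ and the states outside $\operatorname{Sat}(\varphi_1)\cup\operatorname{Sat}(\varphi_2)$ absorbing and then calling the $\NP$ procedures, and shared randomisation via the two-player algorithms.

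The one genuinely different step is the limit-sure next case, which you flagged as the main obstacle. You argue by compactness. The map $\sigma\mapsto\min_{\bar b}\sum_{\bar a}\sigma(\bar a)\,\delta(s,(\bar a,\bar b))(\operatorname{Sat}(\varphi))$ is continuous on the compact product of simplices, so value $1$ is attained, and limit-sure collapses to almost-sure and hence to sure.

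The paper instead gives a direct quantitative argument. If $s\notin\mathsf{CPre}_C(T)$, then every team profile has some opponent profile that leaves $T$ with probability at least $\eta$, the smallest positive transition probability. Hence the uniformly randomising opponent forces failure with probability at least $\eta/\abs{A_{\players\setminus C}}$ against every team distribution, product or joint.

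The paper's route is more elementary. It gives an explicit gap from $1$ and handles both randomisation types at once, without topology. Your route reuses the idea behind the safety theorem and also covers both types. However, it only shows that an optimal one-step distribution exists, without an explicit bound on the failure probability. Either argument suffices for the lemma.
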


\begin{proof}
    First consider individual randomisation for a coalition $C$.
For $X\subseteq\states$, define its sure predecessor by
\[
 \mathsf{CPre}_C(X)=
 \bigl\{s\in\states\mid
  \exists\bar a\in A_C\ \forall\bar b\in A_{\players\setminus C},\
  \Supp(\delta(s,(\bar a,\bar b)))\subseteq X\bigr\}.
\]
This set can be computed in polynomial time.

We first consider individual randomisation since the complexity of the shared randomisation is a well-studied problem~\cite{dAHK98}.
\paragraph{Next.}
Let $T=\operatorname{Sat}(\psi)$. 
For sure, almost-sure and limit-sure, the winning set for $\bigcirc\psi$ is $\mathsf{CPre}_C(T)$.
If $s\in \mathsf{CPre}_C(T)$, clearly there is a sure winning strategy from $s$. 
Conversely, suppose $s\notin\mathsf{CPre}_C(T)$, and let $\eta>0$ be the smallest positive transition probability in $G$. 
For every team profile $\bar a$, some opponent profile has probability
at least $\eta$ of reaching $\states\setminus T$ in the next step. If the opponent chooses uniformly and independently, every opponent profile has probability $1/|A_{\players\setminus C}|$. Hence, against any choice of the team, the probability of reaching $\states\setminus T$ is at least $\eta/|A_{\players\setminus C}|>0$. This rules out limit-sure winning, and therefore also almost-sure winning.

\paragraph{Always.}
Let $U=\operatorname{Sat}(\psi)$. By \cref{thm:safety-limit-almost}, limit-sure safety implies almost-sure safety. 
An almost-sure safety strategy can be made deterministic by choosing an action that has positive probability by the strategy for each team player at each history.
If the resulting strategy admitted an unsafe play, its finite prefix ending at the first unsafe state would have positive probability under the original strategy  which leads to a contradiction. Thus all three of sure, limit-sure and almost-sure safety coincide.
Their winning set can be computed by repeatedly computing $Z_i$ as below until a fixpoint is reached. 
\[
 Z_0=U,\qquad Z_{i+1}=U\cap\mathsf{CPre}_C(Z_i).
\]
Indeed, $Z_i$ is the set from which the team can surely remain in $U$ for $i$ transitions. At the fixed point, the witnessing strategy keeps the play inside the set forever. 
There are at most $|\states|$ many times this can decrease, so the computation is polynomial.

\paragraph{Until.}
For $\varphi_1\,\mathsf U\,\varphi_2$, let $U=\operatorname{Sat}(\varphi_1)$, $T=\operatorname{Sat}(\varphi_2)$, and $F=\states\setminus(U\cup T)$.
We make every state in $T\cup F$ absorbing and use the reachability objective $\Reach{T}$. 
The sets $U$ and $T$ are computed in the original structure. and remain fixed. Strategies in the two games  are the same until a vertex in $T\cup F$ is visited. 
Reaching $T$ before $F$ satisfies both the original until objective and the modified reachability objective and reaching $F$ first violates both the original until objective and the modified reachability objective.  If a play remains in $U\setminus T$ forever, then it violates both objectives.
This also covers initial states in $T$ or $F$, and preserves winning probabilities and sure winning. Hence the reduction preserves sure, almost-sure and limit-sure objectives.

Since randomisation is useless for sure reachability, this reduces the problem with sure reachability to two-player case for which there are polynomial algorithms provided by previous works~\cite{dAHK07}.
Almost-sure reachability is in $\NP$~\cite[Theorem~6]{BHMST26}, and limit-sure reachability is in $\NP$ by \cref{thm:np-upper-bound}.

\paragraph{Shared randomisation.}
We can merge the coalition $C$ of players into one player that has the action set $A_C$. The opponent also can be always treated as one player. The resulting two-player concurrent game has polynomial-time
algorithms for qualitative reachability~\cite{dAHK07}.
\end{proof}

Before we state the lemma and prove the lower bound for the qualitative fragment, we state the problem from which we reduce, which is \emph{sequentially nested SAT}.
This problem is complete for the complexity class $\P^\NP$. 
    \begin{definition}[$\mathsf{SNSAT}$]\label{def:iratl-snsat}
        An instance $I$ consists of an integer $n\geq 1$, distinct Boolean variables $x_1\dots,x_n$, pairwise disjoint finite sets $Y_1,\dots,Y_n$ of Boolean variables, disjoint from  the $x_i$s and 3-CNF formulas $F_1,\dots,F_n$ such that 
        \[
            \operatorname{Var}(F_i)\subseteq \{x_1,\dots,x_{i-1}\}\cup Y_i
        \]
    We inductively define $v_1,\ldots,v_n \in \{0,1\}$ as:
    \[
            v_i = 1 \quad \text{if}
            \quad \exists\, \nu_i \ : Y_i\to\{0,1\} \text{ such that }F_i(v_1,\ldots,v_{i-1},\nu_i) = 1) .
    \]
    The decision problem is to check if $v_n = 1$.
    \end{definition}
Without loss of generality, we assume that an instance of the problem such that each 3-CNF formula $F_i$ has at least one clause and that each clause has three literals per clause. 
\begin{lemma}\label{lem:iratlQualitativeHardness}
    Model checking qualitative IRATL on explicitly represented finite stochastic game structures is $\P^\NP$-hard.
\end{lemma}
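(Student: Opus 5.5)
We reduce from $\mathsf{SNSAT}$ (\cref{def:iratl-snsat}). Given an instance $I$, we build in polynomial time a game structure $(G,L)$ with a team of two players, deterministic transitions, and a formula $\varphi_n$ that uses only the modalities $\langle\!\langle C\rangle\!\rangle^{\mathsf{ind}}_{\mathsf{almost}}\Diamond$ (written with $\mathsf{true}\,\mathsf U\,\cdot$) and sure modalities. The invariant is that a designated state $s_i$ satisfies $\varphi_i$ if and only if $v_i=1$, and $\varphi_i$ is built from $\varphi_1,\dots,\varphi_{i-1}$. The base tool is the $\NP$-hardness gadget for almost-sure reachability with individual randomisation from~\cite[Theorem~4]{BHMST26} and~\cite{dAF03}. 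We adapt it from $k$-clique to 3-SAT, which is the classical route via independent randomisation. For each $F_i$ we build a block $G_i$ as follows. In a "choice" state, each team player privately picks a literal, meaning a variable and a value. An opponent "checker" then picks a clause, and the players must answer consistently: both agents are asked, and a mismatch on the same variable leads to failure. The team then wins almost surely exactly when a consistent assignment satisfies every clause. The point of the gadget is that randomising agents without shared randomness must effectively commit to a deterministic assignment. Otherwise the opponent catches inconsistency with positive probability on a repeated play.

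To make $F_i$ depend on $x_1,\dots,x_{i-1}$, we let the literal states for $x_j$ with $j<i$ carry no team choice. Instead, when the checker asks about $x_j$ with value $c$, the play moves to a copy of state $s_j$ if $c=1$, and to a copy labelled for $\neg\varphi_j$ if $c=0$. Concretely, we label the states of the gadget with fresh propositions and set the target of block $i$ to be the states where the clause is satisfied: either a $Y_i$-literal answered consistently, or an $x_j$-literal whose truth equals $\operatorname{Sat}(\varphi_j)$ at $s_j$. For the latter we add the atom condition "$q_{j}^{+}\wedge\varphi_j$" or "$q_j^{-}\wedge\neg\varphi_j$", which nests the earlier formula inside the until's target. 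Thus
\[
\varphi_i=\langle\!\langle\{1,2\}\rangle\!\rangle^{\mathsf{ind}}_{\mathsf{almost}}\bigl(\mathsf{safe}_i\,\mathsf U\,\mathsf{goal}_i(\varphi_1,\dots,\varphi_{i-1})\bigr).
\]
Because subformulas are shared, the size is polynomial: $\varphi_j$ appears as a DAG, and if the length measure requires a tree, we note that each $\varphi_i$ references each earlier $\varphi_j$ only a constant number of times. We will instead reuse a single copy of $s_j$, so that $\operatorname{Sub}$ stays linear. We will need to double-check this point.

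The correctness proof proceeds by induction on $i$. If $v_i=1$, the team plays the satisfying $\nu_i$ deterministically and wins surely, hence almost surely. If $v_i=0$, we show that against any product strategy the opponent reaches a losing state with positive probability. The argument is that in the first round some pair of supported answers is either inconsistent or falsifies a clause, and the clause choice is the opponent's. This is where limit-sure and almost-sure differ. Since the hardness is claimed even for the sure and almost-sure fragment, we only need the almost-sure gap, so the "value at most $3/4$" bound is not needed.

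The main obstacle is making the gadget's behaviour robust when the targets depend on previously evaluated subformulas. The truth of an $x_j$-literal must be fixed by the structure, not chosen by agents. Moreover, the agents' inconsistency checks must not interact across blocks. I would first design each block with its own disjoint states and with the $x_j$ branches leading to absorbing labelled sinks. This keeps each until-evaluation local, and membership in $\P^\NP$ then follows directly from \cref{thm:iratl-model-checking}'s upper bound.
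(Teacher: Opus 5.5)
Your reduction has a genuine gap: the formula it produces has exponential length. Your $\varphi_i$ contains $\varphi_j$ for every earlier $x_j$ occurring in $F_i$, and the atom $q_j^-\wedge\neg\varphi_j$ adds a second copy. In general this gives $|\varphi_i|\geq\sum_{j<i}|\varphi_j|$. Even if $F_i$ mentions only $x_{i-1}$, in both polarities, you get $|\varphi_i|\geq 2|\varphi_{i-1}|$, so the output has length $2^{\Omega(n)}$. Referencing each earlier formula only a constant number of times does not help, because the recursion itself doubles. Reusing one state $s_j$ changes the structure, not the formula: linearity of $\operatorname{Sub}$ matters for the upper-bound algorithm, but the reduction must write the formula out.

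The missing idea is the paper's single chain $\Phi_0=z$, $\Phi_{k+1}=\neg\langle\!\langle C\rangle\!\rangle^{\mathsf{ind}}_{\mathsf{almost}}\bigl(q\,\mathsf U\,(p\wedge\langle\!\langle\varnothing\rangle\!\rangle^{\mathsf{ind}}_{\mathsf{sure}}\bigcirc\Phi_k)\bigr)$. Each level mentions its predecessor exactly once, the same formula is evaluated in every block, and polarity is pushed into the structure:
a literal $x_j$ leads to $\widehat{s_j^+}$ and $\neg x_j$ to $\widehat{s_j^-}$;
the path $s_j^+\to\widehat{s_j^-}\to s_j^-$ gives $s_j^+\models\Phi_{k+1}\Longleftrightarrow s_j^-\not\models\Phi_k$;
the cycle $\mathsf t\to\widehat{\mathsf f}\to\mathsf f\to\widehat{\mathsf t}\to\mathsf t$ keeps $\mathsf t\models\Phi_k$ and $\mathsf f\not\models\Phi_k$ despite the negation at every level;
and the hatted states with $\langle\!\langle\varnothing\rangle\!\rangle^{\mathsf{ind}}_{\mathsf{sure}}\bigcirc$ let the until test the nested formula at the base state they lead to. Your absorbing labelled sinks would not carry the truth value of $\varphi_j$. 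With $2n$ levels one gets $s_i^-\models\Phi_k\Longleftrightarrow v_i=0$ for $k\geq 2i-1$, and the output instance is $(s_n^+,\Phi_{2n})$.

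Independently, the block gadget as you describe it cannot be correct. You claim that for $v_i=1$ the team wins \emph{surely} by playing $\nu_i$ deterministically, while for $v_i=0$ it does not even win almost surely. Take $i=1$, where $\mathsf{goal}_1$ is propositional. Satisfiability of $F_1$ would then coincide with sure winning of an until objective. That is decidable in polynomial time, because randomisation is useless for sure winning, so 3-SAT would be in $\P$. The hardness must live in the gap between sure and almost-sure winning, and that requires concurrency. In the paper, at $s_i^-$ each agent simultaneously picks a pair (clause, position), playing $(c,r_c)$ with probability $1/m$ for each $c$. Meanwhile the opponent challenges a pair $(u,v)$ of clauses, and unmatched indices loop at $s_i^-$, so the team wins only almost surely. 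In your sequential description, either the agents see the challenge before answering, or the opponent sees their answers first. In the first case the agents can coordinate pair by pair, and the test no longer reflects global satisfiability; in the second it breaks the other way.

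For the same reason, the converse cannot be argued from the first round, where the challenge usually does not match. The paper passes to a memoryless almost-sure winning strategy, via memoryless sufficiency for almost-sure reachability. It then shows that the supports $R_a(c),R_b(c)$ at $s_i^-$ are nonempty, equal singletons, pairwise noncomplementary, and true on the earlier $x_j$. This yields a satisfying $\nu_i$.
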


\begin{proof}
We provide a reduction from the sequentially nested satisfiability problem of Laroussinie, Markey and Schnoebelen~\cite{LMS01} which is defined in \cref{def:iratl-snsat}. The precise 3-CNF version above is stated in the proof of \cite[Proposition~3.10]{LMO08}, where its $\P^{\NP}$-completeness is
shown. %

We reduce from $\mathsf{SNSAT}$, which consists of a sequence of satisfiability questions in which each formula may depend on the the satisfaction of the previous questions. 
We first give some intuition. 
We construct a game in which two cooperating players independently choose a clause and a literal within it. The opponent challenges a pair of clauses, checking that the players agree when selecting from the same clause and never select complementary literals. Given a satisfying assignment, the team can choose a true literal from every clause and pass every challenge. Conversely, an almost-sure winning strategy must cover every clause and make mutually consistent choices, from which a satisfying assignment can be extracted.

To capture dependencies between questions, a selected literal referring to an earlier answer leads to a state representing that answer or its negation. Nested IRATL formulas then evaluate these states using the preceding formula layer. Some additional states in the construction of the game-gadget ensures that the resulting formula at the positive state for the final question therefore decides the original SN-SAT instance.

\paragraph{Construction of the labelled game structure.}
Let $m\geq 1$ be the number of clauses of each $F_i$.
We assume that each $F_i$ has the same number of clauses without loss of generality.  
In $F_i$, let the literal $\ell_{1=i,c,r}$ denote the literal at clause $c$ and position $r\in \{1,2,3\}$. For instance, for a clause $(a \lor \neg b \lor c )$, the literal $a$ has position $1$ and $\neg b$ position 2. We also write $[d]$ to denote the set $\{1,\dots,d\}$ of positive integers up to $d$. 

The players are in the set $\players = \{a,b,o\}$.
For the input $\mathsf{SNSAT}$ instance that consists of the integer $n$, we define the game instance $G_I$ with $4n + 4$ many states as follows, with $2n+2$ states in $S_0$ and $2n+2$ states in $S_1$ defined as follows: 
\[
\begin{aligned}
   S_0&=\{\mathsf t,\mathsf f\}
         \cup\{s_i^+,s_i^-\mid i\in[n]\},\\
 S_1&=\{\widehat u\mid u\in S_0\},\qquad
 \states=S_0\mathbin{\dot\cup}S_1.  
\end{aligned}
\]

 The action sets available at each state is
\[
 A_a=A_b=[m]\times[3],\qquad A_o=[m]^2\times[2].
\]
Intuitively, each player's action selects a clause and location thereby choosing a literal, while the opponent that selects the action $(u,v,h)$ challenges a clause $u$ for $a$ and $v$ for clause $b$ and inspect $a$ if $h = 1$ and $b$ if $h=2$.

Finally, we define the state variables $Q=\{q,p,z\}$ and define the labelling completely as follows:
\[
 L_I(\mathsf t)=\{q,z\},\qquad
 L_I(u)=\{q\}\ (u\in S_0\setminus\{\mathsf t\}),\qquad
 L_I(\widehat u)=\{p\}\ (u\in S_0).
\]

We now define the transition function. From all states except the states that are of the form $s_i^-$, every action profile leads to another state deterministically, and independent of the actions and we first define that. Since this part is deterministic, we define a function
$D:\states\times A_a\times A_b\times A_o\to\states$ and then define
\[
 \delta(s,\alpha)(w)=
 \begin{cases}1&\text{if }w=D(s,\alpha),\\0&\text{otherwise.}\end{cases}
\]
For every action profile  $\alpha$, we define
\[
\begin{aligned}
 D(\widehat u,\alpha)&=u &&(u\in S_0),\\
 D(\mathsf t,\alpha)&=\widehat{\mathsf f},\qquad
 D(\mathsf f,\alpha)=\widehat{\mathsf t},\\
 D(s_i^+,\alpha)&=\widehat{s_i^-} &&(i\in[n]).
\end{aligned}
\]
\begin{figure}[htbp]
\centering
\begin{tikzpicture}[x=1cm,y=1cm,font=\small,>=Stealth,
  base/.style={draw,rounded corners=2pt,fill=blue!7,minimum width=1.45cm,
               minimum height=0.85cm,align=center},
  copy/.style={draw,fill=orange!12,minimum width=1.45cm,
               minimum height=0.85cm,align=center},
  every path/.style={thick}]
 \node[font=\small\bfseries] at (1.5,1.05) {(a) Constant states};
 \node[base] (T) at (0,0) {$\mathsf t$\\$\{q,z\}$};
 \node[copy] (HF) at (3,0) {$\widehat{\mathsf f}$\\$\{p\}$};
 \node[base] (F) at (3,-1.65) {$\mathsf f$\\$\{q\}$};
 \node[copy] (HT) at (0,-1.65) {$\widehat{\mathsf t}$\\$\{p\}$};
 \draw[->] (T) -- (HF);
 \draw[->] (HF) -- (F);
 \draw[->] (F) -- (HT);
 \draw[->] (HT) -- (T);
 \node[font=\small\bfseries] at (8,1.05) {(b) Variable states, for each $i\in[n]$};
 \node[copy] (HP) at (6.5,0) {$\widehat{s_i^+}$\\$\{p\}$};
 \node[base] (P) at (9.5,0) {$s_i^+$\\$\{q\}$};
 \node[copy] (HM) at (9.5,-1.65) {$\widehat{s_i^-}$\\$\{p\}$};
 \node[base] (M) at (6.5,-1.65) {$s_i^-$\\$\{q\}$};
 \draw[->] (HP) -- (P);
 \draw[->] (P) -- (HM);
 \draw[->] (HM) -- (M);
\end{tikzpicture}
\caption{States and their proposition labels. 
Arrow represents that every action profile and has transition probability $1$ to the state. }
\Description{Two panels of states, each drawn as a box with its set of propositions. Base states are blue and hatted copies are orange. Panel (a), constant states: a cycle from t, labelled q and z, to the copy of f, labelled p, to f, labelled q, to the copy of t, labelled p, and back to t. Panel (b), variable states for each i: a chain from the copy of s_i plus, labelled p, to s_i plus, labelled q, to the copy of s_i minus, labelled p, to s_i minus, labelled q. Every arrow is taken with probability 1 under every action profile.}
\label{fig:iratl-state-labels}
\end{figure}
It remains to define $D$ at $s_i^-$.
For a literal of $F_i$, we define a state for an literal an intermediary mapping $\tau_i$, using which we define $D$:
\[
 \tau_i(\ell)=
 \begin{cases}
  \mathsf t&\text{if $\ell=y$ or $\ell=\neg y$ for $y\in Y_i$},\\
  s_j^+&\text{if $\ell=x_j$ for $j<i$},\\
  s_j^-&\text{if $\ell=\neg x_j$ for $j<i$}.
 \end{cases}
\]
Given an action profile of all players, 
$\alpha=((c,r),(d,t),(u,v,h))$, we write
\[ \ell_a=\ell_{i,c,r},\quad\ell_b=\ell_{i,d,t}.
\]
We say that the literals are complementary if one is a variable and the other its negation. We now define the remaining transitions from state~$s_i^-$.
\begin{enumerate}
 \item If $(c,d)\neq(u,v)$, then $D(s_i^-,\alpha)=s_i^-$.
 \item If $(c,d)=(u,v)$ and either
       $(c=d\text{ and }r\neq t)$ or $\ell_a,\ell_b$ are
       complementary, then $D(s_i^-,\alpha)=\widehat{\mathsf f}$.
 \item Otherwise, $D(s_i^-,\alpha)=\widehat{\tau_i(\ell_a)}$ if $h=1$,
       and $D(s_i^-,\alpha)=\widehat{\tau_i(\ell_b)}$ if $h=2$.
\end{enumerate}
This comparison is very similar to the  lower bound hardness construction
in Brice et al.~\cite{BHMST26}. Similar to their work, where the players guess vertices that form $k$-clique and the opponent checks if their guesses are consistent, here the players select literals and the opponent check if these choices are consistent.
Figure~\ref{fig:iratl-clause-test} shows these three cases.
\begin{figure}[h]
\centering
\begin{tikzpicture}[x=1cm,y=1cm,font=\small,>=Stealth,
  base/.style={draw,rounded corners=2pt,fill=blue!7,minimum width=1.35cm,
               minimum height=0.8cm,align=center},
  copy/.style={draw,fill=orange!12,minimum width=1.35cm,
               minimum height=0.8cm,align=center},
  edge/.style={->,thick},
  tag/.style={font=\footnotesize,fill=white,inner sep=2pt}]
 \node[base] (M) at (0,0) {$s_i^-$\\$\{q\}$};
 \node[copy] (HF) at (6,2.7) {$\widehat{\mathsf f}$\\$\{p\}$};
 \node[copy] (HT) at (6,0.9) {$\widehat{\mathsf t}$\\$\{p\}$};
 \node[copy] (HP) at (6,-0.9) {$\widehat{s_j^+}$\\$\{p\}$};
 \node[copy] (HN) at (6,-2.7) {$\widehat{s_j^-}$\\$\{p\}$};
 \node[base] (F) at (9,2.7) {$\mathsf f$\\$\{q\}$};
 \node[base] (T) at (9,0.9) {$\mathsf t$\\$\{q,z\}$};
 \node[base] (P) at (9,-0.9) {$s_j^+$\\$\{q\}$};
 \node[base] (N) at (9,-2.7) {$s_j^-$\\$\{q\}$};
 \draw[edge] (M) to[out=145,in=215,looseness=5]
   node[tag,left,align=center] {(1) Index\\doesn't\\match} (M);
 \draw[edge] (M) -- node[tag,pos=0.63,sloped,above] {(2) Inconsistent choices} (HF);
 \draw[edge] (M) -- node[tag,pos=0.63,sloped,above] {(3) $\ell=y$ or $\neg y$, $y\in Y_i$} (HT);
 \draw[edge] (M) -- node[tag,pos=0.63,sloped,below] {(3) $\ell=x_j$, $j<i$} (HP);
 \draw[edge] (M) -- node[tag,pos=0.63,sloped,below] {(3) $\ell=\neg x_j$, $j<i$} (HN);
 \draw[edge] (HF) -- node[tag,above] {all profiles} (F);
 \draw[edge] (HT) -- node[tag,above] {all profiles} (T);
 \draw[edge] (HP) -- node[tag,above] {all profiles} (P);
 \draw[edge] (HN) -- node[tag,above] {all profiles} (N);
\end{tikzpicture}
\caption{All transition from state $s_i^-$.}
\Description{Transitions from state s_i minus, labelled q. Case 1: if the index does not match, a self-loop back to s_i minus. Case 2: if the choices are inconsistent, a transition to the copy of f. Case 3: a transition to the copy of t if the literal is y or not y for some y in Y_i, to the copy of s_j plus if the literal is x_j with j less than i, and to the copy of s_j minus if the literal is not x_j with j less than i. Each copy, labelled p, moves under all action profiles to the corresponding base state: f labelled q, t labelled q and z, s_j plus labelled q, and s_j minus labelled q.}
\label{fig:iratl-clause-test}
\end{figure}

\paragraph{Construction of the state and formula.}
Let the coalition be $C=\{a,b\}$ and define the IRATL state formulas inductively by
\[
\begin{aligned}
 \Phi_0&=z,\\
 \Phi_{k+1}&=
 \neg\langle\!\langle C\rangle\!\rangle_{\mathsf{almost}}^{\mathsf{ind}}
 \left(q\,\mathsf U\,
  \left(p\wedge
   \langle\!\langle\varnothing\rangle\!\rangle_{\mathsf{sure}}^{\mathsf{ind}}
   \bigcirc\Phi_k\right)\right).
\end{aligned}
\tag{$\ddagger$}\label{eq:iratl-snsat-formulas}
\]
The output of the reduction is the model-checking instance
$\bigl((G_I,L_I),s_n^+,\Phi_{2n}\bigr)$.

We note that the construction is polynomial in the size of $\mathsf{SNSAT}$ instance.
\paragraph{A consequence of the next and until semantics.}
For any IRATL state formula $\psi$, we use the following abbreviation for the state formula to avoid some cumbersome notation
\[
 \chi_\psi=p\wedge
   \langle\!\langle\varnothing\rangle\!\rangle_{\mathsf{sure}}^{\mathsf{ind}}
   \bigcirc\psi.
\]
By the labelling and the sure-next semantics in \cref{sec:IRATL},
\[
 u\not\models\chi_\psi\quad(u\in S_0),\qquad
 \widehat u\models\chi_\psi\quad\Longleftrightarrow\quad u\models\psi.
\]
Indeed, every profile at $\widehat u$ has the single successor $u$.
Every state in $S_0$ satisfies $q$, and none in $S_1$ does.
Consequently, a play starting in $S_0$ satisfies $q\,\mathsf U\,\chi_\psi$ exactly when its first visit to $S_1$ occurs
at a state $\widehat u$ with $u\models\psi$. 
A play that does not visit $S_1$ cannot satisfy the strong-until objective. If its first state in $S_1$ fails $\chi_\psi$, the objective also fails, since $q$ needs to be visited earlier for until to hold.

We now show correctness at $s_i^-$.
\begin{proposition}\label{lemma:correctnessatSiMinus}
Fix $i$, an IRATL state formula $\psi$, and
$\beta_1,\ldots,\beta_{i-1}\in\{0,1\}$ such that
\[
\begin{aligned}
 &\mathsf{t}\models\psi,\qquad \mathsf{f}\not\models\psi,\\
 &s_j^+\models\psi\ \Longleftrightarrow\ \beta_j=1,
 \qquad
 s_j^-\models\psi\ \Longleftrightarrow\ \beta_j=0
 \quad(j<i).
\end{aligned}
\]
We claim that 
\[
 s_i^-\models
 \langle\!\langle C\rangle\!\rangle_{\mathsf{almost}}^{\mathsf{ind}}
       (q\,\mathsf U\,\chi_\psi)
 \quad\Longleftrightarrow\quad
 \exists\nu_i:Y_i\to\{0,1\}\;
       F_i(\beta_1,\ldots,\beta_{i-1},\nu_i)=1.
\]
\end{proposition}
\begin{proof}
$(\mathbf{\impliedby}).$
Fix a satisfying assignment and
choose one true literal position $r_c$ in each clause $c$.
At state $s_i^-$, each of $a,b$ independently plays the memoryless strategy 
$(c,r_c)$ with probability $1/m$ for each $c\in[m]$ and assigns
probability zero to every other action and plays arbitrary distributions
at other states.
Selected literals agree within a clause, are never complementary, and all selected $x_j$-literals have their prescribed truth values.
Whenever the clause indices match the opponent's indices, the successor
therefore satisfies $\chi_\psi$, by the definition of $\tau_i$.
For every opponent action, conditional on still being at $s_i^-$,
matching has probability $1/m^2$.  Using induction on $N$, we can bound the probability of staying at $s_i^-$ for $N$ transitions by $(1-1/m^2)^N$.
The probability of staying there forever is therefore zero. Every first visit to $S_1$ satisfies $\chi_\psi$. The until semantics proves the left-hand side of the proposition.

$(\mathbf{\implies}).$ 
Suppose that $(s_i^- \models \langle\langle C\rangle\rangle_{\mathrm{almost}}^{\mathrm{ind}}(q,U,\chi_\psi))$. Let
\[
R_\psi={\widehat u\in S_1\mid u\models\psi},
\]
where satisfaction of $\psi$ is evaluated in the original structure. From $s_i^-$, every transition either remains at $s_i^-$ or visits $S_1$. Moreover, the until objective is satisfied exactly when the first state reached in $S_1$ belongs to $R_\psi$.

Therefore, making every state in $S_1$ absorbing and taking $R_\psi$ as the reachability target preserves the winning probability from the state $s_i^-$ for every strategy profile. By the memoryless sufficiency theorem for almost-sure reachability~\cite[Theorem~5]{BHMST26}, there exists a memoryless almost-sure winning team strategy in the modified game. Its action at $s_i^-$ also ensures the original until objective is almost surely.

For $d\in\{a,b\}$ and $c\in[m]$, let
\[
 R_d(c)=\{r\in[3]\mid\sigma_d(s_i^-)(c,r)>0\}
\]
be the literal positions player $d$ selects with positive probability in clause $c$.

Every $R_d(c)$ is nonempty. Otherwise, if $R_a(c)=\varnothing$, the opponent can play the action $(c,1,1)$ at every visit to $s_i^-$, forcing the game to stay forever at the state.
If $R_b(c)=\varnothing$, the opponent can instead play $(1,c,1)$, thus also preventing the team from winning almost surely. 
Now the opponent can choose the same clause for both players by choosing the action $(c,c,1)$.
Since the team cannot coordinate their randomness, $(r,t)\in R_a(c)\times R_b(c)$ also occurs with positive probability. 
If $r\ne t$, the next state is $\widehat{\mathsf f}$, which does not satisfy $\chi_\psi$ and hence does not satisfy the until objective. 
Thus there is a unique position $r_c$ such that $R_a(c)=R_b(c)=\{r_c\}$.

The selected literals $\ell_{i,c,r_c}$ are pairwise noncomplementary since the action $(c,d,1)$ played by the opponent to two clauses with complementary
selected literals reaches the state $\widehat{\mathsf f}$ with positive probability. 
Moreover, every selected literal over an earlier variable $x_j$ is true under $\beta$. 
If such a literal in clause $c$ were false and if the opponent plays $(c,c,1)$, the play would reach $\widehat{\tau_i(\ell_{i,c,r_c})}$ with positive probability, and this
state fails $\chi_\psi$ by the assumptions on $\psi$ and $\beta$.
Both cases contradict almost-sure winning.

Let us assign each selected local literal to be true, and assign all remaining variables in $Y_i$ arbitrarily. 
This defines a consistent valuation $\nu_i$, since no local variable is selected with both signs.
Every clause has its selected literal true under $(\beta,\nu_i)$, so $F_i(\beta_1,\ldots,\beta_{i-1},\nu_i)=1$.
This proves the forward direction.
\end{proof}

\paragraph{Correctness of the reduction.}

\begin{proposition}
\label{prop:iratl-snsat-induction}
For the constructed structure $(G_I,L_I)$ and formulas $\Phi_k$,
\[
\begin{aligned}
 s_i^-\models\Phi_k&\Longleftrightarrow v_i=0
                  &&\text{for every }k\geq2i-1,\\
 s_i^+\models\Phi_k&\Longleftrightarrow v_i=1
                  &&\text{for every }k\geq2i,
\end{aligned}
\qquad i\in[n].
\]
\end{proposition}
\begin{proof}
For every
$k\geq0$,
\begin{equation}\label{eq:iratl-snsat-fixed-states}
\begin{aligned}
 \mathsf t\models\Phi_{k+1}&\Longleftrightarrow\mathsf f\not\models\Phi_k,\\
 \mathsf f\models\Phi_{k+1}&\Longleftrightarrow\mathsf t\not\models\Phi_k,\\
 s_i^+\models\Phi_{k+1}&\Longleftrightarrow s_i^-\not\models\Phi_k.
\end{aligned}
\end{equation}
Since $\Phi_0=z$, we have $\mathsf t\models\Phi_0$ and
$\mathsf f\not\models\Phi_0$. From the first two equivalences, we have $\mathsf t\models\Phi_k$ and $\mathsf f\not\models\Phi_k$ for all $k\geq0$.
We now induct on the index $i$.

\paragraph{Base case: $i=1$.}
Fix any $k\geq1$. The formula $F_1$ has no earlier variables, so the fact that 
$\mathsf t\models\Phi_k$ and $\mathsf f\not\models\Phi_k$ is enough for \cref{lemma:correctnessatSiMinus} with $\psi=\Phi_{k-1}$.
The team can satisfy the corresponding almost-sure until objective from $s_1^-$ exactly when $F_1$ is satisfiable, that is, when $v_1=1$.
Since $\Phi_k$ is the inverse of this objective,
\[
 s_1^-\models\Phi_k\quad\Longleftrightarrow\quad v_1=0
 \qquad(k\geq1).
\]
For any $k\geq2$, the third equivalence in
\cref{eq:iratl-snsat-fixed-states} implies that
\[
 s_1^+\models\Phi_k
 \quad\Longleftrightarrow\quad s_1^-\not\models\Phi_{k-1}
 \quad\Longleftrightarrow\quad v_1=1.
\]

\paragraph{Induction step: $i>1$.}
Assume both equivalences in the proposition hold for every $j<i$.
Let $k\geq2i-1$. For each $j<i$,
\[
 k-1\geq2i-2\geq2j.
\]
Thus the induction hypothesis ensures that $\Phi_{k-1}$ satisfies both:
\[
 s_j^+\models\Phi_{k-1}\Longleftrightarrow v_j=1,
 \qquad
 s_j^-\models\Phi_{k-1}\Longleftrightarrow v_j=0.
\]
Together with the fact that 
$\mathsf t\models\Phi_k$ and $\mathsf f\not\models\Phi_k$,  this shows that for \cref{lemma:correctnessatSiMinus} with
$\psi=\Phi_{k-1}$ and $\beta_j=v_j$.
That equivalence identifies the almost-sure until objective at $s_i^-$
with satisfiability of $F_i$ after fixing the earlier answers, which is
exactly $v_i=1$. By the definition of $\Phi_k$, we obtain  
\[
 s_i^-\models\Phi_k\quad\Longleftrightarrow\quad v_i=0
 \qquad(k\geq2i-1).
\]
Finally, for $k\geq2i$. Since $k-1\geq2i-1$, the equivalence just
proved and \cref{lemma:correctnessatSiMinus} gives
\[
 s_i^+\models\Phi_k
 \quad\Longleftrightarrow\quad s_i^-\not\models\Phi_{k-1}
 \quad\Longleftrightarrow\quad v_i=1.
\]
This completes the induction.
\end{proof}
\Cref{prop:iratl-snsat-induction} shows that 
\[
 I\in\mathrm{SNSAT}
 \quad\Longleftrightarrow\quad v_n=1
 \quad\Longleftrightarrow\quad (G_I,L_I),s_n^+\models\Phi_{2n}.
\]
This completes the proof of the lemma.
\end{proof}

\iratlTheorem*
\begin{proof}[Proof of~\cref{thm:iratl-model-checking}]
We evaluate the input formula $\varphi$ from the inside out, computing for each state subformula the set of states where it evaluates to true. 
\begin{itemize}
  \item For an atom, this set is given by the labelling. 
  \item For negation, we take the complement of the set already computed for its argument. 
  \item For disjunction, we take the union of the two argument sets. 
  \item When we evaluate a subformula of this form $\langle\!\langle C\rangle\!\rangle_w^r\theta$, the satisfying sets of its arguments are already computed. 
  \item We use \cref{lem:iratl-one-modality} to determine, from each state, whether the coalition can ensure the objective under the specified winning condition and type of randomisation.
\end{itemize}
We can use an induction  on the formula to show that this procedure computes $\operatorname{Sat}(\varphi)$ correctly.

We now show that the problem is in $\P^\NP$, by showing how to make polynomially many queries to the $\NP$ oracle. 
There are at most $|\varphi|$ state subformulas, and each subformula asking whether a coalition can ensure an objective requires at most one oracle query per state. 
Each query specifies a winning problem and an instance of that problem, and asks whether the coalition wins from the chosen state.
Each query asks which of the winning problems in \cref{lem:iratl-one-modality} is to be decided, and these problems are all in $\NP$.
Constructing the queries and computing the remaining states  that satisfy the  state-formula takes polynomial time. 
The procedure is therefore deterministic polynomial time with an $\NP$ oracle, proving membership in $\P^{\NP}$.

\paragraph{Hardness.}
The hardness follows from \cref{lem:iratlQualitativeHardness} even with at most three players and for a fragment that contains deterministic transitions and only sure and almost-sure modalities of the logic.

\end{proof}

\end{document}